\tikzset{
  jumpdot/.style={mark=*,solid},
  excl/.append style={jumpdot,fill=white},
  incl/.append style={jumpdot,fill=black},
  rexcl/.append style={jumpdot,color=red,fill=white},
  rincl/.append style={jumpdot,fill=black,color=red},
}
\definecolor{light-gray}{gray}{0.9}
\newtheorem{definition}{Definition}%\\	\
\providecommand{\customgenericname}{}
\newcommand{\newcustomtheorem}[2]{%
  \newenvironment{#1}[1]
  {%
   \renewcommand\customgenericname{#2}%
   \renewcommand\theinnercustomgeneric{##1}%
   \innercustomgeneric
  }
  {\endinnercustomgeneric}
}
\DeclareMathOperator*{\argmax}{arg\,max}
	\newtheorem{lemma}{Lemma}%
	\newtheorem{theorem}{Theorem}%
	\newtheorem{proposition}{Proposition}%
	\newtheorem{corollary}{Corollary}%
	\newcommand\eat[1]{}
	\newlength{\wordlength}
	\newcommand{\eqclass}[2][]{\ifthenelse{\equal{#1}{}}{[#2]}{[#2]_{\sim_{#1}}}}
	\newcommand{\ceil}[1]{\lceil #1 \rceil }
	\newcommand{\floor}[1]{\lfloor #1 \rfloor }
\newcommand{\nbh}[1][]{
	\ifthenelse{\equal{#1}{}}{\nu}{\nu(#1)}
}
\newcommand{\cstr}[1][]{
	\ifthenelse{\equal{#1}{}}{\mathscr S}{\cstr(#1)}
}
\newcommand{\choice}[1][]{
	\ifthenelse{\equal{#1}{}}{\mathit{C}}{\choice(#1)}

		\newcommand{\ml}[1][]{\ensuremath{\ifthenelse{\equal{#1}{}}{\mathit{ML}}{\mathit{ML}(#1)}}\xspace}
		\newcommand{\sml}[1][]{\ensuremath{\ifthenelse{\equal{#1}{}}{\mathit{SML}}{\mathit{SML}(#1)}}\xspace}
		\newcommand{\sd}[1][]{\ensuremath{\ifthenelse{\equal{#1}{}}{\mathit{SD}}{\mathit{SD}(#1)}}\xspace}
		\newcommand{\rsd}[1][]{\ensuremath{\ifthenelse{\equal{#1}{}}{\mathit{RSD}}{\mathit{RSD}(#1)}}\xspace}
		\newcommand{\rd}[1][]{\ensuremath{\ifthenelse{\equal{#1}{}}{\mathit{RD}}{\mathit{RD}(#1)}}\xspace}
		\newcommand{\st}[1][]{\ensuremath{\ifthenelse{\equal{#1}{}}{\mathit{ST}}{\mathit{ST}(#1)}}\xspace}
		\newcommand{\bd}[1][]{\ensuremath{\ifthenelse{\equal{#1}{}}{\mathit{BD}}{\mathit{BD}(#1)}}\xspace}
		\newcommand{\pc}[1][]{\ensuremath{\ifthenelse{\equal{#1}{}}{\mathit{PC}}{\mathit{PC}(#1)}}\xspace}
		\newcommand{\dl}[1][]{\ensuremath{\ifthenelse{\equal{#1}{}}{\mathit{DL}}{\mathit{DL}(#1)}}\xspace}
		\newcommand{\ul}[1][]{\ensuremath{\ifthenelse{\equal{#1}{}}{\mathit{UL}}{\mathit{UL}(#1)}}\xspace}
		
			\newcommand{\indiff}{\ensuremath{\sim}}}
\begin{document}

	\title{Strategyproof and Proportionally Fair \\ Facility Location}

	\author{Haris Aziz\thanks{UNSW Sydney, Australia. Email: \href{mailto: haris.aziz@unsw.edu.au}{haris.aziz@unsw.edu.au}}  \ \ Alexander Lam\thanks{City University of Hong Kong, Hong Kong. Email: \href{mailto: alexlam@cityu.edu.hk}{alexlam@cityu.edu.hk}} \ \ Barton E. Lee\thanks{ETH Z\"urich, Switzerland. Email: \href{mailto: barton.e.lee@gmail.com}{barton.e.lee@gmail.com}}  \ \ Toby Walsh\thanks{UNSW Sydney and Data61 CSIRO, Australia. Email: \href{mailto: t.walsh@unsw.edu.au}{t.walsh@unsw.edu.au}}\\ \vspace{2mm} \\\small{\href{http://www.cse.unsw.edu.au/~haziz/fairFLP.pdf}{Latest version here}} \vspace{2mm} \\  \small{ \textbf{First posted}: 22nd October 2021.} }

	\maketitle

		\begin{abstract} 
		We focus on a simple, one-dimensional collective decision problem (often referred to as the facility location problem) and explore issues of strategyproofness and proportionality-based fairness. We introduce and  analyze a hierarchy of proportionality-based fairness axioms   of varying strength: Individual Fair Share (IFS), Unanimous Fair Share (UFS), Proportionality \citep[as in][] {FPPV21}, and Proportional Fairness (PF). For each axiom, we characterize the family of  mechanisms that satisfy the axiom and strategyproofness. We show that imposing strategyproofness renders many of the axioms to be equivalent: the family of mechanisms that satisfy proportionality, unanimity, and strategyproofness is equivalent to  the family of mechanisms that satisfy UFS and strategyproofness, which, in turn, is equivalent to the family of mechanisms  that satisfy     PF and strategyproofness. Furthermore, there is a unique such mechanism: the Uniform Phantom mechanism, which is studied in~\cite{FPPV21}. We also characterize the outcomes of the Uniform Phantom mechanism  as  the  unique  (pure) equilibrium outcome for any mechanism that satisfies continuity, strict monotonicity, and UFS. Finally, we analyze the approximation guarantees, in terms of optimal social welfare and minimum total cost, obtained by mechanisms that are strategyproof and satisfy each proportionality-based fairness axiom. We show that the Uniform Phantom mechanism provides the best  approximation of the optimal social welfare (and also minimum total cost) among all mechanisms that satisfy UFS.
			\end{abstract}
	
%	\bigskip
%	
%\noindent		\textbf{Keywords:} 	facility location; mechanism design; social choice; fairness; strategyproofness. 
% 
%
%	\vspace{2mm}		
%	
%	\noindent	\textbf{JEL}: D63; D71; D82.
%	
	
		\clearpage
% Paper body
\section{Introduction.} 
		
 %%%%%%%%%%%%%%%%%%%%%%%%%%%%%% %%%%%%%%%%%%%%%%%
 
Facility location problems are ubiquitous in society and capture various collective scenarios. Examples include electing political representatives~\citep*{BoJo83,FFG16,Moul80}, selecting policies~\citep*{BaAn21,DrLa19,KMN17a}, deciding how to allocate a public budget~\citep*{FPPV21}, and deciding the location or services provided by public facilities~\citep{ScVo02}. Two key concerns in such problems are that the selection process may be vulnerable to strategic manipulations and/or fail to guarantee ``fair'' outcomes. In this paper, we simultaneously examine the issues of strategyproofness and fairness for the facility location problem.  		
 		
		In the facility location problem, each agent is viewed as a point on an interval. Depending on the motivating setting, the point could reflect the agent's physical location, political position, or social preference.  Each agent has symmetrically single-peaked preferences and prefers the collective outcome to be near their own position. The goal of the collective decision problem is to take agents' preferences (positions) into account to find a reasonable collective outcome (the location of the facility). %\alexnew{The restriction to a closed interval domain (and the restriction on agent locations to the same closed interval) is a standard approach for modelling public good provision, and can be found in various papers such as \citep{BrPe22,MaMo11a,ScVo02,Chin97,BaJa94}.}

		The facility location problem (or the one-dimensional collective decision problem) is one of the most fundamental problems in economics, computer science, and operations research. It takes a central place in social choice theory as single-peaked preferences are one of the key preference restrictions that circumvent the infamous Gibbard-Satterthwaite theorem~\citep{Gibb73,Satt75}, which says that in general social choice, no unanimous and non-dictatorial voting mechanism is strategyproof. Furthermore, despite the unidimensional setting  appearing restrictive, it is well suited to many real-world problems---most prominently, deciding the level of provision of a public good~\citep[see, e.g.,][]{BaJa94,Cant04}. When agents have single-peaked preferences, the mechanism that returns the median voter's position is unanimous, non-dictatorial, and strategyproof~\citep[see,][]{Moul80}. 		This seminal result has been discussed in hundreds of papers. 
		Despite the importance of the median mechanism for the facility location problem, it does not satisfy several fairness concepts that are inspired from the theory of fair division and proportional representation. 
		 We focus on the following research questions. 
		 \begin{quote} 
			\emph{ For the facility location problem, what are natural fairness concepts? How well can these fairness concepts be achieved by strategyproof mechanisms? For strategyproof mechanisms that satisfy one of these fairness concepts, which mechanism performs optimally in terms of maximizing social welfare  or minimizing total cost? Which mechanisms achieve fairness in equilibrium?}
			 \end{quote}

		 Our contributions are four-fold. First, we consolidate a number of fairness axioms from the literature, explicitly describe their relations and establish the compatibility---and, in some cases, incompatibility---of strategyproofness with these fairness concepts.  We propose a new concept called \emph{proportional fairness (PF)} that is based on the idea that the distance of a facility from a group of agents should depend both on the size of the group as well as how closely  the agents are clustered. We also analyze existing  axioms from the literature on fair division, participatory budgeting, and proportional representation such as    \emph{proportionality}, \emph{unanimous fair share (UFS)}, \emph{individual fair share (IFS)}, and unanimity.  Our PF axiom  is the strongest of these; Figure~\ref{Figure: relation} describes the relationship between all the fairness axioms that we study.

%		 	\begin{figure}[h!]
%					
%					\begin{center}
%						\scalebox{0.9}{
%			\begin{tikzpicture}
%				\tikzstyle{pfeil}=[->,>=angle 60, shorten >=1pt,draw]
%				\tikzstyle{onlytext}=[]
%	
%	
%
% 
%	  \node[onlytext] (IFS) at (-5,2) {\begin{tabular}{c} IFS \end{tabular}}; %{Multiple Referenda};
%	 \node[onlytext] (Prop) at (-8,2) {\begin{tabular}{c} Proportionality  \end{tabular}};
%	  \node[onlytext] (Unan) at (-1.75,2) {\begin{tabular}{c} Unanimity \end{tabular}};
%	   \node[onlytext] (UFS) at (-5,-1) {\begin{tabular}{c} UFS \end{tabular}};
%	   	   \node[onlytext] (PF) at (-5,-3) {\begin{tabular}{c} PF \end{tabular}};
%		    % \node[onlytext] (SPF) at (-5,-5) {\begin{tabular}{c} SPF \end{tabular}};
%
%
%				
%%			\draw[<-]  (IFS) -- (Egal) ;
%%			\draw[<-] (Unan) -- (Egal) ;
%% \draw[<-] (PF) -- (SPF) ;
%			\draw[<-] (Unan) -- (UFS) ;
%			\draw[<-] (UFS) -- (PF) ;
%			\draw[<-] (Prop) -- (UFS);
%				\draw[<-] (IFS) -- (UFS) ;
%%				\draw[<-] (PF) -- (Nash) ;
%			\end{tikzpicture}
%			}
%			\end{center}
%			
%			\caption{Relations between axioms. 
%			 An arrow from (A) to (B) denotes that  (A) implies (B). All relations are strict.}
%			 \label{Figure: relation}
%			\end{figure}

\begin{figure}[!htb]
    \centering
 
        \includegraphics[ height=0.2\textheight]{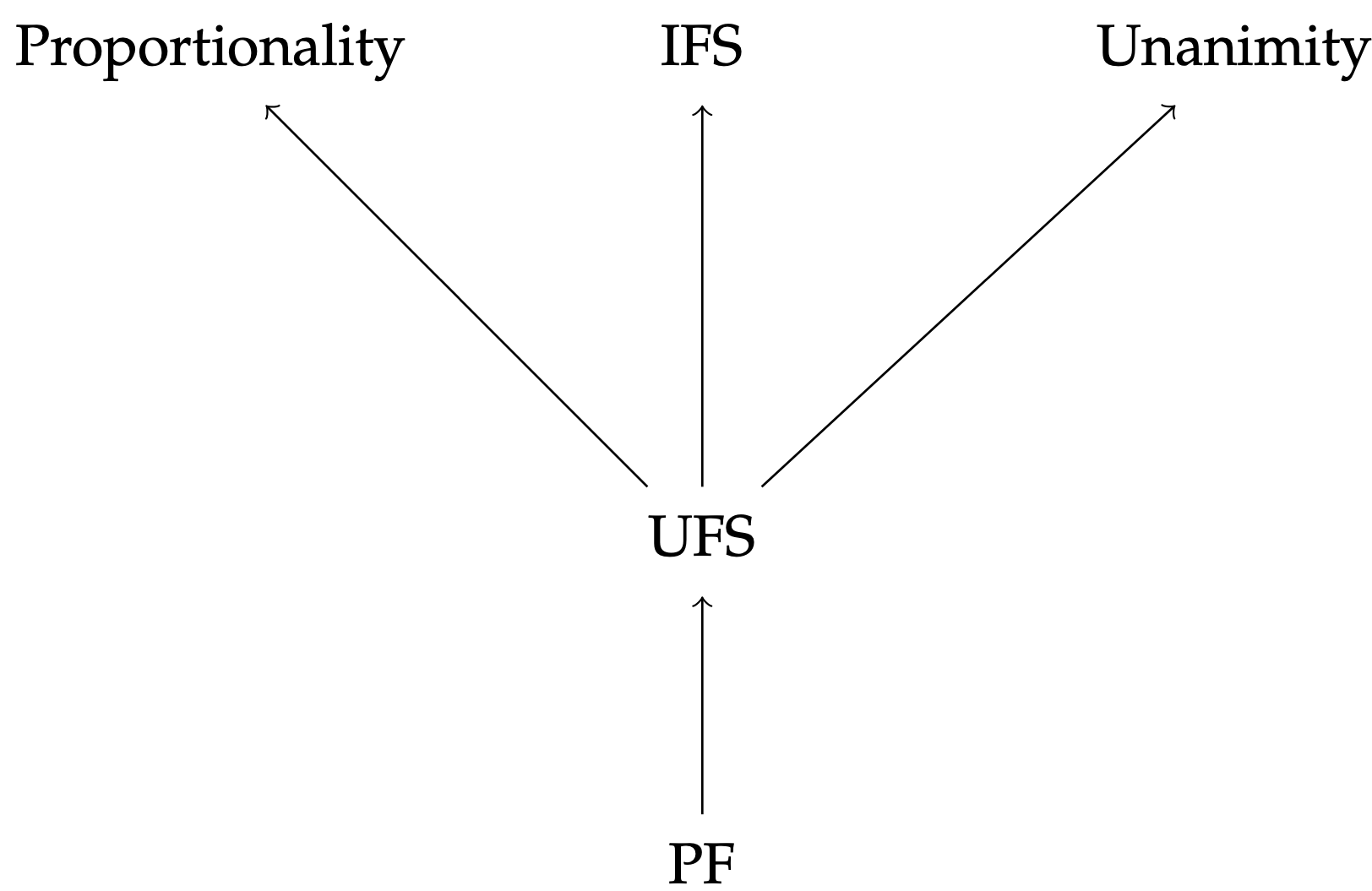}
        \caption{{\footnotesize{Relations between axioms. 
			 An arrow from (A)   to (B) denotes that  (A) implies (B). All relations are strict.}}}
		 \label{Figure: relation}
 \end{figure}

Second, we present two characterization results. We characterize the family of strategyproof mechanisms that satisfy unanimity, anonymity, and IFS.
We then identify a specific mechanism, called the Uniform Phantom mechanism, that uniquely satisfies strategyproofness, unanimity, and proportionality. We also prove that the Uniform Phantom mechanism uniquely satisfies strategyproofness and UFS. Since we show that the Uniform  Phantom mechanism also satisfies PF (and because PF implies UFS), we obtain as a corollary  that the Uniform Phantom mechanism is the only strategyproof mechanism satisfying PF. Therefore, within the class of strategyproof mechanisms, PF and UFS collapse to the same property---in contrast,  IFS is markedly weaker, even within the class of strategyproof mechanisms.

		Third, we consider the fairness of outcomes under strategic behavior when a mechanism is not strategyproof. We prove that if a mechanism satisfies continuity, strict monotonicity, and UFS, then a pure Nash equilibrium exists, and every (pure) equilibrium under the mechanism satisfies UFS with respect to agents' true locations. One mechanism in this class is the Average mechanism, which locates the facility at the average of all agents' reported locations. Furthermore,  for   mechanisms satisfying continuity, strict monotonicity, and UFS, the equilibrium outcome leads to a facility location that equals the facility location of the Uniform Phantom mechanism when agents report their true location. Thus, our equilibrium analysis of continuous, strictly monotonic, and UFS mechanisms provides an alternative characterization of the Uniform Phantom mechanism.

Lastly, we take an approximate mechanism design perspective~\citep{NiRo01a,PrTe13}.  We explore how well the maximum social welfare and minimum total cost can be approximated when fairness axioms and strategyproofness are imposed. Our goal is to identify mechanisms that deliver the best approximation guarantees while also satisfying strategyproofness and the corresponding fairness axioms (such as IFS and UFS). We first establish a stark negative result for the total cost approximation. Any strategyproof, anonymous, and unanimous mechanism that satisfies IFS has approximation ratio of $n-1$, which is unbounded as $n$ grows. Because IFS is our weakest fairness axiom when strategyproofness is imposed, the total cost approximation analysis fails to distinguish any difference between the mechanisms that we focus on. We then turn to social welfare approximation where we establish more positive and nuanced results. Intuitively, imposing UFS leads to a strictly worse approximation ratio than if only IFS   is imposed and, in either case, the best approximation guarantee is bounded. We identify   strategyproof mechanisms that provide the best approximation of the maximum social welfare among all (not necessarily strategyproof) mechanisms that satisfy either IFS or UFS. In the latter case  of satisfying UFS, the Uniform Phantom mechanism achieves this best approximation. In this sense, the fairness axioms impose a greater cost  on the social welfare approximation guarantees than the strategyproofness requirement.   
 
% \begin{figure}[H]
%\begin{center}
%\includegraphics[height=0.35\textheight]{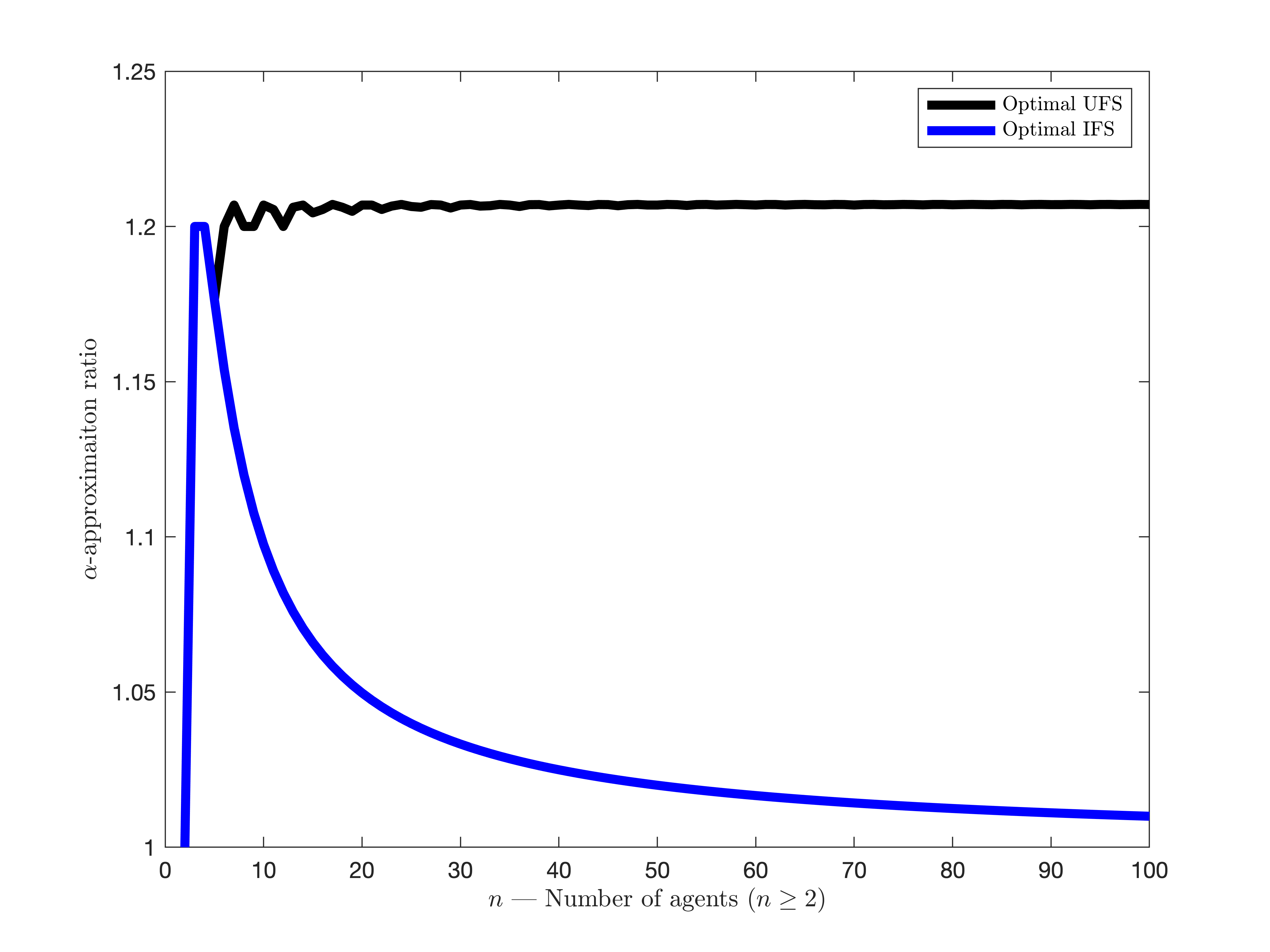}
%\caption{ Approximation ratio as a function of $n$. The {welfare-optimal} UFS  (IFS) mechanism  is also  strategyproof, anonymous, and unanimous. The {welfare-optimal} UFS mechanism also satisfies proportionality and PF. The {welfare-optimal} IFS mechanism does not satisfy proportionality.}
%\label{figure: approx}
%\par\end{center}
%\end{figure}

 %%%%%%%%%%%%%%%%%%%%%%%%%%%%%% %%%%%%%%%%%%%%%%%

\subsection{Related literature.}\label{section: related lit} 

 %%%%%%%%%%%%%%%%%%%%%%%%%%%%%% %%%%%%%%%%%%%%%%%
 
\paragraph{Facility location problems.} 
The facility location problem has been studied extensively in operations research, economics, and computer science. As is common in the economics literature, our paper takes a mechanism design approach. We assume an incomplete information setting, where agents have privately-known utility functions (and, hence, peak locations) and can strategically (mis)report their peak location. The problem is to design a mechanism that is strategyproof and achieves a ``desirable'' facility location with respect to the agents' true locations.  \citeauthor{Moul80}'s (\citeyear{Moul80}) seminal work characterizes the family of strategyproof and Pareto efficient mechanisms when agents have single-peaked preferences. %\alexnew{More specifically, \citeauthor{Moul80} provides three characterizations. Via three (distinct) families of ``Phantom mechanisms,'' he characterizes: (i) all strategyproof and anonymous mechanisms, (ii) all strategyproof, anonymous, and Pareto efficient mechanisms, and (iii) all strategyproof mechanisms.} 
In our paper, agents have single-peaked preferences that are also \emph{symmetric}, i.e., agents prefer the facility to be located closer to their location regardless of whether it is to left or right of their location; therefore, our setting is closer to  \citet{BoJo83}. \citeauthor{BoJo83} characterize a strict subfamily of strategyproof mechanisms, which includes the family of strategyproof and unanimous mechanisms (\citeauthor{BoJo83} also extend their results to higher dimensions).  \citet{MaMo11a} formalize the connection between the mechanism design problem in settings where agents have single-peaked preferences and settings where agents have symmetrically single-peaked preferences. 

Since   \citet{Moul80} and \citet{BoJo83}, numerous scholars have explored  open-questions related to these characterizations~\citep*[see, e.g.,][]{BaJa94,BMS98,Chin97,JLPV21,MaMo11a,PPS+97,Weym11}. Others have explored extensions and variations of the facility location problem. For example, \citet{NePu06,NePu07a} relax the assumption that agents have single-peaked preferences;~\citet{Miya98,Miya01} and~\cite{Ehle02,Ehle03} extend the facility location problem to consider locating multiple facilities; ~\citet*{ACLL+20,ACLP20} introduce capacity constraints into the problem;~\citet{JaNi04} introduce interdependent utilities; \citet{Cant04} introduces an outside option; and~\citet{ScVo02} extend the facility location problem to a network setting. For a recent survey of the computational social choice literature on facility location problems, see~\citet*{CF-RL+21}. Our paper contributes to this literature by formalizing a hierarchy of ``proportionality-based fairness'' axioms for the facility location problem and characterizing families of strategyproof and fair mechanisms within each layer of the heirarchy. Additionally, in Section~\ref{section: equilibrium properties}, we explore the equilibrium properties of non-strategyproof mechanisms. We obtain results that  complement those of~\citet{ReTr05a,ReTr11a} and~\citet{YaKa13a} (further details   provided in Section~\ref{section: equilibrium properties}).

There is also an extensive literature in operations research and computer science that studies the facility location problem within a complete information setting. These literatures largely focus on issues of computational complexity and approximation and, therefore, are not directly relevant to the present paper~\citep[for an overview, see][]{BrCh89,FaHe09a}.

\paragraph{Fairness in collective decision problems.} Issues of fairness in collective decision problems have been studied in a variety of contexts~\citep[see, e.g.,][]{Dumm97a,Mill61,Nash50,Nash53,Rawls71a,Sen1980,Shap53,Yaar81}. Most closely related to the present paper are the social choice and computational social choice literatures~\citep*[for an overview, see][]{ASS10,ABES19,Endr17,FSST17a,Klam10,LaSa10}. We formalize a hierarchy of fairness axioms for the facility location problem that are conceptually related to proportional representation. As will be discussed in Section~\ref{section: defense}, our axioms can also be motivated by---and connect with---notions of stability in cooperative game theory, such as the ``core''~\citep[see, e.g., ][]{Scar67}.  Two of our fairness axioms (IFS and UFS) are translations of the ``individual fair share'' and ``unanimous fair share'' axioms, which appear in fair division and participatory budgeting problems~\citep*{ABM19,Moul03}, into the facility location problem. In addition, we utilize a natural axiom of proportional representation, called ``proportionality'', which is explored in the context of participatory budgeting by~\citet*{FPPV21}.  Beyond translating existing notions of fairness into the facility location problem, we also introduce the new axiom of ``Proportional Fairness'' that is stronger than all of the aforementioned axioms. 

Our approach contrasts with a number of facility location papers that attempt to obtain outcomes that achieve (or approximate) the egalitarian outcome, i.e., maximizing the utility of the worst off agent~\citep[see, e.g.,][]{PrTe13}. %\alexnew{The egalitarian approach also appears in more general collective choice problems~\citep[see, e.g.,][]{BoMo04,DaGe77,Hamm76}.} 
 {\citet{Mull91a} notes that the egalitarian objective is sensitive to extreme locations and recommends distributional equality as an underlying principle for considering equality measures. When placing multiple facilities, several new concepts have been proposed for capturing proportionality-based fairness  concerns~\citep[see, e.g.,][]{BiDe00a,JKL20a}. However, these concepts are equivalent to weak Pareto optimality or unanimity when there is only one facility.} For the single-facility problem, \citet*{ZLC22} recently examined the issue of welfare guarantees for groups of agents. Our approach and results differ in that we consider the classic facility location problem whereas \citeauthor{ZLC22} overlay it with additional information that places agents in predetermined groups.

In the context of the facility location problem, our paper characterizes strategyproof and ``fair'' mechanisms. Some of our results directly relate to those of~\citet{FPPV21}. In the context of participatory budgeting, \citeauthor{FPPV21} explore the problem of designing strategyproof mechanisms that satisfy proportionality.  One of their key results (Proposition 1) applies to the facility location problem and shows that there is a unique anonymous, continuous, strategyproof and proportional mechanism, which is called the Uniform Phantom  mechanism. Like our paper, \citeauthor{FPPV21}'s (\citeyear{FPPV21}) setting assumes that agents have single-peaked and symmetric preference. \citet{JLPV21}  provide a similar characterization of the Uniform Phantom mechanism in the setting where agents have single-peaked (and possibly asymmetric) preferences. %\alexnew{{\citet{JLPV21} mention that the Uniform Phantom mechanism was first proposed by \citet{Jenn10} as the `linear median' and it was independently discovered by \citet{CPS16}.} {However, the Uniform Phantom mechanism appears indirectly in~\citet{ReTr05a}~\citep[see also][]{ReTr11a}.}} 
Our paper differs in focus and provides a broader treatment of issues of fairness and strategyproofness in facility location problems; for example, we  characterize a larger family of strategyproof mechanisms that satisfy the weaker fairness axiom of IFS. In addition, one of our results strengthens \citeauthor{FPPV21}'s Proposition 1  by showing that the anonymity axiom is redundant in their characterization. We also provide an alternative characterization of the Uniform Phantom mechanism as the equilibrium outcome of any continuous, strictly monotonic, and UFS mechanism. 

Finally, we note that in more general mechanism design problems, ``fairness'' is often explored in a relatively minimal manner. For example, \citet{Spru91} interprets a mechanism to be fair if it satisfies anonymity and envy-freeness, and \citet{Moul17} interprets a mechanism to be fair if it satisfies anonymity, envy-freeness, and a status-quo participation constraint. These minimal notions of fairness have persisted because of various impossibility results in the literature. For example, Theorem 3 of \citet{BoJo83} shows that, for the multi-dimensional facility location problem with not necessarily separable preferences, there is no  strategyproof, unanimity-respecting, and anonymous mechanism~\citep[see also][]{Laff80}. Like~\citet{Spru91} and~\citet{Moul17},  the unidimensional facility location problem  that we study escapes these impossibility results. Our paper contributes a complementary set of fairness axioms that go beyond the basic requirement of anonymity and connect to the notion of proportional representation. We do not consider envy-freeness since, in the context of the facility location problem, it is trivially satisfied by any facility location~\citep[see, e.g., Section 8.1 of][]{Moul17}. The status-quo participation constraint explored by  \citet{Moul17} requires that an agent weakly prefers the mechanism's outcome to some status-quo outcome. This is distinct but has a similar flavor to our IFS axiom, which is one of our weakest fairness axioms. The IFS axiom requires that the facility location is not located too far from any agent. When reframed in terms of utility, IFS enforces a minimum utility guarantee for all agents, which could be viewed as an outside option.

\paragraph{Approximate mechanism design.} The final section of our paper explores the performance of strategyproof and fair mechanisms with respect to maximizing social (or utilitarian) welfare  and minimizing total cost. Adopting the approximation ratio approach of~\citet{NiRo01a} and~\citet{PrTe13}, we measure the performance of these mechanisms by their worst-case performance over the domain of possible preferences profiles relative to the {welfare-optimal}  mechanism and the total cost-optimal mechanism. This is a common approach in the economics and computation literature~\citep[see, e.g.,][]{ACLL+20,ACLP20,FFG16,NiRo01a}. For our main fairness axioms of Proportionality, IFS, UFS and PF, we identify the best performing strategyproof and fair mechanism.  {In particular, we find that the Uniform Phantom mechanism has the best welfare approximation ratio among all mechanisms satisfying UFS (including non-strategyproof mechanisms).  In the participatory budgeting setting, \citet*{CCP22} show a related result: when there are only 2 projects, the Uniform Phantom mechanism achieves the best cost approximation ratio among all strategyproof mechanisms.}

 %%%%%%%%%%%%%%%%%%%%%%%%%%%%%% %%%%%%%%%%%%%%%%%
  
\section{Model.}\label{sec:model}

%%%%%%%%%%%%%%%%%%%%%%%%%%%%%% %%%%%%%%%%%%%%%%%

 Let $N=\{1, \ldots, n\}$ be a set of agents with $n\ge 2$ and let $X:=[0,L]$ be the domain of locations.  The restriction to $X=[0,L]$ is without loss of generality for any closed interval of real numbers.   The restriction of locations to an interval  is common in the literature~\citep[see, e.g., seminal works][]{BaJa94,Chin97,MaMo11a}; it is also well-suited to many real-world problems, such as deciding the level of provision of a public good, which is naturally constrained to be between zero and  the total available budget.  A  \emph{mechanism} is a mapping $f\ : \ X^n\rightarrow X$ from a (reported) location profile $\hat{\boldsymbol{x}}=(\hat{x}_1, \ldots, \hat{x}_n)\in X^n$ to a facility location $y\in X$.     Let $U$ be the set of all symmetrically single-peaked utility functions on $X$. %\footnote{\bedit{For notational convenience, we focus on utility functions rather than preference ordering. This is without loss of generality because every symmetric and single-peaked preference ordering can be represented by a symmetric and single-peaked utility function (and the converse is immediate). At the end of this section, and building off results within~\cite{MaMo11a}, we also argue that our restriction to symmetric and single-peaked preference orderings, rather than single-peaked preference ordering, is   without loss for the class of mechanisms that we focus on.}}  
 That is, given a  function  $u\in U$, there exists a unique ``peak''  location $x\in X$  that maximizes $u$ and $u$ is symmetrically decreasing around $x$~\citep[see, e.g.,][]{BoJo83,PvdsS92,KHS98}. 
% Given a facility location $y$, agent $i$'s utility is given by $u_i(y)$. 
Each agent $i$ has utility function $u_i\in U$. We interpret agent $i$'s peak  $x_i$ as agent $i$'s location. Each agent's  utility function $u_i$ (and, hence, location $x_i$) is privately known to the agent and is not assumed as an input into the mechanism. We refer to agent $i$'s cost as the distance between their location and the facility's location,  i.e., $d(y, x_i)=|y-x_i|$.  Notice that  $u_i$  is decreasing in agent $i$'s cost: $d(y, x_i)$.

A widely accepted---albeit minimal---fairness principle is that a mechanism should not depend on the agents' labels. This is referred to as anonymity. 

\medskip{}

\begin{definition}[Anonymous]\label{def: anon}
A mechanism $f$ is \emph{anonymous} if,  for every  location profile  $\hat{\boldsymbol{x}}$  and every bijection $\sigma \ : \ N \rightarrow N$,
$$f(\hat{\boldsymbol{x}}_{\sigma})=f(\hat{\boldsymbol{x}}),$$
 where $\hat{\boldsymbol{x}_{\sigma}}:=(\hat{x}_{\sigma(1)},  \hat{x}_{\sigma(2)}, \ldots,  \hat{x}_{\sigma(n)}).$
\end{definition}

\medskip{}
 
Given a location profile $\hat{\boldsymbol{x}}$, a facility location $f(\hat{\boldsymbol{x}})=y$ is said to be \emph{Pareto optimal} if there is no other facility location $y'$ such that for all $i\in N$, $u_i(y')\ge u_i(y)$, with strict inequality holding for at least one agent.  A mechanism $f$ is said to be \emph{Pareto efficient} if, for every location profile $\hat{\boldsymbol{x}}$, the facility location $f(\hat{\boldsymbol{x}})$ is Pareto optimal. In our setting, Pareto optimality is equivalent to requiring that $y\in [\min_{i\in N}\hat{x}_i, \max_{i\in N} \hat{x}_i]$.

 We are interested in mechanisms that are ``strategyproof'', i.e., the mechanism never incentivizes an agent to misreport their location. Before providing a formal definition, we introduce some notation. Given a profile of locations (or reported locations) $\boldsymbol{x}'$,  the profile $(\boldsymbol{x}_{-i}', x_i'')$ denotes the profile obtained by swapping $x_i'$ with $x_i''$ and leaving all other agent locations (or reports) unchanged.

\medskip{}

\begin{definition}[Strategyproof]\label{def: SP}
A mechanism $f$ is  \emph{strategyproof} if, for every agent $i\in N$ with peak location $x_i$, we have that for every $x_i'$ and ${\boldsymbol{x}}_{-i}'$,
%
%$$  u(f(\hat{\boldsymbol{x}}_{-i}, x_i), x_i)\ge u(f(\hat{\boldsymbol{x}}_{-i}, x_i'), x_i) \quad \text{or, equivalently,}\quad d(f(\hat{\boldsymbol{x}}_{-i}, x_i), x_i)\le d(f(\hat{\boldsymbol{x}}_{-i}, x_i'), x_i). $$
$$ u_i(f( {\boldsymbol{x}}_{-i}', x_i))\ge u_i(f({\boldsymbol{x}}_{-i}', x_i')). $$
\end{definition}

 \medskip{}
 
%By~\citet{BBM10a}, in our setting, a mechanism is strategyproof if and only if it is \emph{group-strategyproof}: no subset of agents $N'\subseteq N$  can misreport their locations and obtain a facility location that is strictly preferred by all of the agents in $N'$ compared to what is obtained by truthfully reporting their locations~\citep[see also][Remark 1]{MaMo11a}.

 Our focus is on characterizing mechanisms that are strategyproof, Pareto efficient, and anonymous, while also satisfying additional notions of proportionality-based fairness (to be introduced in Section~\ref{subsection: fairness axioms}). To assist with interpretation, our model assumes that agents' utilities are symmetric and single-peaked. However, our characterization results in Section~\ref{section: fair and SP} do not require that agents' utilities be symmetric about their peak. This   follows from  Corollary 2 of~\cite{MaMo11a}, which says that, when agents have single-peaked preferences, the  set of strategyproof, Pareto efficient, and anonymous mechanism is unchanged whether or not agents have symmetric preferences. 
 
Omitted proofs  appear in the Appendix.

%%%%%%%%%%%%%%%%%%%%%%%%%%%%%% %%%%%%%%%%%%%%%%%

 \section{Proportionality-based fairness.}
\label{subsection: fairness axioms}

%%%%%%%%%%%%%%%%%%%%%%%%%%%%%% %%%%%%%%%%%%%%%%%

We now introduce a hierarchy of proportionality-based fairness  axioms. The first three axioms have previously been proposed in the literature; the fourth axiom, Proportional Fairness, is a new concept that we propose.    We formulate our fairness axioms in terms of the cost  (or distance) function $d(y,x_i)$.  In Section~\ref{section: defense}, we provide further motivation for our axioms with a running example; we also provide a discussion and justification for the distance-based formulation of our axioms.

The first axiom, \textbf{Individual Fair Share (IFS)}, requires that the facility location imposes a cost on each agent of no more than $L(1-\frac{1}{n})$. In other words, each agent is entitled to avoid $1/n$-th of the  maximum possible cost. {In the context of cake-cutting, IFS coincides with the axiom of \citet{Stei48a} commonly known as proportionality. It also appears as the ``Fair Welfare Share'' axiom in the context of participatory budgeting, as defined by \citet{BMS05}.}

%
%
%  every agent with
%
%
%\sout{at least $\frac{1}{n}$ of the maximum obtainable utility, i.e., $1$. In the context of cake-cutting, IFS coincides with the axiom of} \citet{Stei48a} \sout{commonly known as proportionality.} \sout{It also appears as the ``Fair Welfare Share'' axiom in the context of participatory budgeting, as defined by} \citet{BMS05}.
%\alexnew{a guarantee that they have at most $L(1-\frac{1}{n})$ cost (i.e., are located at most $L(1-\frac{1}{n})$ distance from the facility).} 

\medskip{}

\begin{definition}[Individual Fair Share (IFS)]\label{def: IFS}
Given a profile of locations $\boldsymbol{x}$,  a facility location $y$ satisfies \emph{Individual Fair Share (IFS)} if each agent has cost of at most $L(1-\frac{1}{n})$, i.e., for all $i \in N$,
%$$u(y, x_i)\ge 1/n \quad \text{or, equivalently,}\quad d(y, x_i)\le 1-1/n.$$
$$d(y, x_i)\le L(1-1/n).$$
\end{definition}

\medskip{}

The second axiom, \textbf{Unanimous Fair Share (UFS)}, is a strengthening of IFS. UFS considers all subsets of agents that share the same location; let $S\subseteq N$ be such a subset of agents. UFS requires that the facility location imposes a cost on each agent in $S$ of no more than $L(1-\frac{|S|}{n})$. In other words, a subset of agents $S$ is entitled to avoid  $|S|/n$-th of the maximum possible cost.  In the context of participatory budgeting, UFS appears in~\citet{ABM19}.
 
\medskip{}
 \begin{definition}[Unanimous  Fair Share (UFS)]
Given a profile of locations $\boldsymbol{x}$ such that a subset of $S\subseteq N$ agents share the same location,  a facility location $y$ satisfies \emph{Unanimous Fair Share (UFS)} if for all $i \in S$,
%$$ u(y, x_i)\ge \frac{|S|}{n} \quad \text{or, equivalently,}\quad d(y, x_i)\le 1-\frac{|S|}{n}.$$
$$d(y, x_i)\le L(1-\frac{|S|}{n}).$$
\end{definition}
\medskip{}

The third axiom \textbf{Proportionality} requires that,  if all agents are located at ``extreme'' locations (i.e., $0$ or $L$),  the facility is  located at the average of the agents' locations. \citet{FPPV21} focus on this axiom in a participatory budgeting setting. 
\medskip{}
\begin{definition}[Proportionality]\label{def: F-prop}
Given a profile of locations $\boldsymbol{x}$ such that $x_i\in \{0,L\}$ for all $i\in N$, a facility location $y$ satisfies \emph{Proportionality} if $y=L\frac{|i\in N \ : \ x_i=L|}{n}.$
\end{definition}
\medskip{}

Finally, we propose a new fairness concept called \textbf{Proportional Fairness (PF)}. PF considers all subsets of agents. Given a subset of agents $S\subset N$, PF requires that the facility location imposes a cost on each agent in $S$ that depends  on both the size of the group, $|S|$, and how closely the agents in $S$ are clustered. The idea behind the concept is similar in spirit to proportional representation axioms in voting which require that if a subset of agents is large enough and the agents in the subset have ``similar"  preferences, then the agents in the subset deserve an appropriate level of representation~\citep[see, e.g.,][]{ABC+16a,AzLe20,AzLe22,Dumm84a,SFEL+17}.
\medskip{}
\begin{definition}[Proportional Fairness (PF)]\label{def:PF}
	Given a profile of locations $\boldsymbol{x}$, a facility location $y$ satisfies \emph{Proportional Fairness (PF)} if, for any subset of agents $S\subseteq N$ within a range of distance $r:=\max_{i\in S}\{x_i\}-\min_{i\in S}\{x_i\}$, the agents in $S$  have at most $L(1-\frac{|S|}{n})-r$ cost, i.e. for all $i\in S$, 
	%$$u(y, x_i)\ge \frac{|S|}{n}-r \quad \text{or, equivalently,} \quad d(y, x_i)\le 1-\frac{|S|}{n}+r .$$
	$$d(y, x_i)\le L(1-\frac{|S|}{n})+r .$$
	\end{definition}
	\medskip{}
	
	In the definition of PF,  given a group $S$, $r$ is non-negative and equals zero if and only if all agents in $S$ share the same location. Hence, PF implies UFS. For any $r$ that is larger, the corresponding fairness concept is weaker. For any $r$ that is smaller, there may not exist any outcome that satisfies the corresponding definition.

 A natural---albeit weak---notion of fairness is called \textbf{Unanimity}. It requires that, if all agents are unanimous in their most preferred location, then the facility is  located at this same location. Notice that Pareto optimality implies unanimity.

\medskip{}
\begin{definition}[Unanimity]\label{def: unanimity}
Given a profile of locations $\boldsymbol{x}$ such that $x_i=c$ for some $c\in X$ and for all $i\in N$, a facility location $y$ satisfies \emph{unanimity}  if $y=c$.
\end{definition}
\medskip{}

Proposition~\ref{prop: UFS implies unanimity} establishes the logical connection between the  fairness axioms. Figure~\ref{Figure: relation} provides an illustration of proposition. PF is the strongest fairness notion:  it implies all of the other axioms (UFS, IFS, Proportionality, and Unanimity). The next strongest axiom is UFS: it implies IFS, proportionality, and unanimity. There is no relationship between proportionality, IFS, and unanimity; however,  as will be shown, they are compatible with each other.   
\medskip{}
\begin{proposition}[A hierarchy of axioms]\label{prop: UFS implies unanimity}
\
\begin{enumerate}
\item UFS implies proportionality, IFS, and unanimity
\item PF implies UFS
\end{enumerate}
All of the above relations are strict; there is no logical relation between proportionality, IFS, and unanimity.  Figure~\ref{Figure: relation} provides an illustration. 
\end{proposition}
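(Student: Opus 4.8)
The plan is to treat each axiom as a constraint on a fixed facility location $y$ for a fixed profile $\boldsymbol{x}$, and to observe that every positive implication is obtained from the stronger axiom by instantiating its quantifier (the group $S$, or the radius $r$) at a special value; the strictness and incomparability claims are then handled by explicit small profiles. None of the positive implications requires more than a substitution, so the work is organizational rather than technical.

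First I would prove UFS $\Rightarrow$ IFS by taking, for each agent $i$, the singleton group $S=\{i\}$: since a single agent trivially shares its own location, UFS yields $d(y,x_i)\le 1-\tfrac{1}{n}$, which is exactly IFS. For UFS $\Rightarrow$ unanimity, on a profile with $x_i=c$ for all $i$ I would apply UFS to $S=N$, giving $d(y,c)\le 1-\tfrac{n}{n}=0$ and hence $y=c$. The only two-sided step is UFS $\Rightarrow$ proportionality: on a profile with $x_i\in\{0,1\}$, write $k=|\{i:x_i=1\}|$; applying UFS to the group of $n-k$ agents at $0$ gives $d(y,0)=y\le 1-\tfrac{n-k}{n}=\tfrac{k}{n}$, and applying it to the $k$ agents at $1$ gives $d(y,1)=1-y\le 1-\tfrac{k}{n}$, i.e. $y\ge\tfrac{k}{n}$; together these force $y=\tfrac{k}{n}$ (the degenerate cases $k=0$ or $k=n$ follow from the single nonempty group). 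For PF $\Rightarrow$ UFS I would note that a group $S$ all sitting at one location has range $r=0$, so PF with $r=0$ gives $d(y,x_i)\le \tfrac{n-|S|}{n}=1-\tfrac{|S|}{n}$ for every $i\in S$, which is UFS.

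For the separations I would exhibit, for each non-implication, a single profile and a location $y$ witnessing it, keeping in mind that proportionality and unanimity are vacuous off their designated profiles, so there any $y$ satisfies them. To show PF is strictly stronger than UFS, take $n=3$ with agents at $0,0.1,1$ and $y=\tfrac23$: the singleton UFS bounds $d(y,\cdot)\le\tfrac23$ all hold, yet the group $\{0,0.1\}$ has range $r=0.1$ and PF demands $d(y,0)\le\tfrac13+0.1$, which $y=\tfrac23$ violates. For IFS not implying UFS, put two agents at $0$ (with $n=2$) and $y=\tfrac12$: IFS holds but UFS on the coincident pair forces $y=0$. Unanimity and proportionality each fail to imply UFS on profiles where they hold vacuously, e.g. agents at $0$ and $1$ with $y=1$ (unanimity vacuous, UFS broken at the agent at $0$), or agents at $0.5,0.5$ with $y=0.9$ (proportionality vacuous, UFS broken).

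For the ``no logical relation'' claims among proportionality, IFS, and unanimity I would supply one witness per ordered pair: IFS $\not\Rightarrow$ proportionality via $n=4$ agents at $0,0,1,1$ with $y=\tfrac34$ (IFS allows $y\in[\tfrac14,\tfrac34]$ but proportionality demands $y=\tfrac12$); proportionality $\not\Rightarrow$ IFS via $n=2$ agents at $0,0.5$ with $y=0.9$, and proportionality $\not\Rightarrow$ unanimity via $n=2$ agents at $0.5,0.5$ with $y=0.9$ (both non-$\{0,1\}$ profiles on which proportionality is vacuous); unanimity $\not\Rightarrow$ IFS via $n=2$ agents at $0,0.5$ with $y=0.9$, and unanimity $\not\Rightarrow$ proportionality via $n=2$ agents at $0,1$ with $y=0$ (both non-unanimous profiles); and IFS $\not\Rightarrow$ unanimity via $n=2$ agents at $0.5$ with $y=0$. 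Since each argument is a one-line check, I do not anticipate a genuine obstacle; the only thing to watch is the bookkeeping of vacuous satisfaction and the boundary cases ($k\in\{0,n\}$, empty groups) in the proportionality derivation, which is the one place a careless argument could slip.
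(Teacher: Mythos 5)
Your proof is correct and, on the positive implications, follows essentially the same route as the paper: UFS yields IFS and unanimity by instantiating the group $S$ (singletons, respectively $S=N$), proportionality follows from the two-sided inequalities given by the groups at $0$ and at $1$, and PF yields UFS by setting $r=0$. Where you go beyond the paper is on the strictness and incomparability claims: the paper dismisses these as straightforward and omits the proofs, whereas you supply an explicit witness for every separation, and all of them check out --- e.g.\ the profile $(0,0.1,1)$ with $y=\tfrac{2}{3}$ satisfies every singleton UFS bound but violates the PF bound $\tfrac{1}{3}+0.1$ for the group $\{0,0.1\}$, and the vacuous-satisfaction witnesses for proportionality and unanimity off their designated profiles are exactly the right device. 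You also treat the degenerate cases $k\in\{0,n\}$ in the proportionality step explicitly, which the paper glosses over; your write-up is therefore strictly more complete than the paper's own proof.
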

\medskip{}
%\begin{proof}
%Proof in Appendix~\ref{section: prop: UFS implies unanimity}.
%\end{proof}

\subsection{Discussion of our fairness axioms.}\label{section: defense}

\paragraph{Motivation.} In addition to normative appeals to fairness,  all of the proportionality-based fairness axioms   can be motivated by concerns for the sustainability and practicality of collective decision making. As a running example, suppose that the facility location corresponds to the level of provision of a public good. The total budget is $L$, which each agent contributed equally to (i.e., $L/n$), and any unspent budget is saved for a future year. Agents have (possibly different) preferences over the tradeoff between current spending on the public good and future savings. Each agent's peak location corresponds to their ideal provision of the public good (the complement of this is their ideal provision of savings). An intuitive requirement is that each agent---having contributed $1/n$-th of the total budget---should  be able to avoid the total budget (respectively, none of the budget) being spent if their ideal provision of the public good is to spend nothing (resp., spend all of the budget). Indeed, one could imagine that an outcome that does not abide by this requirement would be unsustainable and impractical in reality:  the agent could  withdraw their contribution from the budget and independently not fund (resp., fund) a $1/n$-th    share of the public good. This requirement is reminiscent of stability solution concepts in cooperative game theory, such as  the ``core''~\citep[see, e.g., ][]{Scar67}.  The IFS axiom extends this requirement to agents that---not only have an ``extreme'' ideal provision of the public good (i.e., spending all or nothing)---but also those that have ideal provisions   close to these extremes.
%\footnote{\bedit{Notice that the IFS axiom is trivially satisfied if $x_i\in [L/n, L(1-1/n)]$.}} 
However, building on these same ideas, it might expected that if a single agent can control $1/n$-th of the total budget, then a group of like-minded agents, say of size $|S|$ and who all share a common ideal provision of the public good, can control  $|S|/n$-th of the total budget. The UFS axiom strengthens the IFS axiom by incorporating this ``group'' consideration into the decision-making process. The Proportionality axiom is similar; however, it only applies to instances where agents can be partitioned into two groups that have extreme ideal provisions of the public good (i.e., spending all or nothing). The unanimity axiom is a special case of the UFS axiom.  Finally, the PF axiom relaxes the notion of a ``group'' of agents that is implicit in the  UFS (and proportionality) axioms. Intuitively, a group of $|S|$ agents might be able to control $|S|/n$-th of the total budget even if they are not perfectly unified in their ideal provision of the public good. It may simply be enough that the group members have ideal provisions that are ``close enough''---in which case, they can still control $|S|/n$-th of the total budget to achieve mutually beneficial outcomes. The PF axiom incorporates this more flexible notion of a ``group'' and formalizes what such a group can achieve by controlling $|S|/n$-th of the  budget. Intuitively, the more closely aligned a group is in their ideal provision (i.e., a smaller value of $r$ in Definition~\ref{def:PF}), the more precisely they can use their control of the budget to achieve an outcome close to their ideal provision.

\paragraph{Distance-based formulation of our axioms.} %In defining our fairness axioms, in Section~\ref{subsection: fairness axioms}, 
We  formulated our axioms in terms of Euclidean distance. Because agents have symmetric and single-peaked utility functions, an agent's utility is strictly decreasing in their distance from the facility. Therefore, our axioms have direct implications for agents' utilities but, importantly, do not  correspond to  a precise utility guarantee. Our approach is more general than simply assuming a specific functional form for all agents' utility functions (as is  sometimes done in the facility location literature~\citep[see, e.g.][]{AnDe18,ACLP20,DeFRV23}) and then constructing axioms that depend on the assumed functional form. 

Our approach is also motivated by practical concerns.  In our setting, obtaining precise utility guarantees   requires the mechanism to elicit information about each agent's entire utility function (i.e.,  not only reporting their peak location). Yet it is well known in the literature that strategyproofness is incompatible with eliciting information beyond an agent's peak location~\citep[see, e.g.,][]{BaJa94,Weym08}. Therefore, in the pursuit of strategyproof and proportionally fair mechanisms, we are forced to act behind a \emph{veil of ignorance}. It seems reasonable that a ``fair” outcome should, at minimum: 
\begin{enumerate}
\item impose conditions on the ``closeness'' between agents' peaks and the facility location because this has direct implications on agents' utilities;
\item the measure of closeness should be symmetric; %(since, for every asymmetric utility function, there is another asymmetric utility function that is a mirror image and skewed in the opposite direction);
\item the measure of closeness should be anonymous. %(i.e., the same measure of closeness should be used for all agents).
\end{enumerate}
These points imply that a single benchmark distance metric should be applied for each agent. We adopt the standard Euclidean distance for our axioms (IFS, UFS, PF), i.e., $d(y,x_i)$ equals $|y-x_i|$; this has desirable and natural features. For example, suppose  $n=2$ with one agent located at $0$ and the other at $L$. The absolute value $|y-x_i|$  is the only metric that requires the facility to be located at exactly $\frac{L}{2}$ via the IFS condition $d(y,x_i)\le  L(1-1/n)$ (the same is true for the UFS and PF conditions). Lower powers of $|y-x_i|$ could be considered (i.e., $|y-x_i|^p$ for $0<p<1$) but this leads to non-existence. Higher powers could be considered (i.e., $|y-x_i|^p$ for $p>1$) but this leads to the possibility of ``fair” outcomes that asymmetrically favor one agent over the other. To see this, suppose $p=2$ and $L=1$. The IFS condition when $n=2$, one agent is located at 0, and the other at $1$, becomes $y^2\le  \frac{1}{2}$ and  $(1-y)^2\le  \frac{1}{2}$. This IFS condition is equivalent to requiring $y \in [\frac{1}{2}(2-\sqrt{2}), \frac{1}{\sqrt{2}}]$, which admits asymmetric  solutions such as $y=0.7$.  

%https://www.wolframalpha.com/input?i=%28y%29%5E2%3C%3D+1%2F2++and+%281-y%29%5E2%3C%3D1%2F2

\paragraph{Restrictions on   agents' peak locations.} Another potential concern is that our distance-based axioms implicitly assume that each agent's peak location is contained in the interval $X=[0,L]$.
%\footnote{If this were not the case, it is immediate that a facility location satisfying IFS (and, hence, UFS and PF) need not exist. } 
The fact that the facility must be located in a (fixed and known) closed interval of the real line and each agent's peak location (and reported location) are constrained to be in this interval are common assumptions in the literature~\citep[see, e.g.,][]{BaJa94,Chin97,MaMo11a}. The assumptions are also appropriate for important settings of interest, such as the provision of a public good. Our model adopts these common assumptions, and our proportionality-based fairness axioms build on these same assumptions. We note, however, that our axioms   and results can be modified to a setting where the mechanism must locate the facility in the interval $X=[0,L]$ but agents' peak locations may lie on $\mathbb{R}$ (in particular, beyond the interval $X$) and may also report locations beyond the interval. The set of mechanisms that we focus on are essentially unaffected by this modification.  To be slightly more precise,  the mechanisms that we focus on can be extended to this modified setting via the following procedure: if an  agent $i$ reports $\hat{x}_i<0$ (resp., $\hat{x}_i>L$), then the mechanism input for agent $i$ becomes 0 (resp., $L$); if an  agent $i$ reports $\hat{x}_i\in [0,L]$, then the mechanism input for agent $i$ is simply $\hat{x}_i$. It is straightforward to see that this modified setting does not generate any additional strategyproof, anonymous, and Pareto efficient mechanisms that also guarantee a facility location in $[0,L]$.     Our results can then be recovered with appropriately modified versions of our axioms that replace the distance function, $d(y, x_i)$, with 
$$\tilde{d}(y,x_i)=\begin{cases}
d(y, 0) &\text{if $x_i<0$,}\\
d(y, x_i) & \text{if $x_i\in [0,L]$,}\\
d(y, L)&\text{if $x_i>L$.} 
\end{cases} $$
%%%%%%%%%%%%%%%%%%%%%%%%%%%%%% %%%%%%%%%%%%%%%%%

\section{Strategyproof and Proportionally Fair Mechanisms.}\label{section: fair and SP}

%%%%%%%%%%%%%%%%%%%%%%%%%%%%%% %%%%%%%%%%%%%%%%%

We begin by {reviewing} some prominent mechanisms from the literature. The \textbf{median mechanism} $f_{\text{med}}$  places the facility at the median location (i.e., the $\floor{n/2}$-th location when locations are placed in increasing order). The median mechanism is sometimes referred to as the utilitarian mechanism since it places the facility at a location that   minimizes the sum of agent costs.

The \textbf{midpoint mechanism} $f_{\text{mid}}$ places the facility at  the midpoint of the leftmost and rightmost agents, i.e., 
\begin{align}\label{eq: mid mechanism L}
f_{\text{mid}}(\boldsymbol{x})=\frac{1}{2}\left(\min_{i\in N}x_i+\max_{i\in N}x_i\right).
\end{align}
The midpoint mechanism is sometimes referred to as the egalitarian mechanism since it minimizes the maximum agent cost.

 A \textbf{Nash mechanism} places the facility at a location that maximizes the product of agent utilities: $\prod_{i\in N} u_i(y)$. In our model, agents' utility functions $u_i$ are not reported (agents only report locations); furthermore, the Nash mechanism is only well-defined when $u_i(y)$ is non-negative for facility locations $y\in X$. Therefore, to define the Nash mechanism in our setting---and using the benefit of hindsight---we adopt the following form:  a Nash mechanism $f_{\text{Nash}}$ locates the facility at  
\begin{align}\label{eq: nash mechanism L}
f_{\text{Nash}}(\boldsymbol{x})= \arg \max_{y\in [0,1]} \prod_{i\in N} \big(L-d(y, x_i)\big).
\end{align}
The formulation above says that the Nash mechanism operates upon the (no necessarily true) assumption that all agents have a  utility function of the form $u_i(y)=L-d(y, x_i)$. When each agent's true utility function is $u_i(y)=L-d(y, x_i)$, the Nash mechanism is described  by \citet[p. 80]{Moul03} as achieving a ``sensible compromise between utilitarianism and egalitarianism."
 
  \paragraph{Incompatibility results.} 
 All of the above mechanisms either fail to provide fair outcomes (per the axioms in Section~\ref{subsection: fairness axioms}) or fail to be strategyproof. The median mechanism fails Proportionality and IFS; however, it is strategyproof and satisfies unanimity. The midpoint mechanism---often heralded as a hallmark of fairness---fails to satisfy many of Section~\ref{subsection: fairness axioms}'s proportionality-based fairness axioms; it only satisfies the weakest axioms: IFS and unanimity. Furthermore, the midpoint mechanism is not strategyproof. Finally, the Nash mechanism, as formulated in (\ref{eq: nash mechanism L}), obtains the strongest axiom of proportional fairness, PF---and, hence, satisfies the other fairness axioms: UFS, Proportionality, IFS, and unanimity. However,   the Nash mechanism is not strategyproof~\citep{LAW21}. Proposition~\ref{prop: Egal Nash} summarizes these results.  
 \medskip{}
 \begin{proposition}[Review of existing mechanisms]\label{prop: Egal Nash}
\
\begin{enumerate}
\item The median mechanism satisfies unanimity and strategyproofness, but does not satisfy IFS, PF, UFS nor Proportionality.
\item The midpoint mechanism satisfies IFS and unanimity, but it is not strategyproof. The midpoint mechanism does not satisfy PF, UFS, nor Proportionality. 
\item  The Nash mechanism satisfies PF, but it is not strategyproof.  
\end{enumerate}
\end{proposition}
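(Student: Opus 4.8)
The plan is to treat the three parts separately, disposing of the median and midpoint mechanisms by direct verification plus small counterexamples, and reserving the real work for the Nash mechanism. Throughout I would exploit the hierarchy of \propref{prop: UFS implies unanimity}: since PF $\Rightarrow$ UFS $\Rightarrow$ IFS and UFS $\Rightarrow$ Proportionality, a single profile violating IFS automatically rules out UFS and PF, and a single profile violating Proportionality does the same, so each negative claim costs at most one example. For the median mechanism, unanimity is immediate (if all $x_i=c$ the sorted list is constant, so $f_{\text{med}}=c$) and strategyproofness is the classical fact of \citet{Moul80} for single-peaked preferences (valid here by \citet{BoJo83}), which I would cite. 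For the negative claims I would take $n=3$ and $\boldsymbol{x}=(0,0,1)$: then $f_{\text{med}}(\boldsymbol{x})=0$, so the agent at $1$ is at distance $1>1-1/n=2/3$, violating IFS, and since all locations are extreme, Proportionality demands $y=1/3\ne 0$; failure of IFS and Proportionality then yields failure of UFS and PF.

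For the midpoint mechanism, IFS holds because $f_{\text{mid}}\in[\min_i x_i,\max_i x_i]$, so every agent lies within $\tfrac12(\max_i x_i-\min_i x_i)\le \tfrac12\le 1-1/n$ of it, and unanimity is again immediate. Non-strategyproofness I would show with $n=2$, $\boldsymbol{x}=(0,\tfrac12)$: the midpoint is $\tfrac14$, but the agent at $\tfrac12$ can report $1$ to shift the facility to $\tfrac12$, obtaining distance $0$. Finally the profile $(0,0,1)$ gives $f_{\text{mid}}=\tfrac12$, which violates both Proportionality (the value should be $1/3$) and UFS (the pair at $0$ is owed distance $\le 1/3$ but receives $\tfrac12$), hence also PF.

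The crux is that the Nash mechanism satisfies PF. Write $y^\ast=f_{\text{Nash}}(\boldsymbol{x})$, the unique maximizer of the strictly concave map $g(y)=\sum_i \log u(y,x_i)$ (strict concavity and interiority in the non-degenerate case follow from strict concavity of each $\log(1-|y-x_i|)$). I would proceed in three steps. First, reduce PF to contiguous groups: the binding agent for any $S$ is the one farthest from $y^\ast$, and replacing $S$ by the contiguous set of all agents lying between $\min S$ and $\max S$ keeps the range and the farthest agent while increasing $|S|$, which only tightens the required bound; so it suffices to check contiguous $S$. Second, sort $x_1\le\dots\le x_n$ and rearrange the PF inequality to see that PF for all contiguous groups is equivalent to the two-sided condition $\beta\le y^\ast\le\alpha$, where $\beta=\max_k\bigl(x_k-\tfrac{k-1}{n}\bigr)$ and $\alpha=\min_k\bigl(x_k+\tfrac{n-k}{n}\bigr)$: a right-hand group $\{k,\dots,n\}$ with farthest member $x_n$ gives exactly $y^\ast\ge x_k-\tfrac{k-1}{n}$, symmetrically on the left, and straddling groups hold automatically. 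Third, prove $y^\ast\le\alpha$ (then $y^\ast\ge\beta$ follows by the reflection $x\mapsto 1-x$). Suppose not, and let $k^\ast$ attain $\alpha$, so $y^\ast>x_{k^\ast}+\tfrac{n-k^\ast}{n}\ge 0$. Then agents $1,\dots,k^\ast$ all lie strictly left of $y^\ast$ at distance exceeding $\tfrac{n-k^\ast}{n}$, so each has $u_i<k^\ast/n$ and $\sum_{i\le k^\ast}1/u_i>n$. The left-derivative optimality condition for $g$ at $y^\ast>0$ gives $\sum_{x_i<y^\ast}1/u_i\le\sum_{x_i\ge y^\ast}1/u_i$, while each agent with $x_i\ge y^\ast$ has $u_i=1-(x_i-y^\ast)\ge y^\ast>\alpha$ and there are at most $n-k^\ast$ of them; combining yields $n<\sum_{x_i\ge y^\ast}1/u_i<(n-k^\ast)/\alpha$, so $\alpha<(n-k^\ast)/n$, contradicting $\alpha=x_{k^\ast}+\tfrac{n-k^\ast}{n}\ge (n-k^\ast)/n$.

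The main obstacle is exactly this Nash/PF argument: the naive strategy ``assume PF fails and exhibit a profitable move'' stalls, because first-order optimality makes every one-sided perturbation non-improving, so the violation must instead be converted into the quantitative bound above by invoking the optimality condition directly. The final negative claim—that the Nash mechanism is not strategyproof—I would either cite \citep{LAW21}, or observe that for $n=2$ the Nash rule coincides with the midpoint rule (both maximizers equal $\tfrac12(x_1+x_2)$), so the manipulation exhibited for the midpoint mechanism applies verbatim.
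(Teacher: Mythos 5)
Your proposal is correct, and parts (i)--(ii) follow essentially the paper's own proof: direct verification of unanimity and strategyproofness (resp.\ IFS) for the median (resp.\ midpoint) mechanism, plus an all-extreme profile as the counterexample---the paper uses $n-1$ agents at $0$ and one agent at $1$ for the median, and $(0,0,1)$ for the midpoint, exactly in the spirit of your examples. Where you genuinely diverge is part (iii), the only substantive part. The paper never analyzes the Nash objective directly: it introduces a monotonicity axiom (\defref{def: Mono}), proves the auxiliary \lemref{lem: ufs mono is PF} that UFS together with monotonicity implies PF, and then cites \citet{LAW21} both for the fact that the Nash mechanism is UFS and monotonic and for its failure of strategyproofness. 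You instead give a self-contained analytic proof: reduce PF to contiguous groups, recast PF as the interval condition $\beta\le y^{*}\le\alpha$ on the facility location, and derive $y^{*}\le\alpha$ from the one-sided first-order condition of the strictly concave log-Nash objective. Each route buys something. The paper's lemma is reusable---it is invoked again in \appref{Section: av mech} to show the Average mechanism satisfies PF---and it outsources all Nash-specific work; your argument needs no external facts about the Nash mechanism, yields as a by-product an exact characterization of the PF-compatible locations as the interval $[\beta,\alpha]$, and your $n=2$ observation (Nash coincides with the midpoint rule) even replaces the citation for non-strategyproofness. Two small points to tidy: the final division by $\alpha$ in your contradiction requires $\alpha>0$ (when $\alpha=0$ all agents are at $0$, the sum over $\{i : x_i\ge y^{*}\}$ is empty, and the contradiction $n<0$ arrives one step earlier, so the argument survives but should say so); and your counterexamples are stated for fixed $n\in\{2,3\}$ while the paper's are for arbitrary $n\ge 2$---the generalization (e.g.\ $n-1$ agents at $0$, one at $1$) is immediate, but it is the statement actually being claimed.
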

\medskip{}
%\begin{proof}
%Proof in Appendix~\ref{section: prop: Egal Nash}.
%\end{proof}

\subsection{Characterization of IFS and strategyproof mechanisms.}\label{section:char1}

We now characterize the family of strategyproof and IFS mechanisms. Our characterization leverages the class of Phantom mechanisms introduced by~\citet{Moul80}~\cite[see also][]{BoJo83}. Although both~\citet{Moul80} and~\citet{BoJo83} deal with a setting where agents' locations are in $\mathbb{R}$ rather than $[0,L]$,  their results extend naturally~\citep[see, e.g.,][]{MaMo11a}.   Intuitively, Phantom mechanisms can be understood as locating the facility at the median of $2n-1$ reports, where $n$ reports correspond to the agents' reports and $n-1$ reports are fixed (and pre-determined) at locations $p_1, \ldots, p_{n-1}$. The fixed reports are referred to as ``phantom'' locations.
  \medskip{}
\begin{definition}[Phantom Mechanisms]\label{def: phantom}
Given $\boldsymbol{x}\in X$ and $n-1$ values $0\le p_1\le \cdots \le p_{n-1}\le L$, a Phantom mechanism  locates the facility at $\text{Median}\{x_1, \ldots, x_n, p_1, \ldots, p_{n-1}\}.$
\end{definition}

\medskip{}

The family of Phantom mechanisms is broad and captures many well-known mechanisms. To build intuition, we provide some examples below.
\begin{enumerate}[leftmargin=*,labelindent=10pt,label=  \arabic*.]
\item The classic median mechanism is obtained by locating $\floor{(n-1)/2}$ phantoms at $0$ and $\ceil{(n-1)/2}$ phantoms at $L$.
\item The ``Maximum'' (resp., ``Minimum'') mechanism, which locates the facility at the maximum (resp., minimum) agent location, is obtained by locating all the phantoms at $L$ (resp., $0$).
\item The ``Moderate$-\frac{L}{2}$'' mechanism, which locates the facility at the minimum (resp., maximum) agent reported location when all agents report above (resp., below) $L/2$ and otherwise (i.e., when some agent(s) report either side of $L/2$) the facility is located at $L/2$. This mechanism is obtained by locating all the phantoms at $L/2$.
\end{enumerate}
\medskip{}

On the other hand, mechanisms such as the midpoint mechanism  (\ref{eq: mid mechanism L}) and the Nash mechanism (\ref{eq: nash mechanism L}) from Section~\ref{section: fair and SP} do not belong to the family of Phantom mechanisms. Similarly, the ``Average'' mechanism, which locates the facility at the average of all agents' reports, is not a Phantom mechanism. Given 6 agents with $L=1$ and location profile $\boldsymbol{x}=(0,0,0,0,0.8,1)$, Figure~\ref{figure:rules} provides an illustration of these mechanisms (and also other mechanisms that will be  defined later). Each agent's location is depicted by an `x' mark; each mechanism's facility location is depicted by a  \textbullet \ (with label directly above). Further details are provided in the figure caption.

   \begin{figure}[h!]
 	  	 \begin{center}  
   		      	             \begin{tikzpicture}[scale=1]
   			   	      	                 \centering
   	   	      	                 \draw[-] (0,0) -- (12,0);
								 
   								   \draw[-] (0,0) -- (0,0.25);
   								    \draw[-] (12,0) -- (12,0.25);
   									    \draw[-] (2,0) -- (2,0.25);
   										 \draw[-] (4,0) -- (4,0.25);
   										  \draw[-] (6,0) -- (6,0.25);
    \draw[-] (8,0) -- (8,0.25);
     \draw[-] (10,0) -- (10,0.25);
  
     \draw (0,-.4) node(c){\small $0$};
       \draw (2,-.4) node(c){\small $1/6$};
   	    \draw (4,-.4) node(c){\small $2/6$};
   		    \draw (6,-.4) node(c){\small $3/6$};
   			 \draw (8,-.4) node(c){\small $4/6$};
   			 	 \draw (10,-.4) node(c){\small $5/6$};
       \draw (12,-.4) node(c){\small $1$};
	
   	  \draw (0,0.6) node(c){\small $\text{x}$};
   	  \draw (0,0.8) node(c){\small $\text{x}$};
   	  	  \draw (0,1) node(c){\small $\text{x}$};
   		  	  	  \draw (0,1.2) node(c){\small $\text{x}$};
				  
   				   \draw (12,1) node(c){\small $\text{x}$};
   				   	   \draw (9.6,1) node(c){\small $\text{x}$};
 					   		\draw (3.4,2.5)node(c){\small $y_{\text{Nash}}$};
   					   	   \draw (4,1.5) node(c){\small $y_{\text{Unif}}$};
   						    \draw (2,2) node(c){\small $y_{\text{CM}}$};
   						   \draw (0,2) node(c){\small $y_{\text{med}}$};
   						    \draw (6,2) node(c){\small $y_{\text{mid}}$};
   							\draw (3.6,2) node(c){\small $y_{\text{avg}}$};
   							\draw (0,0) node(c)[circle,fill,inner sep=1.5pt]{};
   							\draw (2,0) node(c)[circle,fill,inner sep=1.5pt]{};
   							\draw (3.4,0) node(c)[circle,fill,inner sep=1.5pt]{};
   							\draw (3.6,0) node(c)[circle,fill,inner sep=1.5pt]{};
   							\draw (6,0) node(c)[circle,fill,inner sep=1.5pt]{};
   							\draw (4,0) node(c)[circle,fill,inner sep=1.5pt]{};
   			   	      	  \end{tikzpicture}
   			   	       	\end{center}
   			   	      	 \caption{Facility location problem on the $[0,1]$ domain with $n=6$ agents, with location profile $(0,0,0,0, 0.8,1)$ represented by x. The facility locations (represented by \textbullet) correspond to the: Median mechanism, $y_{\text{med}}=0$; Constrained Median mechanism,  $y_{\text{CM}}=\frac{1}{6}$; Nash mechanism, $y_{\text{Nash}}\approx 0.284$; Average mechanism, $y_{\text{avg}}=0.3$; Uniform Phantom mechanism, $y_{\text{Unif}}=\frac{2}{6}$; and Midpoint mechanism, $y_{\text{mid}}=\frac{3}{6}$.}
   			   	      	\label{figure:rules}
   			   	      	\end{figure}	

In our setting, the family of Phantom mechanisms are known to characterize all strategyproof, anonymous, and Pareto efficient mechanisms~\citep[Corollary 2 of][]{MaMo11a}.
%\footnote{This result differs to~\citeauthor{Moul80}'s (\citep{Moul80}) results because \citeauthor{MaMo11a}'s corollary applies to the setting where agents have single-peaked and symmetric preferences.  Moulin also shows that a broader class of phantom mechanisms (that uses $n+1$ phantoms) characterizes the class of all anonymous and strategyproof but not necessarily Pareto-efficient mechanisms. }
 This characterization of Phantom mechanisms forms the foundation of our characterization results.

Theorem~\ref{char: IFS and SP} says that the family of IFS, strategyproof, anonymous, and unanimous mechanisms are characterized by the subfamily of Phantom mechanisms that have their phantom locations contained in the interval $[\frac{1}{n}, L-\frac{1}{n}]$. Intuitively, when the facility is located in the interval  $[\frac{1}{n}, L-\frac{1}{n}]$, IFS is satisfied regardless of the agents' locations. The restricted class of Phantom mechanisms in Theorem~\ref{char: IFS and SP} satisfies IFS by preventing the facility from being located at an ``extreme'' point (i.e., beyond the interval $[\frac{1}{n}, L-\frac{1}{n}]$) unless all agents are   located close together and at a common extreme point.

 \medskip{}
\begin{theorem}[Characterization: IFS, unanimous, anonymous, and strategyproof]\label{char: IFS and SP}
A mechanism is strategyproof, unanimous, anonymous and satisfies IFS if and only if it is a Phantom mechanism with $n-1$ phantoms all contained in the interval $[\frac{1}{n}, L-\frac{1}{n}]$.
\end{theorem}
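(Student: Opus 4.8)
The plan is to prove both directions of the equivalence, treating the ``if'' direction as a direct verification and reserving the real work for the ``only if'' direction. For the ``if'' direction, let $f$ be a Phantom mechanism with phantoms $p_1 \le \cdots \le p_{n-1}$ all lying in $[\tfrac1n, 1-\tfrac1n]$. Strategyproofness and anonymity are inherited from the fact that every Phantom mechanism is strategyproof and anonymous (part of the Moulin/Mass\'o--Moreno characterization already invoked in the text). Unanimity is automatic for any Phantom mechanism: if all $n$ agents report $c$, then among the $2n-1$ sorted values the $n$ copies of $c$ occupy a block that always contains the $n$-th order statistic (there are only $n-1$ phantoms, so at most $n-1$ of them sit strictly below $c$), hence the median equals $c$. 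The only property needing a short argument is IFS. I would argue that if $y=f(\boldsymbol x)$ lands in $[\tfrac1n,1-\tfrac1n]$ then IFS holds immediately, since $y\le 1-\tfrac1n$ and $x_{(1)}\ge 0$ give $y-x_{(1)}\le 1-\tfrac1n$, and symmetrically $x_{(n)}-y\le 1-\tfrac1n$; whereas if $y>1-\tfrac1n$, then because all phantoms are $\le 1-\tfrac1n$ the $n$-th order statistic can only exceed $1-\tfrac1n$ when all $n$ agent reports do, forcing every agent within distance $<\tfrac1n\le 1-\tfrac1n$ of $y$ (the case $y<\tfrac1n$ is symmetric).

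For the ``only if'' direction, suppose $f$ is strategyproof, anonymous, unanimous, and satisfies IFS. First I would apply Moulin's characterization of all anonymous and strategyproof mechanisms (the broader family using $n+1$ phantoms noted after Definition~\ref{def: phantom}) to write $f(\boldsymbol x)=\mathrm{Median}\{x_1,\dots,x_n,p_0,p_1,\dots,p_n\}$ with $0\le p_0\le\cdots\le p_n\le 1$, i.e.\ the $(n+1)$-st order statistic of $2n+1$ values. Next I would use unanimity to pin down the two extreme phantoms: evaluating at the all-zeros profile forces $p_0=0$, and at the all-ones profile forces $p_n=1$ (otherwise the $(n+1)$-st order statistic would be strictly positive, resp.\ strictly below $1$). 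With $p_0=0$ a global minimum and $p_n=1$ a global maximum of the multiset, deleting these two values lowers the $(n+1)$-st order statistic of the $2n+1$ values to the $n$-th order statistic of the remaining $2n-1$ values $\{x_1,\dots,x_n,p_1,\dots,p_{n-1}\}$. Hence $f$ is exactly the Phantom mechanism with the $n-1$ phantoms $p_1,\dots,p_{n-1}$.

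It then remains to locate these phantoms in $[\tfrac1n,1-\tfrac1n]$. The key device is that a Phantom mechanism isolates its $j$-th phantom on a suitable two-point profile: placing $n-j$ agents at $0$ and $j$ agents at $1$ makes the $n$-th order statistic of $\{x_1,\dots,x_n,p_1,\dots,p_{n-1}\}$ equal to $p_j$ (a short order-statistic count, robust to ties at $0$ or $1$, confirms this). Applying IFS to this profile, the agents at $0$ require $d(p_j,0)=p_j\le 1-\tfrac1n$, and the agents at $1$ require $d(p_j,1)=1-p_j\le 1-\tfrac1n$, i.e.\ $p_j\ge\tfrac1n$. Ranging over $j\in\{1,\dots,n-1\}$ gives $p_j\in[\tfrac1n,1-\tfrac1n]$ for every phantom, completing the characterization.

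The main obstacle is the passage from the raw $(n+1)$-phantom representation to the clean $(n-1)$-phantom form: one must verify that unanimity genuinely forces $p_0=0$ and $p_n=1$, and that the order-statistic bookkeeping after deleting the two extreme phantoms remains valid in the presence of ties (e.g.\ when several phantoms or agents coincide with $0$ or $1$). Once this reduction is secured, the rest is a careful but routine manipulation of order statistics together with the two families of extremal profiles---the all-at-$c$ profiles for unanimity and the two-point profiles for IFS.
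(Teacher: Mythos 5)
Your backward (``if'') direction is fine and essentially matches the paper's: strategyproofness and anonymity are inherited from the Phantom family, unanimity follows from an order-statistic count, and IFS follows by splitting on whether the facility lands inside or outside $[\frac{1}{n},1-\frac{1}{n}]$. The final step of your forward direction (isolating $p_j$ with the profile of $n-j$ agents at $0$ and $j$ agents at $1$, then applying IFS to both groups) is also correct, and is in fact the same device the paper uses, applied uniformly to every $j$ rather than only to $p_1$ and $p_{n-1}$.

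However, the first step of your forward direction has a genuine gap. You invoke ``Moulin's characterization of all anonymous and strategyproof mechanisms'' (the $n+1$-phantom median form) to write $f(\boldsymbol{x})=\mathrm{Median}\{x_1,\dots,x_n,p_0,\dots,p_n\}$ \emph{before} using unanimity. That characterization is proved for the domain of (possibly asymmetric) single-peaked preferences; in this paper agents have \emph{symmetric} single-peaked preferences, a strictly smaller domain, so strategyproofness is a strictly weaker requirement and the class of anonymous, strategyproof mechanisms is strictly larger than the $n+1$-phantom family. The paper's own Proposition~\ref{prop: cont needed} exhibits exactly such a mechanism for $n=2$: it is anonymous and strategyproof (in the symmetric domain) but discontinuous, and a short check shows it cannot be written as a median with any choice of $n+1$ phantoms (e.g.\ the profile $(0.3,0.3)$ forces the median form to output $0.3$, not $1/2$). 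So the representation you start from need not exist for the mechanism $f$ you are given, and your subsequent steps (pinning $p_0=0$, $p_n=1$ via unanimity and deleting them) have nothing to operate on. The repair is precisely the paper's route: use unanimity \emph{first}, via Border--Jordan's Lemma~3 (strategyproof $+$ unanimous $\Rightarrow$ Pareto efficient, valid in the symmetric domain), and then apply Mass\'o--Moreno's Corollary~2 (strategyproof $+$ anonymous $+$ Pareto efficient $\iff$ Phantom mechanism with $n-1$ phantoms, also stated for the symmetric domain). After that reduction, your two-point-profile argument completes the proof; the $n+1$-phantom detour is both invalid here and unnecessary.
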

\medskip{}

\begin{proof}
We start with the backwards direction. Let $f$ be a Phantom mechanism with the $n-1$ phantoms contained in $[\frac{L}{n}, L(1-\frac{1}{n})]$. First note that $f$ is strategyproof because all Phantom mechanisms are strategyproof \citep[see, e.g., Corollary 2 of][]{MaMo11a}. Furthermore, it is  immediate from the Phantom mechanism definition (Definition~\ref{def: phantom}) that $f$ satisfies unanimity.  It remains to show that $f$ satisfies IFS. To see this, notice that the  facility is located above (resp., below) both of the endpoints of the interval $[\frac{L}{n}, L(1-\frac{1}{n})]$ if and only if all agents are located above (resp.,  below) of  the interval. Therefore, in such cases, the facility is located within a distance of $\frac{L}{n}$ of all agents. Otherwise, the facility is located  within the interval and the largest possible cost is $L(1-\frac{1}{n})$, as required.

We now prove the forward direction. Let $f$ be a mechanism that is strategyproof,  unanimous, anonymous, and satisfies IFS. \citeauthor{BoJo83}'s (\citeyear{BoJo83}) Lemma~3 says that any strategyproof and unanimous mechanism is Pareto efficient.   Hence,  $f$ is strategyproof, IFS, unanimous, anonymous, and Pareto efficient. We now apply Corollary~2 of~\citet{MaMo11a}, which says that a mechanism is strategyproof, anonymous, and Pareto efficient if and only if it is a Phantom mechanism (Definition~\ref{def: phantom}). We now show that $p_j\in [\frac{L}{n}, L(1-\frac{1}{n})]$ for all $j\in \{1, \ldots, n-1\}$. For the sake of a contradiction, suppose $p_1<\frac{L}{n}$ (the case of $p_{n-1}>L(1-\frac{1}{n})$ is dealt with similarly and, hence, is omitted). If $n-1$ agents are located at $0$ and the remaining  agent is located at $L$, then the facility must be located at $p_1<\frac{L}{n}$. But then the agent at location $L$ experiences cost strictly greater than $L(1-\frac{1}{n})$---a contradiction of IFS. Therefore, $p_j\in [\frac{L}{n}, L(1-\frac{1}{n})]$ for all $j\in \{1, \ldots, n-1\}$, as required.
\end{proof}
\medskip{}
 Theorem~\ref{char: IFS and SP}  is ``tight'' in the following sense: if any one of the requirements in  Theorem~\ref{char: IFS and SP} (i.e., strategyproofness, unanimity, anonymity, and IFS) is removed, then the theorem fails to hold. In Appendix~\ref{section:prop: ess char}, for each smaller set of requirements, we identify a mechanism that  satisfies them and does not belong to the family of mechanisms described in  Theorem~\ref{char: IFS and SP}.

%%%%%%%%%%%%%%%%%%%%%%%%%%%%%% %%%%%%%%%%%%%%%%%

\subsection{Characterization of PF, UFS, Proportional, and strategyproof mechanisms.}\label{section:char2} 
 
 %%%%%%%%%%%%%%%%%%%%%%%%%%%%%% %%%%%%%%%%%%%%%%%
 
%  

 We now show that strategyproofness and PF are compatible and can be achieved via the ``Uniform Phantom'' mechanism. By Proposition~\ref{prop: UFS implies unanimity} this also implies that UFS and, hence, proportionality, IFS, and unanimity can be attained simultaneously.  The Uniform Phantom mechanism is obtained from the general class of Phantom mechanisms  (Definition~\ref{def: phantom}) by locating the $(n-1)$ phantoms at  $\frac{jL}{n}$ for $j=1, \ldots, n-1$. Figure~\ref{figure:rules} provides an illustration of the mechanism. This mechanism is the focus of~\citet{FPPV21}; later we   provide a discussion of the similarities and differences between our results and those of~\citeauthor{FPPV21}.
\medskip{}
\begin{definition}[Uniform Phantom mechanism]\label{def: Uniform Phantom}
Given $\boldsymbol{x}\in X$, the Uniform Phantom mechanism $f_{\text{Unif}}$ locates the facility at
$$\text{Median}\{x_1, \ldots, x_n, \frac{L}{n}, \frac{2L}{n}, \ldots, \frac{(n-1)L}{n}\}.$$
\end{definition}

  		\medskip{}
			   	   
It is immediate that the Uniform Phantom mechanism is strategyproof since it belongs to the family of Phantom mechanisms (Definition~\ref{def: phantom}). However, in addition to strategyproofness, Proposition~\ref{prop: uniform is SP and PF} says that the Uniform Phantom mechanism satisfies PF. Intuitively, the Uniform Phantom mechanism locates the facility at the $n$-th location of the  $2n-1$ phantom and agent locations. Given the phantom locations, for every $L/n$ units of distance, there is at least one phantom. Therefore, for any set of agents $S$, the  distance between the most extreme agents in $S$ and the facility is at most $L\frac{n-|S|}{n}$ and, hence,  the distance between any agent in $S$ and the facility is at most $L\frac{n-|S|}{n}+r$, where $r$ is the range of the agents in $S$.

 \medskip{}
\begin{proposition}[Uniform Phantom mechanism properties]\label{prop: uniform is SP and PF} \ \\ 
The Uniform Phantom mechanism is strategyproof and satisfies PF. Thus, it also satisfies UFS, IFS, proportionality, and unanimity.
\end{proposition}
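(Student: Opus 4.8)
The plan is to prove the two claims separately. Strategyproofness is immediate: the Uniform Phantom mechanism is a member of the family of Phantom mechanisms (Definition~\ref{def: phantom}) with phantoms placed at $p_j = \frac{j}{n}$, and all Phantom mechanisms are strategyproof by Corollary~2 of~\cite{MaMo11a}. So the substance of the proposition is entirely in verifying PF.

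For the PF claim, I would work directly from the definition of the median of the $2n-1$ points $\{x_1,\dots,x_n,\frac{1}{n},\dots,\frac{n-1}{n}\}$. Fix an arbitrary subset $S \subseteq N$ of agents lying within a range of distance $r$; that is, all agents of $S$ are contained in some interval $[a,b]$ with $b-a = r$ (taking $r$ to be the diameter of $S$, so the bound is tightest). Write $y = f_{\text{Unif}}(\boldsymbol{x})$ for the facility location, which is the $n$-th smallest of the $2n-1$ points. The goal is to show $d(y,x_i) \le \frac{n-|S|}{n} + r$ for every $i \in S$. The natural strategy is to bound the distance from $y$ to the \emph{nearest} endpoint of the cluster $[a,b]$, and then add $r$ to reach any agent in $S$ via the triangle inequality: since every $x_i \in S$ satisfies $|x_i - a| \le r$ (and likewise for $b$), it suffices to show that $y$ lies within distance $\frac{n-|S|}{n}$ of the interval $[a,b]$.

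The key step is a counting argument on the phantoms, exploiting that they tile $[0,1]$ at spacing $1/n$. Suppose for contradiction that $y > b + \frac{n-|S|}{n}$ (the symmetric case $y < a - \frac{n-|S|}{n}$ is handled identically and can be omitted). Because $y$ is the $n$-th smallest of the $2n-1$ points, there are at least $n-1$ points strictly greater than $y$, hence strictly greater than $b + \frac{n-|S|}{n}$. I would count how many of the $2n-1$ points can exceed $b + \frac{n-|S|}{n}$. The phantoms exceeding this threshold are exactly those $\frac{j}{n} > b + \frac{n-|S|}{n}$, and since the phantoms are spaced $\frac{1}{n}$ apart, the number of phantoms lying in the interval $\big(b,\, b+\frac{n-|S|}{n}\big]$ is at least $n-|S| - 1$ (roughly, the length $\frac{n-|S|}{n}$ accommodates that many phantom gaps). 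Meanwhile all $|S|$ agents of $S$ lie at or below $b$, so they do not exceed the threshold. Combining, at most $(n-1) - (n-|S|-1) = |S|$ agents plus phantoms can remain above the threshold from the total budget of points above $b$, producing a contradiction with the requirement that $n-1$ points exceed $y$. The main obstacle — and the part requiring care — is getting the floor/ceiling bookkeeping of the phantom count exactly right, so that the inequality closes sharply rather than losing an additive $\frac{1}{n}$; the cleanest route is to argue that at least one phantom lies in every half-open interval of length $\frac{1}{n}$ within $[0,1]$, so the stretch from $b$ up to $y$ must contain at least $n-|S|$ phantoms, which together with the $|S|$ agents at or below $b$ already accounts for $n$ of the points at or below $y$, forcing $y$ down to within $\frac{n-|S|}{n}$ of $b$. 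I would phrase the final argument in this ``at least one phantom per $1/n$-block'' form, since it sidesteps explicit floor/ceiling manipulation and matches the intuition sketched in the paragraph preceding the proposition.
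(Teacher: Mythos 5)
Your proposal is correct and takes essentially the same approach as the paper's own proof: strategyproofness via membership in the Phantom family, and PF via the observation that the facility is the $n$-th smallest of the $2n-1$ points while the phantoms are spaced $1/n$ apart, so at most $n-|S|$ phantom gaps can separate the cluster $S$ from the facility, with the triangle inequality contributing the additional $r$. The only repair needed is that your auxiliary claim that every half-open interval of length $1/n$ in $[0,1]$ contains a phantom is false at the extremes (e.g., $[0,1/n)$ contains none); either state it for closed intervals or, more directly, note that if $y > b+\frac{n-|S|}{n}$ then the phantoms $\frac{1}{n},\dots,\frac{n-|S|}{n}$ together with the $|S|$ agents of $S$ give $n$ points strictly below $y$, contradicting the fact that at most $n-1$ of the $2n-1$ points can lie strictly below their $n$-th smallest element.
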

\medskip{}

%\begin{proof}
%Proof in Appendix~\ref{section:prop: uniform is SP and PF}.
%\end{proof}

A natural question is whether there exist other strategyproof mechanisms satisfying UFS or proportionality and unanimity. It turns out that there are not: Theorem~\ref{thm: no need anon Freeman} says that the Uniform Phantom mechanism is the only strategyproof mechanism that is  proportional and unanimous. A key challenge in the theorem is that anonymity is not supposed and hence, the well-known characterization of Phantom mechanisms cannot be immediately applied. In the appendix, we prove an auxiliary lemma that says anonymity is implied by strategyproofness, unanimity, and proportionality. With this in hand, the Phantom mechanism characterization can be utilized. Proportionality then implies the (unique) locations of the $n-1$ phantoms. This is because of two observations. First, proportionality requires that, for any $k=1, \ldots, n-1$, when $k$ agents are located at $L$ and $n-k$ agents at $0$, the facility is located at $kL/n$.  Second, for such a profile of locations, any Phantom mechanism will locate the facility at the $k$th phantom. Therefore,   the phantoms  must be located at $\frac{kL}{n}$ for $k=1, \ldots, n-1$.

\medskip{}
\begin{theorem}[Characterization: proportional, unanimous, and strategyproof]\label{thm: no need anon Freeman} \ \\
A mechanism satisfies strategyproofness,  unanimity, and proportionality   if and only if it is the Uniform Phantom mechanism. 
\end{theorem}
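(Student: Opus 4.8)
The plan is to prove both directions of the equivalence. The backward direction is essentially Corollary~\ref{cor: uniform is SP and UFS}: the Uniform Phantom mechanism is strategyproof, unanimous, and (via UFS, which implies proportionality) proportional. So the real content is the forward direction. The difficulty, as the text flags, is that without anonymity we cannot directly invoke the Phantom mechanism characterization (Corollary~2 of Massó--Moreno de Barreda), which requires anonymity, strategyproofness, and Pareto efficiency. Hence my first step is to establish an auxiliary lemma: \emph{strategyproofness, unanimity, and proportionality together imply anonymity}. Once anonymity is in hand, the proof reduces cleanly to pinning down the phantom locations.

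First I would prove the anonymity lemma. Since $f$ is strategyproof and unanimous, \citeauthor{BoJo83}'s Lemma~3 gives that $f$ is Pareto efficient; in our setting this means $f(\boldsymbol{x})\in[\min_i x_i,\max_i x_i]$ for every profile. The hard part will be bootstrapping from this to full anonymity without already knowing the mechanism is a Phantom mechanism. The natural route is to show that a single-peaked, strategyproof mechanism on $[0,1]$ depends only on the multiset of reported locations. One approach is to use a known result (e.g. the structure of strategyproof mechanisms on a line, or a ``tops-only'' and ``order-invariance'' argument) to argue that a strategyproof mechanism must be invariant under swapping two agents' reports, leveraging strategyproofness applied successively to each of the two swapped agents together with the monotonicity that strategyproofness forces. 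I would treat establishing this lemma as the main obstacle and would isolate it as a separate lemma in the appendix, since getting anonymity out of strategyproofness plus proportionality is the delicate part.

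With anonymity secured, $f$ is strategyproof, anonymous, and Pareto efficient, so by Corollary~2 of~\cite{MaMo11a} it is a Phantom mechanism with some phantoms $0\le p_1\le\cdots\le p_{n-1}\le 1$. It remains to show $p_k=\tfrac{k}{n}$ for every $k\in\{1,\ldots,n-1\}$. Here I would use proportionality on the extreme profiles exactly as the surrounding text suggests: fix $k$ and consider the profile with $k$ agents at $1$ and $n-k$ agents at $0$. By proportionality the facility must be at $\tfrac{k}{n}$. On the other hand, for a Phantom mechanism on this profile the median of the multiset $\{0^{(n-k)},1^{(k)},p_1,\ldots,p_{n-1}\}$ is exactly the $k$-th phantom $p_k$ (counting the sorted list of $2n-1$ values, the $n$-th order statistic lands on $p_k$ since there are $n-k$ zeros preceding the phantoms and $k$ ones following them). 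Equating the two gives $p_k=\tfrac{k}{n}$. Ranging over all $k$ forces all phantoms to the uniform positions, so $f=f_{\text{Unif}}$, completing the forward direction and hence the characterization.
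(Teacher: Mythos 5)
Your overall skeleton matches the paper's proof exactly: backward direction from Corollary~\ref{cor: uniform is SP and UFS}, forward direction via an auxiliary lemma that strategyproofness, unanimity, and proportionality imply anonymity, then \citeauthor{BoJo83}'s Lemma~3 for Pareto efficiency, Corollary~2 of \cite{MaMo11a} to get a Phantom mechanism, and finally the extreme profiles with $k$ agents at $1$ and $n-k$ at $0$ to force $p_k=\tfrac{k}{n}$. That last phantom-pinning computation is correct as you state it. The gap is in the one step you yourself identify as the main obstacle: your proposed route to the anonymity lemma does not work. You suggest arguing that ``a strategyproof mechanism must be invariant under swapping two agents' reports,'' via order-invariance or tops-onlyness forced by strategyproofness. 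This is false: strategyproof mechanisms on the line are generalized median voter schemes, and these include non-anonymous mechanisms. A dictatorship $f(\boldsymbol{x})=x_1$ is strategyproof and unanimous, and the paper itself exhibits (in the proof of Proposition~\ref{prop: ess char}, the anonymity part) a strategyproof, unanimous, Pareto efficient, IFS mechanism for $n=3$ that is not anonymous. So no amount of strategyproofness-plus-unanimity bookkeeping can deliver swap-invariance; proportionality must enter the argument essentially, and your sketch never uses it.

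The way the paper (Lemma~\ref{lemma: SP, unan, prop implies anon}) actually closes this gap is by combining proportionality with the \emph{uncompromising} property, which \citeauthor{BoJo83}'s Proposition~2 grants to every strategyproof and unanimous mechanism: if an agent on one side of the facility moves anywhere on that same side, the facility does not move. Given an arbitrary profile $\boldsymbol{x}$, one pushes every agent strictly below $f(\boldsymbol{x})$ to $0$ and every agent strictly above to $1$; uncompromisingness keeps the facility fixed, and on the resulting $\{0,1\}$-profile proportionality pins the facility at $\tfrac{k}{n}$, where $k$ is the number of agents at $1$. Since proportionality is label-independent, the permuted profile $\boldsymbol{x}_\sigma$ collapses (by the same pushing argument) to a $\{0,1\}$-profile with the same count $k$, and uncompromisingness lets one walk back to $\boldsymbol{x}_\sigma$ without moving the facility, giving $f(\boldsymbol{x}_\sigma)=f(\boldsymbol{x})$. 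The delicate case, which the paper handles separately with further subcases, is when the facility coincides with some agent's location, since such an agent is on neither side and cannot be freely pushed to an extreme. Without an argument of this shape, your proof of the forward direction does not get off the ground, because everything downstream (the appeal to Corollary~2 of \cite{MaMo11a}) is conditional on anonymity.
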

\medskip{}

\begin{proof}
The backward direction follows immediately from Proposition~\ref{prop: uniform is SP and PF} and Proposition~\ref{prop: UFS implies unanimity}. It remains to prove the forward direction. Suppose $f$ is strategyproof and satisfies proportionality and unanimity. We utilize an auxiliary lemma  (Lemma~\ref{lemma: SP, unan, prop implies anon}), which says that any strategyproof, unanimous, and proportional mechanism must be anonymous.  The proof of Lemma~\ref{lemma: SP, unan, prop implies anon} is quite involved and is proven in Appendix~\ref{section:lemma: SP, unan, prop implies anon}. Given Lemma~\ref{lemma: SP, unan, prop implies anon}, we apply  \citeauthor{BoJo83}'s (\citeyear{BoJo83}) Lemma~3 (i.e., any strategyproof and unanimous mechanism is Pareto efficient). This tells us that $f$ must also be anonymous and Pareto efficient. We now apply Corollary~2 of~\citet{MaMo11a}, which says that a mechanism is strategyproof, anonymous, and  Pareto  efficient if and only if it is a Phantom mechanism (Definition~\ref{def: phantom}). We now show that $p_j=\frac{jL}{n}$ for all $j\in \{1, \ldots, n-1\}$. To see this, take arbitrary $j\in \{1, \ldots, n-1\}$, and let $\boldsymbol{x}$ be a profile of locations such that there are $j$ agents at $L$ and $n-j$ agents at $0$. By definition of the Uniform Phantom mechanism, $f(\boldsymbol{x})=p_j$. But proportionality  requires that $f(\boldsymbol{x})=\frac{jL}{n}$; hence, $p_j=\frac{jL}{n}$. This completes the proof. 
\end{proof}
\medskip{}
Combining Proposition~\ref{prop: UFS implies unanimity} and Proposition~\ref{prop: uniform is SP and PF} with Theorem~\ref{thm: no need anon Freeman} provides two complementary characterizations. Corollary~\ref{thm: main characterization SP, UFS} says that  the Uniform Phantom mechanism is the only strategyproof mechanism that satisfies UFS; similarly,  the Uniform Phantom mechanism is the only strategyproof mechanism that satisfies PF.
\medskip{}
 
\begin{corollary}[Characterization: UFS/PF and strategyproof]\label{thm: main characterization SP, UFS}
A mechanism satisfies strategyproofness  and UFS (PF)   if and only if it is the Uniform Phantom mechanism. 
\end{corollary}
 \medskip{}
UFS and PF are (strictly) stronger requirements than proportionality, so the characterization given by Corollary~\ref{thm: main characterization SP, UFS} does not hold if UFS or PF are replaced by proportionality. In other words, Theorem~\ref{thm: no need anon Freeman} does not hold if we remove unanimity. A simple example illustrating this can be found in Appendix~\ref{section:prop: cont needed}.

Theorem~\ref{thm: no need anon Freeman} and Corollary~\ref{thm: main characterization SP, UFS} gives the equivalence in Corollary~\ref{corr: ewuiv}. The statements 
are ``tight'': dropping any property in (i), (ii), or (iii) will break the equivalence with (iv).
\medskip{}
\begin{corollary}\label{corr: ewuiv} 
The following are equivalent:
\begin{description}
\item[(i)] $f$ satisfies strategyproofness,  proportionality, and unanimity.
\item[(ii)] $f$ satisfies strategyproofness and UFS.
\item[(iii)] $f$ satisfies strategyproofness and PF.
\item[(iv)] $f$ is the Uniform Phantom mechanism.
\end{description}
\end{corollary}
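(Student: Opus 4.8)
The plan is to obtain the four-way equivalence essentially for free by chaining the three characterizations already in hand, and then to establish tightness by exhibiting, for each hypothesis dropped from (i), (ii), or (iii), a concrete mechanism that satisfies the surviving hypotheses yet differs from the Uniform Phantom mechanism. For the equivalence itself I would simply observe that Theorem~\ref{thm: no need anon Freeman} gives (i)~$\iff$~(iv), Corollary~\ref{thm: main characterization SP, UFS} gives (ii)~$\iff$~(iv), and Corollary~\ref{thm: main characterization SP, PF} gives (iii)~$\iff$~(iv). Since each of (i), (ii), and (iii) is equivalent to the single statement (iv), all four statements are mutually equivalent. This step requires no new argument beyond citing the prior results.

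For tightness I would handle each property in turn, drawing the witnesses from Proposition~\ref{prop: Egal Nash} and Proposition~\ref{prop: cont needed}. Consider first statement (i), which bundles strategyproofness, proportionality, and unanimity. Dropping strategyproofness, the Nash mechanism satisfies PF by Proposition~\ref{prop: Egal Nash}(iii), hence proportionality and unanimity by Proposition~\ref{prop: UFS implies unanimity}; it is not strategyproof and is not a Phantom mechanism, so it is not the Uniform Phantom mechanism. Dropping proportionality, the median mechanism is strategyproof and unanimous by Proposition~\ref{prop: Egal Nash}(i) but is not the Uniform Phantom mechanism. Dropping unanimity, the mechanism furnished by Proposition~\ref{prop: cont needed} (for $n=2$) is strategyproof, anonymous, and proportional, yet is not the Uniform Phantom mechanism.

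Statements (ii) and (iii) are then quick. For (ii), which bundles strategyproofness and UFS: dropping strategyproofness, the Nash mechanism satisfies UFS (since PF implies UFS by Proposition~\ref{prop: UFS implies unanimity}) but is not strategyproof and not the Uniform Phantom mechanism; dropping UFS, the median mechanism is strategyproof but not the Uniform Phantom mechanism. For (iii), which bundles strategyproofness and PF: dropping strategyproofness, the Nash mechanism satisfies PF by Proposition~\ref{prop: Egal Nash}(iii); dropping PF, the median mechanism is strategyproof, each being distinct from the Uniform Phantom mechanism.

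The one subtle point, and the step I would flag as the main obstacle, is the ``dropping unanimity'' case for statement (i): I must verify that the mechanism produced by Proposition~\ref{prop: cont needed} indeed fails unanimity, for otherwise it would not witness tightness. This follows without extra work, since that mechanism is strategyproof, proportional, and different from the Uniform Phantom mechanism; were it also unanimous, Theorem~\ref{thm: no need anon Freeman} would force it to be the Uniform Phantom mechanism, a contradiction. Hence it must violate unanimity, completing the argument.
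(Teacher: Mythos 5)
Your proof of the equivalence is exactly the paper's argument: the corollary follows immediately by chaining Theorem~\ref{thm: no need anon Freeman} ((i)~$\iff$~(iv)), Corollary~\ref{thm: main characterization SP, UFS} ((ii)~$\iff$~(iv)), and Corollary~\ref{thm: main characterization SP, PF} ((iii)~$\iff$~(iv)), with no further work needed. Your additional tightness verification---using the Nash mechanism, the median mechanism, and the Proposition~\ref{prop: cont needed} mechanism (whose failure of unanimity you correctly deduce by contradiction from Theorem~\ref{thm: no need anon Freeman})---is correct and in fact substantiates a claim the paper asserts in the surrounding text without proof.
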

\medskip{}
A perhaps interesting implication of Corollary~\ref{corr: ewuiv} is that, although combining proportionality and unanimity is a strictly weaker concept than UFS, when combined with strategyproofness the UFS concept is equivalent to requiring both proportionality and unanimity. Similarly, the UFS concept is  strictly weaker concept than PF but, when combined with strategyproofness, PF is equivalent to UFS.

\paragraph{Comparing our results with \citet{FPPV21}.} The Uniform Phantom mechanism appears in~\citet{FPPV21}. \citeauthor{FPPV21}'s Proposition 1 shows that a mechanism is continuous, anonymous, proportional, and strategyproof if and only if it is the Uniform Phantom mechanism. Equivalently, by~\citeauthor{BoJo83}'s (\citeyear{BoJo83})  Corollary 1, \citeauthor{FPPV21}'s characterization holds if continuity is replaced with unanimity.  Our results complement \citeauthor{FPPV21}'s characterization. Firstly, we have shown (in Appendix~\ref{section:prop: cont needed}) that continuity (equivalently, unanimity) is essential for \citeauthor{FPPV21}'s characterization. Secondly, our Theorem~\ref{thm: no need anon Freeman} shows that the anonymity requirement can be removed. 
Studying a slightly different setting, where agents have single-peaked and (possibly) asymmetric preferences, ~\citet{JLPV21} show that neither continuity nor anonymity is required for  \citeauthor{FPPV21}'s characterization. The necessity of unanimity in Theorem~\ref{thm: no need anon Freeman} clarifies a key difference with the setting of symmetric preferences: continuity is required.   Finally,  we provide a more general analysis of fairness axioms in facility location problems and show that the Uniform Phantom mechanism is the unique strategyproof mechanism that satisfies  different combinations of these fairness axioms (Corollary~\ref{corr: ewuiv}).

%%%%%%%%%%%%%%%%%%%%%%%%%%%%%% %%%%%%%%%%%%%%%%%
  
  \section{Equilibria of non-strategyproof,  UFS mechanisms.}\label{section: equilibrium properties}

%%%%%%%%%%%%%%%%%%%%%%%%%%%%%% %%%%%%%%%%%%%%%%%

We now explore the equilibrium properties of non-strategyproof mechanisms. We begin with some terminology.  Given two profiles of locations $\boldsymbol{x}\in[0,L]^n$ and $\boldsymbol{x}'\in[0,L]^n$, we say $\boldsymbol{x}<\boldsymbol{x}'$ if and only if
  $x_i\le x_i'$ for all $i\in N$ and $x_i<x_i'$ for some $i\in N$.  We say a mechanism $f$ is  \emph{strictly monotonic} if 
  $$f(\boldsymbol{x})< f(\boldsymbol{x}') \text{ for all $\boldsymbol{x}<\boldsymbol{x}'$.}$$
  An example of a strictly monotonic mechanism is the ``Average'' mechanism $f_{\text{avg}}(\boldsymbol{x}):=\frac{1}{n}\sum_{i\in N} x_i.$  The Average mechanism is also continuous and satisfies UFS (see Proposition~\ref{prop: average mech satisfies PF} in Appendix~\ref{Section: av mech}). It is clearly not strategyproof. In contrast, the Uniform Phantom mechanism is not strictly monotonic.

Perhaps surprisingly, Theorem~\ref{thm: implementation of UFS} says that the pure Nash equilibrium of any continuous, strictly monotonic, and UFS mechanism has the facility located at the same position as would have been attained by the (strategyproof) Uniform Phantom mechanism. Therefore, in the equilibrium  outcome of such mechanisms, UFS with respect to the agents' true location is satisfied---even if agents misreport their location in equilibrium. This provides an alternative characterization of the Uniform Phantom  mechanism as the equilibrium outcome of any continuous, strictly monotonic, and UFS mechanism.

 To guarantee the existence of a pure Nash equilibrium in Theorem~\ref{thm: implementation of UFS}, we require that each agent's utility function $u_i$ is continuous. Given that, in  our model, each agent's utility function is symmetrically single-peaked, assuming that each $u_i$ is continuous does not affect the set of preferences that are admissible: every symmetrically single-peaked preference on $X$ can be induced by a continuous utility function on $X$.
 
\medskip{}
\begin{theorem}\label{thm: implementation of UFS}
Suppose each agent's utility function $u_i$ is continuous,  and suppose the mechanism $f$ is continuous, strictly monotonic, and satisfies UFS. There exists a pure Nash equilibrium. Furthermore, for every profile of the agents' (true) locations $\boldsymbol{x}$ and every pure Nash equilibrium $\boldsymbol{x}^*$, the equilibrium facility location equals  the facility location of the Uniform Phantom mechanism when agents report truthfully: $f(\boldsymbol{x}^*)=f_{Unif}(\boldsymbol{x})$.
\end{theorem}
\medskip{}

\begin{proof}%[Proof of Theorem~\ref{thm: implementation of UFS}]
The existence of a pure Nash equilibrium follows from~\citep{Debr52,Glic52,Fan52}.  For completeness and following the arguments provided in~\cite{Oz10}, we provide a brief   sketch of the argument.  Naturally, the problem   reduces to the existence of a fixed point solution  to a correspondence $B$ that maps each element of $[0,L]^n$ to a set within $[0,L]^n$. The correspondence $B$ is constructed using each agent's best response correspondence $B_i$, which maps each element of $[0,L]^{n-1}$ to a (non-empty) set within $[0,L]$. Each agent's best response correspondence is well-defined by Weierstrass' Extreme Value theorem---this theorem is applicable because  each agent's utility function $u_i$ is continuous on $[0,L]$.   In this setting, the existence of a fixed point solution is guaranteed by Kakutani's theorem but it requires that $B$ is a convex-valued correspondence and $B$ has a closed graph. The argument for $B$ having a closed graph follows from the standard argument used to prove that every finite game has a mixed strategy Nash equilibrium. The convexity of $B$ follows because each agent's utility function $u_i(f(x_i',\boldsymbol{x}_{-i}'))$ is quasi-concave in their report $x_i'$, which, in turn, follows because $u_i$ is single-peaked and $f$ is continuous and strictly monotonic.

 Now let $\boldsymbol{x}$ be a profile of the agents' (true)  locations, and  let $\boldsymbol{x}^*$ be a pure Nash equilibrium of $f$. Denote by $s_{\text{unif}}:=f_{\text{unif}}(\boldsymbol{x})$  the facility location under the Uniform Phantom mechanism when agents report truthfully. We wish to prove that $f(\boldsymbol{x}^*)=s_{\text{unif}}$. We consider two cases.

\paragraph{Case 1.} Suppose $s_{\text{unif}}=kL/n$ for some $k\in \{0, \ldots, n\}$. By construction of the Uniform Phantom mechanism, it must be that at least $n-k$ agents have true location (weakly) below $s_{\text{unif}}$ and  at least $k$ agents have true location (weakly) above. Now, for the sake of a contradiction, suppose that $f(\boldsymbol{x}^*)<s_{\text{unif}}=kL/n$ (the reverse inequality is treated similarly and therefore is omitted). Notice that there are at least $k$ agents with true location strictly above than $f(\boldsymbol{x}^*)$; let 
$N':=\{ i \in N \ : \ f(\boldsymbol{x}^*)<x_i \}$. If $x_i^*=L$ for all $i\in N'$,  then $f(\boldsymbol{x}^*)\ge kL/n$ (since $f$ satisfies UFS)---a contradiction because $f(\boldsymbol{x}^*)<s_{\text{unif}}=kL/n$. Therefore, $x_i^*<L$ for some agent $i\in N''$. But then $\boldsymbol{x}^*$ cannot be an equilibrium:  agent $i$ can profitably deviate by reporting some $x_i'\in (x_i^*, L]$, which---due to continuity and strict monotonicity of $f$---increases the facility location. 

\paragraph{Case 2.} Suppose $s_{\text{unif}}\in (\frac{kL}{n}, \frac{(k+1)L}{n})$ for some $k\in \{0, \ldots, n-1\}$. By construction of the Uniform Phantom mechanism, it must be that at least $n-k$ agents have true location (weakly) below $s_{\text{unif}}$ and at least $k+1$ agents have true location (weakly) above---note that there are at least $k+1$ agents weakly above $s_{\text{unif}}$ because at least  one agent is located at exactly $s_{\text{unif}}$. Now, for the sake of a contradiction, suppose that $f(\boldsymbol{x}^*)<s_{\text{unif}}$ (the reverse inequality is treated similarly and therefore is omitted). Notice that there are at least $k+1$ agents with location strictly above  $f(\boldsymbol{x}^*)$; let 
$N'':=\{ i \in N \ : \ f(\boldsymbol{x}^*)<x_i \}$. If   $x_i^*=L$  for all $i\in N''$,  then $(k+1)L/n\le f(\boldsymbol{x}^*)$ (since $f$ satisfies UFS)---a contradiction because  $f(\boldsymbol{x}^*)<s_{\text{unif}}\in (\frac{kL}{n}, \frac{(k+1)L}{n})$. Therefore, $x_i^*<L$ for some $i\in N''$. But $\boldsymbol{x}^*$ cannot be an equilibrium: agent $i$ can profitably deviate by reporting  some $x_i'\in (x_i^*, L]$, which---due to continuity and strict monotonicity of $f$---increases the facility location. 
 \end{proof}
\medskip{}
We remark that in a slightly different setting, where agents have single-peaked (and possibly asymmetric) preferences, ~\citet{YaKa13a} provide a general characterization of the equilibrium outcome of  anonymous, continuous, strictly monotonic, and unrestricted-range mechanisms. Although~\citeauthor{YaKa13a}'s results do not formally apply to our setting and do not focus on issues of fairness, our Theorem~\ref{thm: implementation of UFS} is  consistent with their characterization.

An immediate corollary of Theorem~\ref{thm: implementation of UFS} is that the equilibrium outcome of any continuous, strictly monotonic, and UFS mechanism satisfies UFS with respect to the agents' true locations. 
\medskip{}
\begin{corollary}
Suppose each agent's utility function $u_i$ is continuous,   and suppose $f$ is continuous, strictly monotonic, and satisfies UFS.  The output of every (pure) Nash equilibrium of $f$ satisfies UFS with respect to the agents' true location profile. 
\end{corollary}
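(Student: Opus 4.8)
The plan is to derive this corollary essentially for free by chaining \thmref{thm: implementation of UFS} with \coref{cor: uniform is SP and UFS}. First I would fix an arbitrary true location profile $\boldsymbol{x}$ and let $\boldsymbol{x}^*$ denote any pure Nash equilibrium of $f$ (one exists by the theorem). The whole point is to evaluate the UFS condition against the \emph{true} profile $\boldsymbol{x}$, not against the reported equilibrium profile $\boldsymbol{x}^*$, and the theorem is phrased precisely to supply this: it states $f(\boldsymbol{x}^*)=f_{\text{Unif}}(\boldsymbol{x})$, i.e.\ the equilibrium facility location coincides with the Uniform Phantom location computed on truthful reports.

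Next I would invoke \coref{cor: uniform is SP and UFS}, which says the Uniform Phantom mechanism satisfies UFS. Unpacking the meaning of that statement, for \emph{every} profile $\boldsymbol{z}$ the location $f_{\text{Unif}}(\boldsymbol{z})$ satisfies the UFS condition relative to $\boldsymbol{z}$. Applying this with $\boldsymbol{z}=\boldsymbol{x}$ gives that $f_{\text{Unif}}(\boldsymbol{x})$ satisfies UFS with respect to $\boldsymbol{x}$. Since $f(\boldsymbol{x}^*)=f_{\text{Unif}}(\boldsymbol{x})$ is literally the same point in $X$, the equilibrium output $f(\boldsymbol{x}^*)$ satisfies UFS with respect to the true profile $\boldsymbol{x}$. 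As $\boldsymbol{x}^*$ was an arbitrary pure equilibrium, this holds for every pure equilibrium, completing the argument.

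There is no genuine obstacle here: all the substantive work has already been carried out in the proof of \thmref{thm: implementation of UFS}, whose two-case deviation argument establishes that every equilibrium lands exactly on the Uniform Phantom location. The corollary is then a one-line consequence of the fact that UFS is a property attached to a (location, profile) pair and that the two mechanisms agree on that location for the true profile. The only point I would be careful to state explicitly is that UFS is assessed against agents' true locations $\boldsymbol{x}$ rather than their possibly-misreported equilibrium locations $\boldsymbol{x}^*$ — which is exactly the content the theorem delivers.
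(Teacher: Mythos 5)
Your proposal is correct and follows exactly the route the paper intends: Theorem~\ref{thm: implementation of UFS} gives $f(\boldsymbol{x}^*)=f_{\text{Unif}}(\boldsymbol{x})$ for every pure equilibrium $\boldsymbol{x}^*$, and Corollary~\ref{cor: uniform is SP and UFS} (the Uniform Phantom mechanism satisfies UFS) then yields that this location satisfies UFS with respect to the true profile $\boldsymbol{x}$. The paper treats this as an immediate consequence and offers no separate proof, so your write-up, including the care you take to evaluate UFS against $\boldsymbol{x}$ rather than $\boldsymbol{x}^*$, matches its reasoning.
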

\medskip{}

Another corollary of Theorem~\ref{thm: implementation of UFS} is that the equilibrium outcome of the average mechanism coincides with the facility location of the Uniform Phantom mechanism when agents report truthfully. In a slightly different setting, where  agents have single-peaked (and possibly asymmetric) preferences,~\citet{ReTr05a} obtain the same result~\citep[see also][]{ReTr11a}. 
\medskip{}
\begin{corollary}
Suppose each agent's utility function $u_i$ is continuous,  and every (pure) Nash equilibrium of the average mechanism coincides with the facility location of the Uniform Phantom mechanism when agents report truthfully.
\end{corollary}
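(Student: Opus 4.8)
The plan is to obtain the corollary as an immediate application of Theorem~\ref{thm: implementation of UFS}. That theorem already does all the work: it shows that for \emph{any} mechanism that is continuous, strictly monotonic, and satisfies UFS, every pure Nash equilibrium yields the same facility location as the Uniform Phantom mechanism evaluated at the truthful profile. Hence it suffices to verify that the Average mechanism $f_{\text{avg}}(\boldsymbol{x})=\frac{1}{n}\sum_{i\in N}x_i$ meets these three hypotheses, and then specialize the theorem to $f=f_{\text{avg}}$.

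First I would record continuity: $f_{\text{avg}}$ is a fixed linear combination of the coordinates of $\boldsymbol{x}$ and is therefore continuous on $X^n$. Next, strict monotonicity follows directly from the definition of the order on profiles: if $\boldsymbol{x}<\boldsymbol{x}'$, then $x_i\le x_i'$ for every $i\in N$ with $x_j<x_j'$ for at least one $j$, so $\sum_{i\in N}x_i<\sum_{i\in N}x_i'$ and thus $f_{\text{avg}}(\boldsymbol{x})<f_{\text{avg}}(\boldsymbol{x}')$. This is exactly the observation already noted immediately after the definition of $f_{\text{avg}}$. Finally, the fact that $f_{\text{avg}}$ satisfies UFS is established by Proposition~\ref{prop: average mech satisfies PF} in Appendix~\ref{Section: av mech}.

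With the three hypotheses verified, Theorem~\ref{thm: implementation of UFS} applies verbatim to $f=f_{\text{avg}}$ and gives the claim. I expect no genuine obstacle here: the entire analytic content resides in Theorem~\ref{thm: implementation of UFS} (whose proof exploits continuity and strict monotonicity to construct a profitable upward deviation whenever the equilibrium location falls short of the Uniform Phantom location, and symmetrically from above) and in the appendix proposition verifying UFS for $f_{\text{avg}}$. The only point requiring even a line of justification is strict monotonicity, which is immediate from the definition of $<$ on profiles. One small caveat worth flagging is that the corollary quantifies over \emph{every} equilibrium, not merely an existing one; this causes no difficulty, since Theorem~\ref{thm: implementation of UFS} is itself phrased for an arbitrary pure equilibrium $\boldsymbol{x}^*$ of $f$.
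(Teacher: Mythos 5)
Your proposal is correct and follows exactly the paper's route: the paper presents this statement as an immediate consequence of Theorem~\ref{thm: implementation of UFS}, having already noted that the Average mechanism is continuous, strictly monotonic, and satisfies UFS (via Proposition~\ref{prop: average mech satisfies PF}, since PF implies UFS). Your verification of the three hypotheses and the specialization $f=f_{\text{avg}}$ is precisely the intended argument.
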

\medskip{}
Unfortunately,  Theorem~\ref{thm: implementation of UFS} cannot be applied to the Nash mechanism's equilibrium outcome since the Nash mechanism (defined in  (\ref{eq: nash mechanism L})) is not strictly monotonic. This can be illustrated via a simple example with 3 agents.
Taking $L=1$, the Nash mechanism maps the location profiles $\boldsymbol{x}=(0, 0.5, 0.9)$ and $\boldsymbol{x}'=(0, 0.5, 1)$ to $0.5$. However,  strict monotonicity requires that $\boldsymbol{x}'$ be mapped to a location strictly higher than $0.5$.

%%%%%%%%%%%%%%%%%%%%%%%%%%%%%% %%%%%%%%%%%%%%%%%

\section{Approximation results}

In this section, we explore the performance of strategyproof and fair mechanisms with respect to two objectives: \emph{total cost minimization} and \emph{welfare maximization}. Rather than make distributional assumptions, we measure the performance of these mechanisms by their worst-case performance over the domain of preference profiles (equivalently, agent locations).

\subsection{Total cost minimization}\label{sec: approx cost} 

A common objective in facility location problems  is to minimize  the total cost of agents: $\sum_{i=1}^n d(y,x_i)$~\citep[see, e.g.,][]{ACLL+20,PrTe13}.  Given a profile of agent locations, $\boldsymbol{x}$ and facility location $y$, we define the
{\em optimal cost} by $\Psi^*(\boldsymbol{x}):=\min_{y\in X} \sum_{i=1}^n d(y,x_i),$ and given a mechanism $f$, let $\Psi_f( \boldsymbol{x})$ denote the total cost attained by the mechanism, i.e., $\Psi_f(\boldsymbol{x}):=\sum_{i=1}^n d( f(\boldsymbol{x}), x_i)$. The mechanism $f$ is a (total cost) $\alpha$-approximation if
\begin{align}\label{Equation: approx 1-cost}
\max_{\boldsymbol{x}\in X^n}\Bigg\{\frac{\Psi_f(\boldsymbol{x})}{\Psi^*(\boldsymbol{x})}\Bigg\}= \alpha.
\end{align}
Notice that $\alpha\ge 1$ for all mechanisms $f$. We refer to a mechanism $f$ with (total cost) $1$-approximation ratio as a \emph{{total cost-optimal} mechanism}.   

We begin by defining the median mechanism, which is known to minimize total cost and, hence, in (\ref{Equation: approx 1-cost}), has a 1-approximation ratio~\citep{PrTe13}.
\medskip{}
\begin{definition}[Median mechanism]\label{def: median mech}
The median mechanism locates the facility at the median of all agents' locations. If there are an even number of agents, the facility is placed at the leftmost of the two middle agent locations.
\end{definition}
 \medskip{}

In addition to being the total cost-optimal mechanism, the median mechanism is strategyproof, anonymous, Pareto efficient, and satisfies unanimity. However, it does not satisfy our weakest notions of proportionality-based fairness: IFS or proportionality. 

Proposition~\ref{prop: IFS bound cost} provides a stark negative result. Any mechanism that is strategyproof, anonymous, unanimous and satisfies IFS has total cost approximation of exactly $n-1$, which is unbounded as $n$ grows large.  

 \medskip{}

\begin{proposition}\label{prop: IFS bound cost} 
 Any strategyproof, anonymous, unanimous mechanism that satisfying IFS has a total cost approximation of  $n-1$. As $n\rightarrow \infty$, this approximation is unbounded.
\end{proposition}
\medskip{}

Proposition~\ref{prop: IFS bound cost} implies that, on the basis of total cost approximation, there is no difference between any of the mechanism characterized in Sections~\ref{section:char1} and~\ref{section:char2}. This suggests the need for an alternative (or additional ``tie-breaking'') performance measure that is more sensitive to proportionality-based fairness axioms. In the next subsection, we adopt an alternative performance that appears in the literature and allows for a more nuanced analysis.  

\subsection{Welfare maximization}\label{sec: approx} 

Within but also beyond facility location problems, a common objective in collective decision-making is to maximize (utilitarian or social) welfare.  Given a profile of locations $\boldsymbol{x}$ and a facility location $y$, the (utilitarian or social) \emph{welfare}    is   defined as the sum of the utilities of the agents: $\sum_{i=1}^n u_i(y)$. In our setting, agents' utility functions are unknown by the mechanism designer---in fact, it is impossible for the mechanism designer to elicit more information about agents' utilities than their peak location without violating strategyproofness~\citep[see, e.g.,][]{BaJa94,Weym08}. Therefore, it is necessary to assume a specific functional form as a proxy of agents' utilities (alternatively, one may simply assume that agents' utilities are all of a specific functional form). Importantly,  this functional form must be non-negative to have a well-defined welfare-maximization approximation problem (as will be described by~(\ref{Equation: approx 1})); this requirement rules out the functional form $-d(y, x_i)$ that appeared in Section~\ref{sec: approx cost}. Notice that the total cost minimization problem (\ref{Equation: approx 1-cost}) can equivalently be described as minimizing the total social 
\emph{disutility},  $\sum_{i=1}^n -u_i(y)$, when each agent is assumed to have utility function $u_i(y)=-d(y, x_i)$.

 We focus on the following functional form:   
\begin{align}\label{eq: welfare ob}
\sum_{i=1}^n u_i(y):=\sum_{i=1}^n \big(L-d(y,x_i)\big),
\end{align}
which  appears in other facility location papers~\cite[see, e.g.,][]{AnDe18,ACLP20,DeFRV23,ZoMi15}. One could consider alternative linear utility functions, such as $u_i(y)=L' -d(y,x_i)$ with $L'>L$, this will always lead to a welfare approximation ratio (to be defined in (\ref{Equation: approx 1})) that is strictly less than that obtained with $L'=L$. Therefore, our choice of $L'=L$ is the most conservative among this family of linear utility functions. 

% BARTON SAYS: Among linear functions K-d for K\ge L. I think K=L is the most conservative choice???? 
 
We explore the performance of strategyproof and fair mechanisms with respect to \emph{welfare maximization} of (\ref{eq: welfare ob}).    Given a profile of agent locations, $\boldsymbol{x}$ and facility location $y$, we define the
{\em optimal welfare} by $\Phi^*(\boldsymbol{x}):=\max_{y\in X} \sum_{i=1}^n (L-d(y,x_i)),$ and given a mechanism $f$, let $\Phi_f( \boldsymbol{x})$ denote the welfare attained by the mechanism, i.e., $\Phi_f(\boldsymbol{x}):=\sum_{i=1}^n (L-d(f(\boldsymbol{x}),x_i))$. The mechanism $f$ is a (welfare) $\alpha$-approximation if
\begin{align}\label{Equation: approx 1}
\max_{\boldsymbol{x}\in X^n}\Bigg\{\frac{\Phi^*(\boldsymbol{x})}{\Phi_f(\boldsymbol{x})}\Bigg\}= \alpha.
\end{align}
Notice that $\alpha\ge 1$ for all mechanisms $f$. We refer to a mechanism $f$ with (welfare) $1$-approximation ratio as a \emph{{welfare-optimal} mechanism}.    Before proceeding to our analysis, we discuss briefly the distinction between the total cost minimization and welfare maximization approximation problems.

%Another common objective in collective decision making is to minimize the total cost (i.e., $\sum_{i\in N} d(y, x_i)$). 

\paragraph{Total cost minimization vs welfare maximization.} Minimizing the total cost (Section~\ref{sec: approx cost})  and maximizing welfare, as in (\ref{eq: welfare ob}), are equivalent optimization problems. Indeed, the total cost objective function is a simple translation of the welfare objective function. Therefore, both problems have the same ``optimal'' mechanism: the median mechanism (Definition~\ref{def: median mech}), which is strategyproof, anonymous, Pareto efficient, and  unanimous  but   does not satisfy  IFS or proportionality.   However, in general, when considering approximately-optimal mechanisms, the welfare approximation ratio  of a mechanism (\ref{Equation: approx 1}) will not  equal  the total cost approximation ratio (\ref{Equation: approx 1-cost}). Indeed, the total cost approximation analysis in Section~\ref{sec: approx cost} led to a stark negative result. As will be shown, focusing on  our welfare maximization objective (\ref{eq: welfare ob}) allows for a more nuanced evaluation of the performance of various mechanisms and  a clearer analysis of the tradeoffs imposed by our proportionality-based fairness axioms for welfare maximization.  

The key distinction between the total cost approximation and welfare approximation can be intuitively understood by considering instances that might generate a large approximation ratio. In the welfare formulation,  the denominator in the  ratio (\ref{Equation: approx 1}) is  the total welfare generated by the mechanism $f$. This denominator is small if the mechanism locates the facility far away from many agents. In the case of the optimal median mechanism, welfare is minimized when half of the agents are located at each extreme location. In contrast, in the total cost formulation, the denominator  (\ref{Equation: approx 1-cost}) is the total cost generated by the optimal (median) mechanism. This denominator is zero or close to zero if all agents are closely located.  Therefore, the total cost approximation analysis places greater weight on instances where the optimal median mechanism may achieve a perfect or near-perfect solution with total cost approximately zero. Whereas the welfare approximation analysis may be viewed as more egalitarian: it places greater weight on instances where a mechanism  generates very little welfare, perhaps because many agents are located at opposite extremes. A priori both approximation approaches appear useful and neither appears more desirable than the other. However, given the stark  total cost approximation results in Section~\ref{sec: approx cost} that fails to differentiate between various families of strategyproof and proportionally fair mechanisms, the welfare approximation approach is a useful additional performance measure---even if only used as a tie-breaking rule. \\

%\bedit{The key distinction between the total cost approximation and welfare approximation can be intuitively understood by considering instances that generate large (unbounded) changes in each of the approximation ratios. First, consider an instance where the facility is located at an extreme location, say $L$, and an agent is located at the other extreme, $0$.  In the welfare formulation of (\ref{eq: welfare ob}), this agent obtains the minimum utility 0 and, therefore, moving the facility an $\varepsilon$ closer to the agent increases their utility by an unbounded factor. However, in terms of the agent's cost, this $\varepsilon$ movement of the facility has decreased their cost by only a small factor ($(L-\varepsilon)/L$). Second, consider an instance where some agent is located at the facility location.  In the welfare formulation of (\ref{eq: welfare ob}), this agent obtains the maximum utility $L$ and, therefore, moving the facility an $\varepsilon$ away from the agent decreases their utility by only a small factor ($(L-\varepsilon)/L$). However, in terms of the agent's cost, this $\varepsilon$ movement of the facility has increased their cost by an unbounded factor.  Therefore, the total cost approximation analysis places greater weight on instances where the optimal median mechanism may achieve a perfect or close-to-perfect solution (total cost$\approx0$). Whereas the welfare approximation analysis places greater weight on instances where the optimal median mechanism may locate the facility far away from some agents .}\\

We now proceed to our analysis. Lemma~\ref{lemma: IFS bound} provides a welfare approximation lower bound for mechanisms that satisfy IFS.
 \medskip{}

\begin{lemma}\label{lemma: IFS bound} 
 Any mechanism satisfying IFS has a welfare approximation of at least $1+\frac{n-2}{n^2-2n+2}$. As $n\rightarrow \infty$, this lower bound approaches $1$.
\end{lemma}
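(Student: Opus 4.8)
The plan is to establish the bound with a single worst-case instance. Because the approximation ratio is a maximum over all profiles, and because any IFS mechanism $f$ must output an IFS-feasible facility location on \emph{every} profile, it suffices to exhibit one profile $\boldsymbol{x}$ such that \emph{every} IFS-feasible location $y$ yields welfare bounded above by a fixed quantity, while $\Phi^*(\boldsymbol{x})$ remains large. Evaluating any IFS mechanism at this $\boldsymbol{x}$ then forces a ratio at least as large as the target, which lower-bounds $\alpha$; no further property of the mechanism is used, which is exactly why the bound holds for all IFS mechanisms.

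The instance I would use places $n-1$ agents at location $0$ and a single agent at location $1$. Here the median sits at $0$, so the welfare-optimal location is $y=0$ with $\Phi^*(\boldsymbol{x})=(n-1)\cdot 1 + 0 = n-1$. The crucial feature is that the lone agent at $1$ activates the IFS constraint $d(y,1)\le 1-\tfrac1n$, i.e.\ $y\ge \tfrac1n$, so IFS forbids placing the facility at (or near) the utilitarian optimum at $0$.

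It then remains to quantify how the welfare decays as $y$ moves rightward. Writing welfare as a function of the facility location gives $\Phi_f(\boldsymbol{x}) = n-\big[(n-1)y+(1-y)\big] = (n-1)-(n-2)y$, which is affine and, for $n>2$, strictly decreasing on $[0,1]$. Hence over the IFS-feasible range its value is largest at $y=\tfrac1n$, yielding the uniform cap $\Phi_f(\boldsymbol{x})\le (n-1)-\tfrac{n-2}{n}=\tfrac{n^2-2n+2}{n}$ for every feasible $y$. Dividing, every IFS-feasible location produces a ratio of at least $\tfrac{n-1}{(n^2-2n+2)/n}=\tfrac{n(n-1)}{n^2-2n+2}=1+\tfrac{n-2}{n^2-2n+2}$, and the limit statement follows since the last fraction vanishes as $n\to\infty$.

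The calculation is entirely routine once the instance is fixed, so the only real work is choosing the profile. The guiding intuition is to maximize the tension between the utilitarian optimum and IFS-feasibility: concentrating all but one agent at an endpoint pins the optimum at that endpoint, while a single isolated agent at the opposite endpoint pushes the IFS-feasible region a full $\tfrac1n$ away. I expect the only (minor) subtlety to be that the welfare cap must hold for \emph{all} feasible $y$ rather than just the welfare-maximizing one; this is immediate from the monotonicity of the affine welfare function, which converts the IFS lower bound $y\ge\tfrac1n$ directly into an upper bound on $\Phi_f$.
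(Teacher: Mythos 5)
Your proof is correct and follows essentially the same route as the paper: the identical worst-case instance ($n-1$ agents at $0$, one agent at $1$), the same optimal welfare $\Phi^*(\boldsymbol{x})=n-1$, and the same ratio computation yielding $\frac{n(n-1)}{n^2-2n+2}=1+\frac{n-2}{n^2-2n+2}$. If anything, your treatment is slightly more careful than the paper's, which asserts that IFS \emph{requires} $f(\boldsymbol{x})=\tfrac{1}{n}$ when in fact IFS only forces $f(\boldsymbol{x})\in[\tfrac{1}{n},1-\tfrac{1}{n}]$; your observation that the affine welfare function is nonincreasing in $y$ correctly converts the constraint $y\ge\tfrac{1}{n}$ into a welfare cap over the entire feasible interval, so the bound applies to every IFS mechanism.
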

\medskip{}
% \begin{proof}
%Proof in Appendix~\ref{section:lemma: IFS bound}.
%\end{proof}

We now provide an example of an IFS mechanism, which we call the Constrained Median mechanism,  that obtains the welfare approximation of Lemma~\ref{lemma: IFS bound}. 
The Constrained Median mechanism locates the facility at the median location whenever the median  location lies in the interval $[L/n, L(1-1/n)]$. When the median location is below $L/n$ (resp., above $L(1-1/n)$), the facility is located at the minimum of $L/n$ and maximum-agent report (resp., maximum of $L(1-1/n)$ and the minimum-agent report).  Definition~\ref{def: constrained med} provides a formal definition, and Figure~\ref{figure:rules} provides an illustration of the mechanism.

 \medskip{}

\begin{definition}[Constrained Median]\label{def: constrained med}
The Constrained Median mechanism $f_{\text{CM}}$ is a phantom mechanism that places $\ceil{\frac{n-1}{2}}$ phantoms at $L/n$ and the remaining phantoms at $L(1-\frac{1}{n})$. 
\end{definition}
		
\medskip{}

Theorem~\ref{thm: IFS bound} says that the Constrained Median mechanism obtains the best welfare approximation guarantee among all IFS mechanisms, including non-strategyproof mechanisms. Furthermore, the Constrained Median mechanism can easily be seen to not only satisfy IFS but also to be strategyproof, anonymous, and unanimous (Theorem~\ref{char: IFS and SP}).
 \medskip{}
\begin{theorem}\label{thm: IFS bound}
Among all IFS mechanisms, the Constrained Median  mechanism provides the best welfare approximation guarantee, i.e.,  it achieves the approximation ratio in Lemma~\ref{lemma: IFS bound}.
\end{theorem}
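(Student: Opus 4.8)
The plan is to prove Theorem~\ref{thm: IFS bound} in two parts: first establish that the Constrained Median mechanism satisfies IFS (so that Lemma~\ref{lemma: IFS bound} applies to it as a lower bound), and second show that its worst-case welfare approximation ratio exactly matches the bound $1+\frac{n-2}{n^2-2n+2}$ from Lemma~\ref{lemma: IFS bound}. The auxiliary claims about strategyproofness, anonymity, and unanimity follow immediately from Theorem~\ref{char: IFS and SP}, since $f_{\text{CM}}$ is a Phantom mechanism with all phantoms located in $[\frac{1}{n}, 1-\frac{1}{n}]$ (indeed, exactly at the two endpoints); I would dispatch these in one sentence at the end.

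For the IFS membership, I would simply observe that $f_{\text{CM}}$ places all its phantoms at $\frac{1}{n}$ or $1-\frac{1}{n}$, hence the facility is always confined to $[\frac{1}{n}, 1-\frac{1}{n}]$ unless all agents lie strictly outside this interval on one side — exactly the situation analyzed in the backwards direction of Theorem~\ref{char: IFS and SP} — so IFS holds. Given IFS, Lemma~\ref{lemma: IFS bound} already guarantees the approximation ratio is \emph{at least} $1+\frac{n-2}{n^2-2n+2}$. The substantive content is therefore the matching upper bound: I must show the worst-case ratio $\max_{\boldsymbol{x}}\Phi^*(\boldsymbol{x})/\Phi_{f_{\text{CM}}}(\boldsymbol{x})$ does not exceed this quantity.

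First I would note that the welfare-maximizing location is the median, so $\Phi^*(\boldsymbol{x})$ is the welfare at the median, and $f_{\text{CM}}$ agrees with the median whenever the median lies in $[\frac{1}{n}, 1-\frac{1}{n}]$ — in that regime the ratio equals $1$. The worst case must therefore occur when the true median is pushed outside $[\frac{1}{n}, 1-\frac{1}{n}]$, say below $\frac{1}{n}$ (the other side is symmetric), forcing $f_{\text{CM}}$ to snap the facility to $\frac{1}{n}$ while the optimum stays at the median. I would then parametrize the loss: writing $y^*$ for the median and $y_{\text{CM}}=\frac{1}{n}$, the welfare difference is $\Phi^*-\Phi_{f_{\text{CM}}}=\sum_i\big(d(y_{\text{CM}},x_i)-d(y^*,x_i)\big)$, which by the triangle inequality is controlled by how many agents lie on each side of the displacement and the total displacement $|y_{\text{CM}}-y^*|\le \frac{1}{n}$. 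The extremal configuration should be identifiable as the one used in the proof of Lemma~\ref{lemma: IFS bound}: place a bare majority of agents at $0$ (pinning the median just below $\frac{1}{n}$) with the remaining agents at $1$, then optimize the split; the algebra collapses to $1+\frac{n-2}{n^2-2n+2}$.

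The main obstacle I anticipate is the combinatorial/extremal argument pinning down the worst-case profile and verifying that no other configuration does worse. Because $f_{\text{CM}}$ coincides with the median on a large region, one must argue carefully that the ratio is maximized exactly when the median sits at the boundary of $[\frac{1}{n},1-\frac{1}{n}]$ and the off-median mass is arranged to maximize the welfare gap relative to the (small) denominator $\Phi_{f_{\text{CM}}}$; the denominator is itself shrinking in these extreme profiles, so both numerator growth and denominator shrinkage must be traded off. I expect this reduces to a single-variable optimization over the number $k$ of agents placed at the far extreme, and matching the closed form $1+\frac{n-2}{n^2-2n+2}$ will require checking that the maximizing integer $k$ yields precisely this value — most cleanly by reusing the tight instance already constructed in Lemma~\ref{lemma: IFS bound}, so that the upper and lower bounds meet at the same profile.
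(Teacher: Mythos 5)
Your setup is sound: IFS membership and the auxiliary properties of $f_{\text{CM}}$ do follow from Theorem~\ref{char: IFS and SP}, the lower bound does come for free from Lemma~\ref{lemma: IFS bound}, and you correctly identify both the ratio-$1$ regime (median inside $[\frac{1}{n},1-\frac{1}{n}]$) and the tight instance ($n-1$ agents at $0$, one at $1$). But the substantive content of the theorem is exactly the step you defer --- showing that \emph{no} profile in $[0,1]^n$ does worse than that binary instance --- and your proposal does not contain an argument for it, only the expectation that it ``reduces to a single-variable optimization over $k$.'' That reduction is the theorem. The paper's proof spends essentially all of its length here: starting from an arbitrary profile with $f_{\text{CM}}(\boldsymbol{x})\in[0,\frac{1}{n}]$, it constructs an explicit chain of modified profiles (move all agents strictly left of the median to $0$; move the median agent to $0$; then either push every agent weakly right of $\frac{1}{n}$ to $1$ and the rest to $0$, or, if even the maximum report lies below $\frac{1}{n}$, first move that agent up to $\frac{1}{n}$), and verifies at each step that the ratio weakly increases, using the elementary fact that $(x-a)/(y-a)\ge x/y$ for $a\ge 0$ and $0<y\le x$, together with steps in which the denominator provably falls by more than the numerator. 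Without this (or some equivalent extremality argument), the claim that the worst case sits at a binary profile is unproven.

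Two further points. First, your proposed mechanism for the upper bound --- controlling $\Phi^*-\Phi_{f_{\text{CM}}}$ by the displacement $|y_{\text{CM}}-y^*|\le\frac{1}{n}$ via the triangle inequality --- cannot by itself deliver the tight constant: it bounds the welfare \emph{difference}, but the worst-case profiles simultaneously shrink the denominator $\Phi_{f_{\text{CM}}}$, and a difference bound of the form $\Phi^*-\Phi_{f_{\text{CM}}}\le 1$ only yields $\Phi^*/\Phi_{f_{\text{CM}}}\le \Phi^*/(\Phi^*-1)$, which is not $1+\frac{n-2}{n^2-2n+2}$; the numerator growth and denominator shrinkage must be traded off jointly, which is precisely what the paper's profile-modification chain does. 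Second, a smaller slip: when the median is below $\frac{1}{n}$ the facility is not necessarily ``snapped to $\frac{1}{n}$'' --- it sits at $\min\{\frac{1}{n},\max_i x_i\}$, so when \emph{all} reports lie below $\frac{1}{n}$ the facility is at the maximum report. The paper handles this sub-case by a separate modification (moving the maximum report up to $\frac{1}{n}$, which weakly increases the ratio because the denominator drops by $(n-1)(\frac{1}{n}-x_n)$ while the numerator drops by only $\frac{1}{n}-x_n$); your sketch overlooks it.
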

\medskip{}
The intuition behind the  welfare  approximation ratio converging to $1$ is that as $n$ approaches infinity, the phantoms placed at $L/n$ (and $L(1-1/n)$) converge to $0$ (and $L$), and hence the Constrained mechanism mechanism converges to the median mechanism.

Lemma~\ref{lemma: UFS bound} provides a minimum  welfare approximation bound for mechanisms that satisfy UFS (or  proportionality or PF).
\medskip{}
\begin{lemma}\label{lemma: UFS bound}
Any mechanism satisfying UFS (or proportionality or PF) has a welfare approximation of at least 
\begin{align}\label{eq: lemma UFS lower bound}
\max_{k\in \mathbb{N} \ :\  0\le k\le n/2 } \frac{n(n-k)}{k^2+(n-k)^2}.
\end{align}
As $n\rightarrow \infty$, this lower bound approaches $\frac{\sqrt{2}+1}{2}\approx 1.207$.
\end{lemma}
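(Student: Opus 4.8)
The plan is to exhibit, for each integer $k$ with $0\le k\le n/2$, a single worst-case profile on which any \emph{proportional} mechanism is completely pinned down, and then simply read off the resulting approximation ratio. Since Proposition~\ref{prop: UFS implies unanimity} shows that both UFS and PF imply proportionality, it suffices to prove the bound for an arbitrary mechanism $f$ satisfying proportionality; the very same profile then witnesses the bound for UFS and PF mechanisms as well. This reduction to the weakest of the three axioms is the one conceptual step worth flagging at the outset.

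Fix $k$ and consider the profile $\boldsymbol{x}$ in which $k$ agents sit at location $1$ and the remaining $n-k$ agents sit at location $0$. All agents lie in $\{0,1\}$, so proportionality forces $f(\boldsymbol{x})=k/n$. I then compute the mechanism's welfare at this location: the $n-k$ agents at $0$ each obtain utility $1-k/n$ and the $k$ agents at $1$ each obtain utility $k/n$, so
$$\Phi_f(\boldsymbol{x})=(n-k)\Big(1-\tfrac{k}{n}\Big)+k\cdot\tfrac{k}{n}=\frac{(n-k)^2+k^2}{n}.$$
For the optimal welfare on the same profile, note that $\sum_{i} u(y,x_i)=(n-k)(1-y)+ky$ is linear in $y$ with slope $2k-n\le 0$ (using $k\le n/2$), hence maximized at $y=0$, giving $\Phi^*(\boldsymbol{x})=n-k$ (this is exactly the median outcome). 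Therefore the approximation ratio of $f$ is at least
$$\frac{\Phi^*(\boldsymbol{x})}{\Phi_f(\boldsymbol{x})}=\frac{n(n-k)}{k^2+(n-k)^2}.$$
As this holds for every admissible $k$, taking the maximum over $k\in\{0,1,\ldots,\lfloor n/2\rfloor\}$ yields the stated lower bound~\eqref{eq: lemma UFS lower bound}.

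For the asymptotic claim I pass to the continuous relaxation by writing $t=k/n\in[0,1/2]$ and analyzing $g(t)=\frac{1-t}{t^2+(1-t)^2}$. Differentiating, the numerator of $g'(t)$ equals $2t^2-4t+1$, whose unique root in $[0,1/2]$ is $t^\star=1-\tfrac{1}{\sqrt 2}$; substituting gives $g(t^\star)=\frac{\sqrt 2+1}{2}\approx 1.207$. Choosing $k$ to be the nearest integer to $t^\star n$ and invoking continuity of $g$ together with the fact that the grid $\{k/n\}$ becomes dense in $[0,1/2]$, the discrete maximum in~\eqref{eq: lemma UFS lower bound} converges to $g(t^\star)$ as $n\to\infty$.

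I do not expect a serious obstacle here: the heart of the argument is the single explicit extremal profile, and everything downstream is a direct welfare computation followed by a one-variable optimization. The only point requiring a little care is the last step—confirming that the discrete optimum over integer $k$ converges to the continuous optimum $g(t^\star)$—but this is routine given the continuity of $g$ and the density of the grid, so it will not drive the difficulty of the proof.
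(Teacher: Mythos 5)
Your proof is correct and takes essentially the same route as the paper's: the same extremal profile with $k$ agents at $1$ and $n-k$ at $0$, the same welfare computations yielding the ratio $\frac{n(n-k)}{k^2+(n-k)^2}$, and the same one-variable optimization of $g(t)=\frac{1-t}{2t^2-2t+1}$ at $t^\star=1-\frac{1}{\sqrt{2}}$. The only differences are cosmetic: you reduce all three axioms to proportionality up front (the paper argues directly from UFS, which also pins the facility at $k/n$ on this profile), and you spell out the discrete-to-continuous convergence slightly more carefully than the paper does.
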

\medskip{}
% \begin{proof}
%Proof in Appendix~\ref{section:lemma: UFS bound}.
%\end{proof}

We now show that the Uniform Phantom mechanism   obtains the welfare approximation of Lemma~\ref{lemma: UFS bound}. This means that the Uniform Phantom mechanism provides the best welfare approximation guarantee among all UFS (or proportional or PF) mechanisms, including non-strategyproof mechanisms. Furthermore, from Theorem~\ref{thm: no need anon Freeman}, we know that the Uniform Phantom mechanism has the added benefit of being strategyproof, anonymous,  and unanimous.

\medskip{}
\begin{theorem}\label{thm: UFS bound}
Among all UFS (or proportional or PF) mechanisms, the Uniform Phantom mechanism provides the best welfare approximation guarantee, i.e.,  it achieves the approximation ratio in Lemma~\ref{lemma: UFS bound}.
\end{theorem}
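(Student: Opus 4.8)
The plan is to combine Lemma~\ref{lemma: UFS bound} with a matching upper bound. Since the Uniform Phantom mechanism satisfies UFS (Corollary~\ref{cor: uniform is SP and UFS}), Lemma~\ref{lemma: UFS bound} already shows that its approximation ratio is at least the claimed value, so it remains to prove that every profile $\boldsymbol{x}$ satisfies $\Phi^*(\boldsymbol{x})/\Phi_{\text{Unif}}(\boldsymbol{x})\le \max_{0\le k\le n/2} n(n-k)/(k^2+(n-k)^2)$. Writing welfare as $n$ minus total cost, I would set $s:=f_{\text{Unif}}(\boldsymbol{x})$ and let $m$ be a median (a minimizer of total cost). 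Using the reflection $x\mapsto 1-x$, which fixes the phantom set and preserves both $\Phi^*$ and $\Phi_{\text{Unif}}$, I may assume without loss of generality that $s\ge m$. Two structural facts then drive everything: if $k:=|\{i:x_i>s\}|$ denotes the number of agents strictly right of the facility, then (i) there must be enough phantoms below $s$ for $s$ to remain the median of the augmented set, which forces $s\ge k/n$; and (ii) since $s\ge m$ and $m$ is a median, $k\le n/2$.

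The core of the argument is to show that the worst case is a two-point profile. I would reduce an arbitrary profile to one with all agents at $0$ or $1$ through a sequence of exchanges, each of which leaves $s$ and $m$ unchanged and decreases $\Phi^*$ and $\Phi_{\text{Unif}}$ by the \emph{same} amount; because the ratio is always at least $1$, the elementary fact that $a\ge b>0$ implies $(a-\delta)/(b-\delta)\ge a/b$ guarantees that every such move weakly increases the ratio. Concretely: (a) moving each agent with $x_i>s$ to $1$ raises both costs by $1-x_i$, since such an agent lies to the right of both $s$ and $m$; (b) moving each agent with $x_i<m$ to $0$ raises both costs by $x_i$; and (c) moving each agent with $m<x_i<s$ down to $m$ raises $\Phi^*$ and lowers $\Phi_{\text{Unif}}$, which also increases the ratio. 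In each case one checks that relocating an element already sitting strictly below (resp.\ above) both the profile-median and the augmented-median keeps $m$ and $s$ fixed, so the bookkeeping above is valid.

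The step I expect to be the main obstacle is collapsing the clusters that remain exactly at $m$ and exactly at $s$ onto the extremes, because relocating mass away from the augmented-median can shift $s$ itself, and relocating mass at $m$ can shift the profile-median. To finish I would note that after moves (a)--(c) all agents lie in $\{0,m,s,1\}$, and then either push the at-$s$ cluster to $1$ and the at-$m$ cluster to $0$ while carefully tracking the induced change in $s$ and $m$, or simply optimize the ratio directly over the finitely many parameters $(n_0,n_m,n_s,k,m,s)$ describing such four-point profiles and verify that the maximum occurs at $n_m=n_s=0$ and $m=0$. Either route yields that the extremal profile has $n-k$ agents at $0$ and $k$ agents at $1$ for some $k\le n/2$.

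Finally, for the two-point profile with $n-k$ agents at $0$ and $k$ agents at $1$ ($k\le n/2$), the median is $0$ so $\Phi^*=n-k$, while the Uniform Phantom places the facility at $s=k/n$, giving total cost $2k(n-k)/n$ and hence $\Phi_{\text{Unif}}=(k^2+(n-k)^2)/n$. The ratio is then exactly $n(n-k)/(k^2+(n-k)^2)$, and maximizing over $k$ reproduces the expression in Lemma~\ref{lemma: UFS bound}; together with the lower bound this shows the Uniform Phantom attains the optimal approximation among UFS (hence proportional and PF) mechanisms. Its strategyproofness, anonymity, and unanimity were already recorded in Corollary~\ref{cor: uniform is SP and UFS} and Theorem~\ref{thm: no need anon Freeman}.
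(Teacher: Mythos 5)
Your overall plan coincides with the paper's: take the lower bound from Lemma~\ref{lemma: UFS bound}, and prove a matching upper bound for the Uniform Phantom mechanism by transforming an arbitrary profile into a $\{0,1\}^n$ profile through moves that weakly increase the approximation ratio, finishing with the same two-point computation $n(n-k)/(k^2+(n-k)^2)$. Your structural facts ($s\ge k/n$ and $k\le n/2$) are correct, and your moves (a)--(c) correspond closely to the paper's first three modifications ($\boldsymbol{x}'$, $\boldsymbol{x}''$, $\boldsymbol{x}'''$), with the same $(a-\delta)/(b-\delta)\ge a/b$ bookkeeping.

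However, there is a genuine gap, and it sits exactly where you flag ``the main obstacle'': collapsing the clusters at $m$ and at $s$ is not a clean-up step but the heart of the argument, and neither of your two proposed routes is carried out. When the cluster at $s$ (together with everything weakly right of $s$) is pushed to $1$, the facility location itself moves weakly right, since it is the median of the agents and the phantoms; the ``both welfares change by the same amount'' accounting then fails, and one must separately show the ratio still increases. The paper's fourth modification does precisely this: the agents' movement decreases both welfares by $\sum_{i\in N''''}(1-x_i''')$, and the induced facility shift causes an \emph{additional} decrease $\Delta>0$ in the Uniform Phantom welfare alone, which is the favorable direction for the ratio; the sign of $\Delta$ rests on the observation that at that stage the number of agents at $0$ is weakly larger than the number of agents being sent to $1$ (in your notation, a consequence of $k\le n/2$). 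Likewise, moving the at-$m$ cluster to $0$ shifts the welfare-optimal location itself, and the paper handles this by ordering the moves carefully: first move only the median agent to $0$ (costing $\Phi^*$ the correction term $x_{med}\mathbb{I}_{n\text{ even}}$), and only then sweep the remaining agents strictly below $s$ to $0$, at which point $\Phi^*$ \emph{increases} while $\Phi_{\text{Unif}}$ decreases. Your fallback of optimizing the ratio over the six parameters $(n_0,n_m,n_s,k,m,s)$ is not performed, and it is a genuine constrained optimization rather than a routine verification; so, as written, your reduction is established only for profiles that carry no mass at $m$ or at $s$, and the theorem is not yet proved.
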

\medskip{}
% \begin{proof}
%Proof in Appendix~\ref{section:thm: UFS bound}.
%\end{proof}

Figure~\ref{figure: approx} illustrates the approximation results of this section.

\begin{figure}[!htb]
        \centering
        \includegraphics[height=0.4\textheight]{plot_new}
        \caption{ {\footnotesize{The best welfare approximation guarantee for mechanisms that satisfy UFS and IFS. }}}
\label{figure: approx} 
\end{figure}

 %%%%%%%%%%%%%%%%%%%%%%%%%%%%%% %%%%%%%%%%%%%%%%%

\section{Discussion and directions for future research.}\label{sec: discussion}

 %%%%%%%%%%%%%%%%%%%%%%%%%%%%%% %%%%%%%%%%%%%%%%%

 Facility location is a classical problem in economic design. In this paper, we provided a deeper understanding of strategyproof and proportionally fair  
 mechanisms.   Table~\ref{table:summary} provides an overview of most of the mechanisms considered in the paper and the properties they satisfy. Our results provide strong support for the desirability of the Uniform Phantom mechanism in terms of satisfying fairness and strategyproofness.

 \begin{table*}[t!]
 			    			%	\small
 			    			\centering
 			    			\caption{Summary of results. All mechanisms are also unanimous, anonymous and Pareto efficient. Proofs of the results for the Average mechanism can be found in Appendix~\ref{Section: av mech}. The welfare approximation results for the Nash and Midpoint mechanisms are from~\cite{LAW21}, and the total cost approximation results for those mechanisms can be found in Appendix~\ref{appsec: totalcostapprox}.}
 			    			\label{table:summary}
 			    			\scalebox{0.8}{
 			    			\begin{tabular}{clclclclclclc|clc}
 			    			\hline
 			    				\toprule
 			    				\textbf{Mechanism}& \textbf{Strategyproof} & \textbf{PF} & \textbf{UFS} & \textbf{Proportionality} & \textbf{IFS} & \textbf{Util-approx (limit)} & \textbf{Cost-approx} \\
 			    				\midrule
 			    				\hline
 								 \textbf{Uniform Phantom }& \underline{Yes} & \underline{Yes}&\underline{Yes} &\underline{Yes} &\underline{Yes} & $\frac{\sqrt{2}+1}{2}\approx 1.207$ & $n-1$ \\
 								\textbf{Median}& \underline{Yes} & {\color{gray}No} &{\color{gray}No}  &{\color{gray}No}  &{\color{gray}No}  & $1$ & $1$\\						    
 								\textbf{Constrained Median}& \underline{Yes} & {\color{gray}No} &{\color{gray}No}  &{\color{gray}No} &\underline{Yes} & $ 1$ & $n-1$\\
 			    \textbf{Nash mechanism}& {\color{gray}No} & \underline{Yes}&\underline{Yes} &\underline{Yes} &\underline{Yes} & $\in [\frac{\sqrt{2}+1}{2}, 2]$ & $\in [2-\frac{2}{n},\frac{n}{2}]$\\
 			    			    \textbf{Midpoint mechanism}& {\color{gray}No}  & {\color{gray}No} &{\color{gray}No}  &{\color{gray}No}  &\underline{Yes} & $2$ & $\frac{n}{2}$\\
 \textbf{Average mechanism}& {\color{gray}No} & \underline{Yes}&\underline{Yes} &\underline{Yes} &\underline{Yes} & $\frac{\sqrt{2}+1}{2}$ & $2-\frac{2}{n}$\\

\hline
 
 			    				\bottomrule
 			    			\end{tabular}
 			    			}

 			    		\end{table*}

 	Moving beyond the fairness axioms that we presented,  one can also consider stronger notions of proportionality-based fairness. For example,  the following property, which we call  Strong Proportional Fairness (SPF), is stronger than PF.  Given a profile of locations $\boldsymbol{x}$ within range of distance $R$, a facility location $y$ satisfies \emph{Strong Proportional Fairness (SPF)} if, for any subset of voters $S\subseteq N$ within a range of distance $r$,  the location should be at most $R\frac{n-|S|}{n}+r$ distance from each  agent in $S$, i.e., $d(y, x_i)\le R\frac{n-|S|}{n}+r$ for all $i\in S$. %\alexnew{Note that this definition does not depend on the length of the domain.}
	
	However, it can be easily shown that the Uniform Phantom mechanism does not satisfy SPF. Our result (that the Uniform Phantom mechanism is the only SP and PF mechanism) then implies that there exists no strategyproof and SPF mechanism. In this sense, the compatibility between strategyproofness and fairness axioms ceases to hold when we move from PF to SPF. 
	 
There are several directions for future work to build on the framework and results that we have presented. For example, it may be fruitful to extend our analysis  to incorporate a facility with capacity constraints, multiple facilities,  alternative fairness concepts, considering weaker notions of strategyproofness, or alternative utility functions that are not necessarily single-peaked.  Considering alternative utility functions can implicitly allow for behavioral assumptions in how agents use or benefit from the facility. For example, if agents do not benefit at all from the facility location when it is beyond a threshold distance from their ideal location, then this would correspond to a utility function that it is single-peaked but, beyond a certain threshold distance from their peak location, the utility function becomes constant and takes its minimal value~\citep[see, e.g.,][]{ZZML23}.    An important direction is also to extend the  strategyproof and proportionally fair facility location problem to multiple dimensions. Although some real-world problems (such as the provision of public goods) are well-suited to a unidimensional setting, other real-world problems are better suited to a multidimensional setting. By leveraging existing strategyproofness results for the  multidimensional facility location problem (such as Theorem~1 in~\cite{BoJo83}, which applies to settings where agents have separable preferences) and developing appropriate multidimensional generalizations of the proportionality-based fairness axioms that we presented, we believe progress can be made on this.

\newpage
{\small
	\setstretch{.75}
\bibliographystyle{bib/aer2}
		  \bibliography{bib/abb.bib,bib/ref_new.bib}

@STRING{proc = {Proceedings of the }}

@STRING{ijcai = { International Joint Conference on Artificial Intelligence (IJCAI)}}

@STRING{aamas = { International Joint Conference on Autonomous Agents and Multi-Agent Systems (AAMAS)}}

@STRING{aaai = { AAAI Conference on Artificial Intelligence (AAAI)}}

@STRING{acmec = { ACM Conference on Electronic Commerce (ACM-EC)}}

@STRING{springer = {Springer-Verlag}}

@STRING{acm = {ACM Press}}

@STRING{ta = {Forthcoming}}

@incollection{Oz10,
  title={Lecture 5: Existence of a Nash equilibrium from  (6.254)},
  author={Ozdaglar, Asu},
    booktitle={Game theory with engineering applications},
  publisher={MIT OpenCourseWare Lecture notes, Spring},
  year={2010}
}

@inproceedings{CCP22,
  title={Truthful aggregation of budget proposals with proportionality guarantees},
  author={Caragiannis, Ioannis and Christodoulou, George and Protopapas, Nicos},
  booktitle={Proceedings of the 36th AAAI Conference on Artificial Intelligence (AAAI)},
  pages={17},
  year={2022}
}

@inproceedings{ZZML23,
  title={Facility Location Games with Thresholds},
  author={Zhou, Houyu and Zhang, Guochuan and Mei, Lili and Li, Minming},
  booktitle={Proceedings of the 2023 International Conference on Autonomous Agents and Multiagent Systems},
  pages={2170--2178},
  year={2023}
}

@inproceedings{ZoMi15,
  title={Facility location games with dual preference},
  author={Zou, Shaokun and Li, Minming},
  booktitle={Proceedings of the 2015 international conference on autonomous agents and multiagent systems},
  pages={615--623},
  year={2015}
}

@article{DeFRV23,
  title={Heterogeneous facility location with limited resources},
  author={Deligkas, Argyrios and Filos-Ratsikas, Aris and Voudouris, Alexandros A},
  journal={Games and Economic Behavior},
  volume={139},
  pages={200--215},
  year={2023},
  publisher={Elsevier}
}

@inproceedings{AnDe18,
  title={Heterogeneous Facility Location Games},
  author={Eleftherios Anastasiadis and Argyrios Deligkas},
  booktitle={International Conference on Autonomous Agents and Multiagent Systems (AAMAS 2018)},
  pages={623--631},
  year={2018}
}

@article{NiRo01a,
	Author = {N. Nisan and A. Ronen},
	Date-Added = {2021-09-02 21:29:39 +1000},
	Date-Modified = {2021-09-02 21:29:39 +1000},
	Journal = {Games and Economic Behavior},
	Keywords = {ec-seminar_mechanism-design},
	Number = {1},
	Pages = {166--196},
	Title = {Algorithmic Mechanism Design},
	Volume = {35},
	Year = {2001},
	Bdsk-File-1 = {YnBsaXN0MDDSAQIDBFxyZWxhdGl2ZVBhdGhZYWxpYXNEYXRhXxBGLi4vLi4vLi4vTGlicmFyeS9saWJyYXJ5L05pc2FuIGV0IGFsLiBBbGdvcml0aG1pYyBNZWNoYW5pc20gRGVzaWduLnBkZk8RAdYAAAAAAdYAAgAACEhhcmlzLVVOAAAAAAAAAAAAAAAAAAAAAAAAAAAAAABCRAAB/////x9OaXNhbiBldCBhbC4gQWxnb3IjRkZGRkZGRkYucGRmAAAAAAAAAAAAAAAAAAAAAAAAAAAAAAAAAAAAAAAAAAD/////AAAAAAAAAAAAAAAAAAMAAwAACiBjdQAAAAAAAAAAAAAAAAAHbGlicmFyeQAAAgBULzpVc2VyczphemkwMTA6RHJvcGJveDpMaWJyYXJ5OmxpYnJhcnk6TmlzYW4gZXQgYWwuIEFsZ29yaXRobWljIE1lY2hhbmlzbSBEZXNpZ24ucGRmAA4AXAAtAE4AaQBzAGEAbgAgAGUAdAAgAGEAbAAuACAAQQBsAGcAbwByAGkAdABoAG0AaQBjACAATQBlAGMAaABhAG4AaQBzAG0AIABEAGUAcwBpAGcAbgAuAHAAZABmAA8AEgAIAEgAYQByAGkAcwAtAFUATgASAFJVc2Vycy9hemkwMTAvRHJvcGJveC9MaWJyYXJ5L2xpYnJhcnkvTmlzYW4gZXQgYWwuIEFsZ29yaXRobWljIE1lY2hhbmlzbSBEZXNpZ24ucGRmABMAAS8AABUAAgAN//8AAAAIAA0AGgAkAG0AAAAAAAACAQAAAAAAAAAFAAAAAAAAAAAAAAAAAAACRw==}}

@inproceedings{ZLC22,
  author    = {Houyu Zhou and
               Minming Li and
               Hau Chan},
  title     = {Strategyproof Mechanisms for Group-Fair Facility Location Problems},
  booktitle = {Proceedings of the Thirty-First International Joint Conference on
               Artificial Intelligence},
  pages     = {613--619},
  year      = {2022}
}

@article{JKL20a,
	Author = {Jung, Christopher and Kannan, Sampath and Lutz, Neil},
	Journal = {Foundations of Responsible Computing (FORC)},
	Title = {Service in your neighborhood: Fairness in center location},
	Year = {2020}}

@article{BiDe00a,
	Author = {Bigman, David and Hippolyte Fofack},
	Journal = {Geographical targeting for poverty alleviation: methodology and applications. The World Bank, Washington, DC},
	Pages = {181--206},
	Title = {Spatial indicators of access and fairness for the location of public facilities},
	Year = {2000}}

@article{KMN17a,
	Author = {Kurz, Sascha and Maaser, Nicola and Napel, Stefan},
	Journal = {Journal of Political Economy},
	Number = {5},
	Pages = {1599--1634},
	Publisher = {University of Chicago Press Chicago, IL},
	Title = {On the democratic weights of nations},
	Volume = {125},
	Year = {2017}}

@article{Mull91a,
	Author = {Mulligan, Gordon},
	Journal = {Papers in Regional Science},
	Number = {4},
	Pages = {345--365},
	Publisher = {Springer},
	Title = {Equality measures and facility location},
	Volume = {70},
	Year = {1991}}

@article{ReTr11a,
	Abstract = {The average voting procedure reflects the weighted average of expressed opinions in {$[$}0,1{$]$}. Participants typically behave strategically. We evaluate the discrepancy between the average taste and the average vote. If the population is sufficiently large, it is possible to construct approximations of both the average vote and the average taste which may be readily compared. We construct upper and lower bounds for the limit average vote that depend on the limit average taste. If the average taste is central enough, the range of possible values for the average voting outcome is narrower than the corresponding range for majority voting. For instance, if the average taste is at 1/2, the limit equilibrium outcome is this value plus or minus roughly .2, whereas the weighted median maybe anywhere in the {$[$}0,1{$]$} interval.},
	Author = {Renault, R{\'e}gis and Trannoy, Alain},
	Date-Added = {2021-08-10 14:23:29 +1000},
	Date-Modified = {2021-08-10 14:23:29 +1000},
	Isbn = {1869--4195},
	Journal = {SERIEs},
	Number = {4},
	Pages = {497--513},
	Title = {Assessing the extent of strategic manipulation: the average vote example},
	Ty = {JOUR},
	Volume = {2},
	Year = {2011},
	Bdsk-File-1 = {YnBsaXN0MDDSAQIDBFxyZWxhdGl2ZVBhdGhZYWxpYXNEYXRhXxB6Li4vLi4vLi4vTGlicmFyeS9saWJyYXJ5L1JlbmF1bHQgZXQgYWwuIEFzc2Vzc2luZyB0aGUgZXh0ZW50IG9mIHN0cmF0ZWdpYyBtYW5pcHVsYXRpb24gdGhlIGF2ZXJhZ2Ugdm90ZSBleGFtcGxlICgyMDExKS5wZGZPEQKmAAAAAAKmAAIAAAhIYXJpcy1VTgAAAAAAAAAAAAAAAAAAAAAAAAAAAAAAQkQAAf////8fUmVuYXVsdCBldCBhbC4gQXNzI0ZGRkZGRkZGLnBkZgAAAAAAAAAAAAAAAAAAAAAAAAAAAAAAAAAAAAAAAAAA/////wAAAAAAAAAAAAAAAAADAAMAAAogY3UAAAAAAAAAAAAAAAAAB2xpYnJhcnkAAAIAiC86VXNlcnM6YXppMDEwOkRyb3Bib3g6TGlicmFyeTpsaWJyYXJ5OlJlbmF1bHQgZXQgYWwuIEFzc2Vzc2luZyB0aGUgZXh0ZW50IG9mIHN0cmF0ZWdpYyBtYW5pcHVsYXRpb24gdGhlIGF2ZXJhZ2Ugdm90ZSBleGFtcGxlICgyMDExKS5wZGYADgDEAGEAUgBlAG4AYQB1AGwAdAAgAGUAdAAgAGEAbAAuACAAQQBzAHMAZQBzAHMAaQBuAGcAIAB0AGgAZQAgAGUAeAB0AGUAbgB0ACAAbwBmACAAcwB0AHIAYQB0AGUAZwBpAGMAIABtAGEAbgBpAHAAdQBsAGEAdABpAG8AbgAgAHQAaABlACAAYQB2AGUAcgBhAGcAZQAgAHYAbwB0AGUAIABlAHgAYQBtAHAAbABlACAAKAAyADAAMQAxACkALgBwAGQAZgAPABIACABIAGEAcgBpAHMALQBVAE4AEgCGVXNlcnMvYXppMDEwL0Ryb3Bib3gvTGlicmFyeS9saWJyYXJ5L1JlbmF1bHQgZXQgYWwuIEFzc2Vzc2luZyB0aGUgZXh0ZW50IG9mIHN0cmF0ZWdpYyBtYW5pcHVsYXRpb24gdGhlIGF2ZXJhZ2Ugdm90ZSBleGFtcGxlICgyMDExKS5wZGYAEwABLwAAFQACAA3//wAAAAgADQAaACQAoQAAAAAAAAIBAAAAAAAAAAUAAAAAAAAAAAAAAAAAAANL}}

@article{ReTr05a,
	Author = {Renault, R{\'e}gis and Trannoy, Alain},
	Date-Added = {2021-08-10 14:07:49 +1000},
	Date-Modified = {2021-08-10 14:07:49 +1000},
	Journal = {Journal of Public Economic Theory},
	Number = {2},
	Pages = {169--199},
	Title = {Protecting Minorities through the Average Voting Rule},
	Volume = {7},
	Year = {2005}}

@article{YaKa13a,
	Author = {Hirofumi Yamamura and Ryo Kawasaki},
	Date-Added = {2021-08-10 10:50:08 +1000},
	Date-Modified = {2021-08-10 10:50:08 +1000},
	Journal = {Social Choice and Welfare},
	Month = {March},
	Number = {3},
	Pages = {815-832},
	Title = {{Generalized average rules as stable Nash mechanisms to implement generalized median rules}},
	Volume = {40},
	Year = 2013,
	Bdsk-File-1 = {YnBsaXN0MDDSAQIDBFxyZWxhdGl2ZVBhdGhZYWxpYXNEYXRhXxCNLi4vLi4vLi4vTGlicmFyeS9saWJyYXJ5L1lhbWFtdXJhIGV0IGFsLiBHZW5lcmFsaXplZCBhdmVyYWdlIHJ1bGVzIGFzIHN0YWJsZSBOYXNoIG1lY2hhbmlzbXMgdG8gaW1wbGVtZW50IGdlbmVyYWxpemVkIG1lZGlhbiBydWxlcyAoMjAxMykucGRmTxEC9AAAAAAC9AACAAAISGFyaXMtVU4AAAAAAAAAAAAAAAAAAAAAAAAAAAAAAEJEAAH/////H1lhbWFtdXJhIGV0IGFsLiBHZSNGRkZGRkZGRi5wZGYAAAAAAAAAAAAAAAAAAAAAAAAAAAAAAAAAAAAAAAAAAP////8AAAAAAAAAAAAAAAAAAwADAAAKIGN1AAAAAAAAAAAAAAAAAAdsaWJyYXJ5AAACAJsvOlVzZXJzOmF6aTAxMDpEcm9wYm94OkxpYnJhcnk6bGlicmFyeTpZYW1hbXVyYSBldCBhbC4gR2VuZXJhbGl6ZWQgYXZlcmFnZSBydWxlcyBhcyBzdGFibGUgTmFzaCBtZWNoYW5pc21zIHRvIGltcGxlbWVudCBnZW5lcmFsaXplZCBtZWRpYW4gcnVsZXMgKDIwMTMpLnBkZgAADgDqAHQAWQBhAG0AYQBtAHUAcgBhACAAZQB0ACAAYQBsAC4AIABHAGUAbgBlAHIAYQBsAGkAegBlAGQAIABhAHYAZQByAGEAZwBlACAAcgB1AGwAZQBzACAAYQBzACAAcwB0AGEAYgBsAGUAIABOAGEAcwBoACAAbQBlAGMAaABhAG4AaQBzAG0AcwAgAHQAbwAgAGkAbQBwAGwAZQBtAGUAbgB0ACAAZwBlAG4AZQByAGEAbABpAHoAZQBkACAAbQBlAGQAaQBhAG4AIAByAHUAbABlAHMAIAAoADIAMAAxADMAKQAuAHAAZABmAA8AEgAIAEgAYQByAGkAcwAtAFUATgASAJlVc2Vycy9hemkwMTAvRHJvcGJveC9MaWJyYXJ5L2xpYnJhcnkvWWFtYW11cmEgZXQgYWwuIEdlbmVyYWxpemVkIGF2ZXJhZ2UgcnVsZXMgYXMgc3RhYmxlIE5hc2ggbWVjaGFuaXNtcyB0byBpbXBsZW1lbnQgZ2VuZXJhbGl6ZWQgbWVkaWFuIHJ1bGVzICgyMDEzKS5wZGYAABMAAS8AABUAAgAN//8AAAAIAA0AGgAkALQAAAAAAAACAQAAAAAAAAAFAAAAAAAAAAAAAAAAAAADrA==}}

@article{JaNi04,
	Author = {Matthew O. Jackson and Antonio Nicol{\`o}},
	Date-Added = {2017-08-15 22:48:53 +0000},
	Date-Modified = {2017-08-15 22:48:53 +0000},
	Journal = {Journal of Economic Theory},
	Pages = {278--308},
	Title = {The strategy-proof provision of public goods under congestion and crowding preferences},
	Volume = {115},
	Year = {2004}}

@article{Laff80,
	Author = {Gilbert Laffond},
	Date-Added = {2017-08-15 22:48:53 +0000},
	Date-Modified = {2017-08-15 22:48:53 +0000},
	Journal = {Working Paper: Laboratoire d'Econometrie du CNAM},
	Title = {Revelations Des Preferences et Utilit{\'e}s Unimodles},
	Year = {1980}}

@article{Cant04,
	Author = {David Cantala},
	Date-Added = {2017-08-15 22:48:53 +0000},
	Date-Modified = {2017-08-15 22:48:53 +0000},
	Journal = {Social Choice and Welfare},
	Number = {3},
	Pages = {491--514},
	Title = {Choosing the level of a public good when agents have an outside option},
	Volume = {22},
	Year = {2004}}

@article{Sen1980,
	Author = {Sen, Amartya},
	Journal = {The Tanner lecture on human values},
	Pages = {197--220},
	Publisher = {Cambridge University Press},
	Title = {Equality of what?},
	Volume = {1},
	Year = {1980}}

@article{Stei48a,
	Author = {Hugo Steinhaus},
	Date-Added = {2021-08-04 22:49:45 +1000},
	Date-Modified = {2021-08-04 22:49:45 +1000},
	Journal = {Econometrica},
	Keywords = {cake},
	Pages = {101--104},
	Title = {The problem of fair division},
	Volume = {16},
	Year = {1948},
	Bdsk-File-1 = {YnBsaXN0MDDSAQIDBFxyZWxhdGl2ZVBhdGhZYWxpYXNEYXRhXxBLLi4vLi4vLi4vTGlicmFyeS9saWJyYXJ5L1N0ZWluaGF1cy4gVGhlIHByb2JsZW0gb2YgZmFpciBkaXZpc2lvbiAoMTk0OCkucGRmTxEB7AAAAAAB7AACAAAISGFyaXMtVU4AAAAAAAAAAAAAAAAAAAAAAAAAAAAAAEJEAAH/////H1N0ZWluaGF1cy4gVGhlIHBybyNGRkZGRkZGRi5wZGYAAAAAAAAAAAAAAAAAAAAAAAAAAAAAAAAAAAAAAAAAAP////8AAAAAAAAAAAAAAAAAAwADAAAKIGN1AAAAAAAAAAAAAAAAAAdsaWJyYXJ5AAACAFkvOlVzZXJzOmF6aTAxMDpEcm9wYm94OkxpYnJhcnk6bGlicmFyeTpTdGVpbmhhdXMuIFRoZSBwcm9ibGVtIG9mIGZhaXIgZGl2aXNpb24gKDE5NDgpLnBkZgAADgBmADIAUwB0AGUAaQBuAGgAYQB1AHMALgAgAFQAaABlACAAcAByAG8AYgBsAGUAbQAgAG8AZgAgAGYAYQBpAHIAIABkAGkAdgBpAHMAaQBvAG4AIAAoADEAOQA0ADgAKQAuAHAAZABmAA8AEgAIAEgAYQByAGkAcwAtAFUATgASAFdVc2Vycy9hemkwMTAvRHJvcGJveC9MaWJyYXJ5L2xpYnJhcnkvU3RlaW5oYXVzLiBUaGUgcHJvYmxlbSBvZiBmYWlyIGRpdmlzaW9uICgxOTQ4KS5wZGYAABMAAS8AABUAAgAN//8AAAAIAA0AGgAkAHIAAAAAAAACAQAAAAAAAAAFAAAAAAAAAAAAAAAAAAACYg==}}

@article{JLPV21,
	Author = {Jennings, Andrew and Laraki, Rida and Puppe, Clemens and Varloot, Estelle},
	Journal = {arXiv preprint arXiv:2102.11686},
	Title = {New Characterizations of Strategy-Proofness under Single-Peakedness},
	Year = {2021}}

@article{NePu07a,
	Author = {Klaus Nehring and Clemens Puppe},
	Date-Added = {2020-05-25 09:58:48 +1000},
	Date-Modified = {2020-05-25 09:58:48 +1000},
	Journal = {Games and Economic Behavior},
	Keywords = {domain restrictions},
	Number = {1},
	Pages = {132--153},
	Title = {Efficient and strategy-proof voting rules: A characterization},
	Volume = {59},
	Year = {2007},
	Bdsk-File-1 = {YnBsaXN0MDDSAQIDBFxyZWxhdGl2ZVBhdGhZYWxpYXNEYXRhXxBvLi4vLi4vLi4vTGlicmFyeS9saWJyYXJ5L05laHJpbmcgZXQgYWwuIEVmZmljaWVudCBhbmQgc3RyYXRlZ3ktcHJvb2Ygdm90aW5nIHJ1bGVzIEEgY2hhcmFjdGVyaXphdGlvbiAoMjAwNykucGRmTxECfAAAAAACfAACAAAISGFyaXMtVU4AAAAAAAAAAAAAAAAAAAAAAAAAAAAAAEJEAAH/////H05laHJpbmcgZXQgYWwuIEVmZiNGRkZGRkZGRi5wZGYAAAAAAAAAAAAAAAAAAAAAAAAAAAAAAAAAAAAAAAAAAP////8AAAAAAAAAAAAAAAAAAwADAAAKIGN1AAAAAAAAAAAAAAAAAAdsaWJyYXJ5AAACAH0vOlVzZXJzOmF6aTAxMDpEcm9wYm94OkxpYnJhcnk6bGlicmFyeTpOZWhyaW5nIGV0IGFsLiBFZmZpY2llbnQgYW5kIHN0cmF0ZWd5LXByb29mIHZvdGluZyBydWxlcyBBIGNoYXJhY3Rlcml6YXRpb24gKDIwMDcpLnBkZgAADgCuAFYATgBlAGgAcgBpAG4AZwAgAGUAdAAgAGEAbAAuACAARQBmAGYAaQBjAGkAZQBuAHQAIABhAG4AZAAgAHMAdAByAGEAdABlAGcAeQAtAHAAcgBvAG8AZgAgAHYAbwB0AGkAbgBnACAAcgB1AGwAZQBzACAAQQAgAGMAaABhAHIAYQBjAHQAZQByAGkAegBhAHQAaQBvAG4AIAAoADIAMAAwADcAKQAuAHAAZABmAA8AEgAIAEgAYQByAGkAcwAtAFUATgASAHtVc2Vycy9hemkwMTAvRHJvcGJveC9MaWJyYXJ5L2xpYnJhcnkvTmVocmluZyBldCBhbC4gRWZmaWNpZW50IGFuZCBzdHJhdGVneS1wcm9vZiB2b3RpbmcgcnVsZXMgQSBjaGFyYWN0ZXJpemF0aW9uICgyMDA3KS5wZGYAABMAAS8AABUAAgAN//8AAAAIAA0AGgAkAJYAAAAAAAACAQAAAAAAAAAFAAAAAAAAAAAAAAAAAAADFg==}}

@article{LAW21,
	Author = {Lam, Alexander and Aziz, Haris and Walsh, Toby},
	Journal = {The 8th International Workshop on Computational Social Choice (COMSOC-2021). (latest version: arXiv preprint arXiv:2310.04102)},
	Title = {Nash Welfare and Facility Location},
	Year = {2021}}

@inproceedings{CF-RL+21,
	Author = {Chan, Hau and Filos-Ratsikas, Aris and Li, Bo and Li, Minming and Wang, Chenhao},
	Booktitle = proc # {30th} # ijcai,
	Date-Added = {2017-11-20 10:45:19 +0000},
	Date-Modified = {2017-11-20 10:45:52 +0000},
	Title = {Mechanism Design for Facility Location Problems: A Survey},
	Year = {2021}}

@article{DrLa19,
	Author = {Dragu, Tiberiu and Laver, Michael},
	Journal = {Journal of Politics},
	Number = {3},
	Pages = {923--936},
	Publisher = {The University of Chicago Press Chicago, IL},
	Title = {Coalition governance with incomplete information},
	Volume = {81},
	Year = {2019}}

@article{Fan52,
	Author = {Fan, Ky},
	Journal = {Proceedings of the National Academy of Sciences of the United States of America},
	Number = {2},
	Pages = {121},
	Publisher = {National Academy of Sciences},
	Title = {Fixed-point and minimax theorems in locally convex topological linear spaces},
	Volume = {38},
	Year = {1952}}

@article{Glic52,
	Author = {Glicksberg, Irving L.},
	Journal = {Proceedings of the American Mathematical Society},
	Number = {1},
	Pages = {170--174},
	Publisher = {JSTOR},
	Title = {A further generalization of the Kakutani fixed point theorem, with application to Nash equilibrium points},
	Volume = {3},
	Year = {1952}}

@article{Debr52,
	Author = {Debreu, Gerard},
	Journal = {Proceedings of the National Academy of Sciences},
	Number = {10},
	Pages = {886--893},
	Publisher = {National Acad Sciences},
	Title = {A social equilibrium existence theorem},
	Volume = {38},
	Year = {1952}}

@article{Spru91,
	Author = {Sprumont, Yves},
	Journal = {Econometrica: Journal of the Econometric Society},
	Pages = {509--519},
	Publisher = {JSTOR},
	Title = {The division problem with single-peaked preferences: a characterization of the uniform allocation rule},
	Year = {1991}}

@article{Moul17,
	Author = {Moulin, Herv{\'e}},
	Journal = {Theoretical Economics},
	Number = {2},
	Pages = {587--619},
	Publisher = {Wiley Online Library},
	Title = {One-dimensional mechanism design},
	Volume = {12},
	Year = {2017}}

@article{Nash53,
	Author = {John Nash},
	Journal = {Econometrica},
	Number = {1},
	Pages = {128--140},
	Title = {Two-Person Cooperative Games},
	Volume = {21},
	Year = {1953}}

@book{Moul03,
	Author = {Moulin, Herv{\'e}},
	Publisher = {MIT press},
	Title = {Fair division and collective welfare},
	Year = {2003}}

@article{Nash50,
	Author = {John Nash},
	Journal = {Econometrica},
	Number = {2},
	Pages = {155--162},
	Title = {The Bargaining Problem},
	Volume = {18},
	Year = {1950}}

@inproceedings{ABM19,
	Author = {Aziz, Haris and Bogomolnaia, Anna and Moulin, Herv{\'e}},
	Booktitle = {Proceedings of the 2019 ACM Conference on Economics and Computation},
	Pages = {753--781},
	Title = {Fair mixing: the case of dichotomous preferences},
	Year = {2019}}

@article{NePu06,
	Author = {Klaus Nehring and Clemens Puppe},
	Date-Added = {2020-05-25 09:58:48 +1000},
	Date-Modified = {2020-05-25 09:58:48 +1000},
	Journal = {Journal of Economic Theory},
	Keywords = {domain restrictions},
	Number = {1},
	Pages = {269--305},
	Title = {The structure of strategy-proof social choice --- {P}art {I}: General characterization and possibility results on median spaces},
	Volume = {135},
	Year = {2006}}

@article{FPPV21,
	Author = {Freeman, Rupert and Pennock, David M. and Peters, Dominik and Vaughan, Jennifer Wortman},
	Journal = {Journal of Economic Theory},
	Pages = {105234},
	Publisher = {Elsevier},
	Title = {Truthful aggregation of budget proposals},
	Volume = {193},
	Year = {2021}}

@article{AzLe20,
	Author = {Aziz, Haris and Lee, Barton E.},
	Journal = {Social Choice and Welfare},
	Number = {1},
	Pages = {1--45},
	Publisher = {Springer},
	Title = {The expanding approvals rule: improving proportional representation and monotonicity},
	Volume = {54},
	Year = {2020}}

@article{ACLP20,
	Author = {Aziz, Haris and Chan, Hau and Lee, Barton E. and Parkes, David C.},
	Journal = {Games and Economic Behavior},
	Pages = {478--490},
	Publisher = {Elsevier},
	Title = {The capacity constrained facility location problem},
	Volume = {124},
	Year = {2020}}

@inproceedings{ACLL+20,
	Author = {Aziz, Haris and Chan, Hau and Lee, Barton E. and Li, Bo and Walsh, Toby},
	Booktitle = {Proceedings of the AAAI Conference on Artificial Intelligence},
	Number = {02},
	Pages = {1806--1813},
	Title = {Facility location problem with capacity constraints: Algorithmic and mechanism design perspectives},
	Volume = {34},
	Year = {2020}}

@article{MaMo11a,
	Abstract = {We characterize the class of strategy-proof social choice functions on the domain of symmetric single-peaked preferences. This class is strictly larger than the set of generalized median voter schemes (the class of strategy-proof and tops-only social choice functions on the domain of single-peaked preferences characterized by Moulin, 1980) since, under the domain of symmetric single-peaked preferences, generalized median voter schemes can be disturbed by discontinuity points and remain strategy-proof on the smaller domain. Our result identifies the specific nature of these discontinuities which allow to design non-onto social choice functions to deal with feasibility constraints.},
	Author = {Mass{\'o}, Jordi and Moreno De Barreda, In{\'e}s },
	Date-Added = {2020-05-21 23:34:09 +1000},
	Date-Modified = {2020-05-21 23:34:09 +1000},
	Journal = {Games and Economic Behavior},
	Number = {2},
	Pages = {467 --484},
	Title = {On strategy-proofness and symmetric single-peakedness},
	Volume = {72},
	Year = {2011},
	Bdsk-File-1 = {YnBsaXN0MDDSAQIDBFxyZWxhdGl2ZVBhdGhZYWxpYXNEYXRhbxBnAC4ALgAvAC4ALgAvAC4ALgAvAEwAaQBiAHIAYQByAHkALwBsAGkAYgByAGEAcgB5AC8ATQBhAHMAcwBvAwEAIABlAHQAIABhAGwALgAgAE8AbgAgAHMAdAByAGEAdABlAGcAeQAtAHAAcgBvAG8AZgBuAGUAcwBzACAAYQBuAGQAIABzAHkAbQBtAGUAdAByAGkAYwAgAHMAaQBuAGcAbABlAC0AcABlAGEAawBlAGQAbgBlAHMAcwAgACgAMgAwADEAMQApAC4AcABkAGZPEQJcAAAAAAJcAAIAAAhIYXJpcy1VTgAAAAAAAAAAAAAAAAAAAAAAAAAAAAAAQkQAAf////8fTWFzc5cgZXQgYWwuIE9uIHN0I0ZGRkZGRkZGLnBkZgAAAAAAAAAAAAAAAAAAAAAAAAAAAAAAAAAAAAAAAAAA/////wAAAAAAAAAAAAAAAAADAAMAAAogY3UAAAAAAAAAAAAAAAAAB2xpYnJhcnkAAAIAdi86VXNlcnM6YXppMDEwOkRyb3Bib3g6TGlicmFyeTpsaWJyYXJ5Ok1hc3NvzIEgZXQgYWwuIE9uIHN0cmF0ZWd5LXByb29mbmVzcyBhbmQgc3ltbWV0cmljIHNpbmdsZS1wZWFrZWRuZXNzICgyMDExKS5wZGYADgCeAE4ATQBhAHMAcwBvAwEAIABlAHQAIABhAGwALgAgAE8AbgAgAHMAdAByAGEAdABlAGcAeQAtAHAAcgBvAG8AZgBuAGUAcwBzACAAYQBuAGQAIABzAHkAbQBtAGUAdAByAGkAYwAgAHMAaQBuAGcAbABlAC0AcABlAGEAawBlAGQAbgBlAHMAcwAgACgAMgAwADEAMQApAC4AcABkAGYADwASAAgASABhAHIAaQBzAC0AVQBOABIAdFVzZXJzL2F6aTAxMC9Ecm9wYm94L0xpYnJhcnkvbGlicmFyeS9NYXNzb8yBIGV0IGFsLiBPbiBzdHJhdGVneS1wcm9vZm5lc3MgYW5kIHN5bW1ldHJpYyBzaW5nbGUtcGVha2VkbmVzcyAoMjAxMSkucGRmABMAAS8AABUAAgAN//8AAAAIAA0AGgAkAPUAAAAAAAACAQAAAAAAAAAFAAAAAAAAAAAAAAAAAAADVQ==}}

@book{Mill61,
	Author = {John Stuart Mill},
	Date-Added = {2017-09-21 23:46:37 +0000},
	Date-Modified = {2017-09-21 23:49:22 +0000},
	Title = {Considerations on representative government},
	Year = {1861}}

@book{Dumm97a,
	Author = {Michael Dummett},
	Date-Added = {2017-07-16 06:37:35 +0000},
	Date-Modified = {2017-07-16 06:37:56 +0000},
	Keywords = {library_harrenstein},
	Publisher = {Oxford University Press},
	Title = {Principles of Electoral Reform},
	Year = {1997}}

@inproceedings{SFEL+17,
	Author = {S{\'a}nchez-Fern{\'a}ndez, Luis and Elkind, Edith and Lackner, Martin and Fern{\'a}ndez, Norberto and Fisteus, Jes{\'u}s and Val, Pablo Basanta and Skowron, Piotr},
	Booktitle = {Proceedings of the AAAI Conference on Artificial Intelligence},
	Number = {1},
	Title = {Proportional justified representation},
	Volume = {31},
	Year = {2017}}

@book{Dumm84a,
	Author = {Michael Dummett},
	Date-Added = {2017-07-11 00:12:47 +0000},
	Date-Modified = {2017-07-11 00:12:47 +0000},
	Keywords = {library_harrenstein},
	Publisher = {Oxford University Press},
	Title = {Voting Procedures},
	Year = {1984}}

@article{Gibb73,
	Author = {Allan Gibbard},
	Date-Added = {2017-04-17 07:37:10 +0000},
	Date-Modified = {2017-04-17 07:37:10 +0000},
	Journal = {Econometrica},
	Number = {4},
	Pages = {587--601},
	Title = {Manipulation of voting schemes: {A} general result},
	Volume = {41},
	Year = {1973}}

@article{BaAn21,
	Author = {Barber{\`a}, Salvador and Nicol{\`o}, Antonio},
	Journal = {Social Choice and Welfare},
	Pages = {1--23},
	Publisher = {Springer},
	Title = {Information disclosure with many alternatives},
	Year = {2021}}

@article{Satt75,
	Author = {Mark Allen Satterthwaite},
	Date-Added = {2017-04-10 12:02:54 +0000},
	Date-Modified = {2017-04-10 12:02:54 +0000},
	Journal = {Journal of Economic Theory},
	Pages = {187--217},
	Title = {Strategy-proofness and {A}rrow's conditions: Existence and correspondence theorems for voting procedures and social welfare functions},
	Volume = {10},
	Year = {1975}}

@incollection{Klam10,
	Author = {Klamler, Christian},
	Booktitle = {Handbook of group decision and negotiation},
	Pages = {183--202},
	Publisher = {Springer},
	Title = {Fair division},
	Year = {2010}}

@book{ASS10,
	Author = {Arrow, Kenneth J. and Sen, Amartya and Suzumura, Kotaro},
	Publisher = {Elsevier},
	Title = {Handbook of social choice and welfare},
	Volume = {2},
	Year = {2010}}

@incollection{FSST17a,
	Author = {Piotr Faliszewski and Piotr Skowron and Arkadii Slinko and Nimrod Talmon},
	Booktitle = {Trends in Computational Social Choice},
	Chapter = 2,
	Date-Added = {2017-04-10 12:02:12 +0000},
	Date-Modified = {2017-04-10 12:02:12 +0000},
	Editor = {U. Endriss},
	Note = ta,
	Title = {Multiwinner Voting: A New Challenge for Social Choice Theory},
	Year = {2017}}

@article{ABC+16a,
	Author = {Haris Aziz and Markus Brill and Vincent Conitzer and Edith Elkind and Rupert Freeman and Toby Walsh},
	Date-Added = {2017-04-10 11:36:25 +0000},
	Date-Modffied = {2017-08-26 07:55:19 +0000},
	Journal = {Social Choice and Welfare},
	Pages = {461--485},
	Title = {Justified Representation in Approval-Based Committee Voting},
	Year = {2017}}

@article{Yaar81,
	Author = {Yaari, Menahem E},
	Journal = {Journal of Economic Theory},
	Number = {1},
	Pages = {1--39},
	Publisher = {Elsevier},
	Title = {Rawls, Edgeworth, Shapley, Nash: Theories of distributive justice re-examined},
	Volume = {24},
	Year = {1981}}

@inbook{Shap53,
	Address = {Princeton, NJ},
	Author = {Lloyd S. Shapley},
	Booktitle = {Contributions to the Theory of Games (AM-29), Volume II},
	Date-Added = {2016-09-13 23:57:39 +0000},
	Date-Modified = {2016-09-13 23:57:56 +0000},
	Pages = {143--164},
	Publisher = {Princeton University Press},
	Title = {A Value for $n$-Person Games},
	Year = {1953}}

@article{AzLe22,
  title={A characterization of proportionally representative committees},
  author={Aziz, Haris and Lee, Barton E},
  journal={Games and Economic Behavior},
  volume={133},
  pages={248--255},
  year={2022},
  publisher={Elsevier}
}

@article{Miya98,
	Author = {Eiichi Miyagawa},
	Date-Added = {2016-08-05 13:02:37 +0000},
	Date-Modified = {2016-08-05 13:03:04 +0000},
	Journal = {PhD dissertation, University of Rochester},
	Title = {Mechanisms for providing a menu of public goods},
	Year = {1998}}

@article{Miya01,
	Author = {Eiichi Miyagawa},
	Journal = {Social Choice and Welfare},
	Pages = {527--541},
	Title = {Locating libraries on a street},
	Volume = {18},
	Year = {2001}}

@book{Rawls71a,
	Author = {John Rawls},
	Publisher = {Harvard University Press},
	Title = {A Theory of Justice},
	Year = {1971}}

@book{Endr17,
	Author = {Endriss, Ulle},
	Publisher = {Lulu. com},
	Title = {Trends in computational social choice},
	Year = {2017}}

@incollection{ABES19,
	Author = {Aziz, Haris and Brandt, Felix and Elkind, Edith and Skowron, Piotr},
	Booktitle = {Computing and software science},
	Pages = {48--65},
	Publisher = {Springer},
	Title = {Computational social choice: The first ten years and beyond},
	Year = {2019}}

@book{LaSa10,
	Author = {Laslier, Jean-Fran{\c{c}}ois and Sanver, M Remzi},
	Publisher = {Springer Science \& Business Media},
	Title = {Handbook on approval voting},
	Year = {2010}}

@article{Ehle02,
	Author = {Lars Ehlers},
	Date-Added = {2014-01-22 12:10:14 +0000},
	Date-Modified = {2014-01-22 12:10:14 +0000},
	Journal = {Journal of Mathematical Economics},
	Number = {1},
	Pages = {1--14},
	Title = {Multiple Public Goods and Lexicographic Preferences},
	Volume = {37},
	Year = {2002}}

@article{Ehle03,
	Author = {Lars Ehlers},
	Date-Added = {2014-01-22 12:10:14 +0000},
	Date-Modified = {2014-01-22 12:10:14 +0000},
	Journal = {Games and Economic Behavior},
	Number = {1},
	Pages = {1--27},
	Title = {Multiple Public Goods and Lexicographic Preferences and Single-Plateaued Preference Rules},
	Volume = {43},
	Year = {2003}}

@article{BaJa94,
	Author = {Salvador Barber{\`a} and Matthew Jackson},
	Date-Added = {2017-08-15 22:48:53 +0000},
	Date-Modified = {2017-08-15 22:48:53 +0000},
	Journal = {Social Choice and Welfare},
	Number = {3},
	Pages = {241--252},
	Title = {A characterization of strategy-proof social choice functions for economies with pure public goods},
	Volume = {11},
	Year = {1994}}

@article{BMS98,
	Author = {Salvador Barber{\`a} and Jordi Mass{\'o} and Shigehiro Serizawa},
	Date-Added = {2017-08-15 22:48:53 +0000},
	Date-Modified = {2017-08-15 22:48:53 +0000},
	Journal = {Games and Economic Behavior},
	Pages = {272--291},
	Title = {Strategy-proof voting on compact ranges},
	Volume = {25},
	Year = {1998}}

@article{Moul80,
	Author = {Herv{\'e} Moulin},
	Date-Added = {2017-08-15 22:48:53 +0000},
	Date-Modified = {2017-08-15 22:48:53 +0000},
	Journal = {Public Choice},
	Number = {4},
	Pages = {437--455},
	Title = {On Strategy-proofness and single peakedness},
	Volume = {45},
	Year = {1980}}

@article{BMS05,
	Author = {Anna Bogomolnaia and Herv{\'e} Moulin and Richard Stong},
	Date-Added = {2017-08-15 22:48:53 +0000},
	Date-Modified = {2017-08-15 22:48:53 +0000},
	Journal = {Journal of Economic Theory},
	Number = {2},
	Pages = {165--184},
	Title = {Collective choice under dichotomous preferences},
	Volume = {122},
	Year = {2005}}

@article{Scar67,
  title={The core of an N person game},
  author={Scarf, Herbert E},
  journal={Econometrica},
  pages={50--69},
  year={1967},
  publisher={JSTOR}
}

@article{ScVo02,
	Author = {James Schummer and Rakesh V. Vohra},
	Journal = {Journal of Economic Theory},
	Pages = {405--428},
	Title = {Strategy-proof Location on a Network},
	Volume = {104},
	Year = {2002}}

@article{BoJo83,
	Author = {Kim C. Border and James S. Jordan},
	Journal = {Review of Economic Studies},
	Number = {1},
	Pages = {153--170},
	Title = {Straightforward Elections, Unanimity and Phantom Voters},
	Volume = {50},
	Year = {1983}}

@article{BrCh89,
	Author = {Margaret L. Brandeau and Samuel S. Chiu},
	Journal = {Management Science},
	Number = {6},
	Pages = {645--674},
	Title = {An overview of representative problems in location research},
	Volume = {35},
	Year = {1989}}

@article{Weym08,
  title={Strategy-Proofness and the Tops-Only Property},
  author={Weymark, John A},
  journal={Journal of Public Economic Theory},
  volume={10},
  number={1},
  pages={7--26},
  year={2008},
  publisher={Wiley Online Library}
}

@article{Weym11,
	Author = {John A. Weymark},
	Journal = {SERIEs},
	Number = {4},
	Pages = {529--550},
	Title = {A unified approach to strategy-proofness for single-peaked preferences},
	Volume = {2},
	Year = {2011}}

@article{PvdsS92,
  title={Pareto optimality, anonymity, and strategy-proofness in location problems},
  author={Peters, Hans and van der Stel, Hans and Storcken, Ton},
  journal={International Journal of Game Theory},
  volume={21},
  pages={221--235},
  year={1992},
  publisher={Springer}
}

@article{KHS98,
  title={Strategy-proof division with single-peaked preferences and individual endowments},
  author={Klaus, Bettina and Peters, Hans and Storcken, Ton},
  journal={Social Choice and Welfare},
  volume={15},
  pages={297--311},
  year={1998},
  publisher={Springer}
}

@article{PPS+97,
	Author = {W. Peremans and H. Peters and H. v.d. Stel and T. Storcken},
	Journal = {Social Choice and Welfare},
	Pages = {379--401},
	Title = {Strategy-proofness on {E}uclidean spaces},
	Volume = {14},
	Year = {1997}}

@article{Chin97,
	Author = {Stephen Ching},
	Journal = {International Journal of Game Theory},
	Pages = {473--490},
	Title = {Strategy-proofness and ``{M}edian {V}oters"},
	Volume = {26},
	Year = {1997}}

@book{FaHe09a,
	Author = {{Zanjirani Farahani}, Reza, and Hekmatfar, Masoud},
	Publisher = {Springer Science \& Business Media},
	Title = {Facility location: concepts, models, algorithms and case studies},
	Year = {2009}}

@inproceedings{FFG16,
	Author = {Michal Feldman and Amos Fiat and Iddan Golomb},
	Booktitle = proc # {17th} # acmec,
	Date-Added = {2017-07-17 00:47:50 +0000},
	Date-Modified = {2017-07-17 00:49:00 +0000},
	Keywords = {ec-seminar_mechanism-design},
	Pages = {269--286},
	Title = {On voting and facility location},
	Year = {2016}}

@inproceedings{PrTe13,
	Author = {Ariel D. Procaccia and Moshe Tennenholtz},
	Booktitle = proc # {14th} # acmec,
	Date-Added = {2017-07-17 00:47:50 +0000},
	Date-Modified = {2017-07-17 00:49:00 +0000},
	Keywords = {ec-seminar_mechanism-design},
	Pages = {1--26},
	Publisher = acm,
	Title = {Approximate Mechanism Design Without Money},
	Year = {2013}}
}

\appendix

\section{Omitted proofs.}

\subsection{Proof of Proposition~\ref{prop: UFS implies unanimity}.}\label{section: prop: UFS implies unanimity}

%\begin{proof}[Proof of Proposition~\ref{prop: UFS implies unanimity}]
\begin{proof}
Point (i): {We wish to prove that UFS implies proportionality, IFS, and unanimity.} Let $\boldsymbol{x}$ be an arbitrary location profile and let $y$ be a facility location that satisfies UFS. From the definition of UFS, it is immediate that IFS and unanimity are satisfied. It remains to prove that proportionality is satisfied. For the sake of a contradiction, suppose that proportionality is not satisfied. That is, $\boldsymbol{x}$ is such that $x_i\in \{0,L\}$ for all $i\in N$ and  $y\neq L\frac{|i\in N \ : \ x_i=1|}{n}$. Let $k=|i\in N \ : \ x_i=L|$. If $k=0$, then UFS requires that $y=0$, and proportionality is satisfied---a contradiction. If $k>0$, then UFS requires that:
$$|L-y|\le L(1-\frac{k}{n}), \quad \text{i.e.,}\quad y\ge \frac{kL}{n}  \qquad \text{and} \qquad  |0-y| \le L(1- \frac{n-k}{n}), \quad \text{i.e.,}\quad y\le \frac{kL}{n}.$$
The inequalities above imply that $y=\frac{kL}{n}$ and proportionality is satisfied---a contradiction.

Point (ii):  {We wish to prove that PF implies UFS.} This follows immediately by noting that a set of agents all located at the same location are within a range of distance $r=0$. Taking $r=0$ in the PF definition shows that PF implies UFS. 

It is straightforward to see that the relations in the proposition are strict and also that there is no logical relation between proportionality, IFS, and unanimity. We omit the proofs.
 \end{proof}

 \subsection{Proof of Proposition~\ref{prop: Egal Nash}.}\label{section: prop: Egal Nash}

%\begin{proof}[Proof of Proposition~\ref{prop: Egal Nash}]
\begin{proof}
Point (i): {We wish to prove that the median mechanism satisfies unanimity and strategyproofness, but does not satisfy IFS, PF, UFS, nor Proportionality.} The median mechanism is known to be strategyproof \citep{PrTe13}; it is also clearly unanimous. Finally, consider the agent location profile with $n-1$ agents at $0$ and $1$ agent at $L$. The median mechanism locates the facility at $0$, which violates both IFS and Proportionality (and hence also UFS and PF).

Point (ii): {We wish to prove that the midpoint mechanism satisfies IFS and unanimity, but does not satisfy  strategyproofness, PF, UFS, nor Proportionality.} The midpoint mechanism places the facility at the average of the leftmost and rightmost agent. It is therefore unanimous but not strategyproof. The maximum cost that can be incurred by an agent is $L/2$, which is obtained when the leftmost agent is at $0$ and the rightmost agent is at $L$. However, this IFS is satisfied since $n\ge 2$. To see that the midpoint mechanism does not satisfy Proportionality, consider the agent location profile with 2 agents at $0$ and 1 agent at $L$. The midpoint mechanism places the facility at $L/2$, but Proportionality requires that the mechanism is placed at $L/3$. Since Proportionality is not satisfied, UFS and PF are also not satisfied.

Point (iii): {We wish to prove that the Nash mechanism satisfies PF but is not strategyproof. To this end, we first define a notion of monotonicity that requires that if a location profile is modified by an agent shifting its location, the facility placement under the modified profile will not  shift in the opposite direction. Definition~\ref{def: Mono} formalizes this notion.}
\medskip{}
\begin{definition}[Monotonic]\label{def: Mono}
A mechanism $f$ is \emph{monotonic} if
\[f(\boldsymbol{x})\leq f(\boldsymbol{x}')\]
for all $f(\boldsymbol{x})$ and $f(\boldsymbol{x}')$ such that $x_i \leq x_i'$ for all $i\in N$ and $x_i<x_i'$ for some $i\in N$.
\end{definition}
\medskip{}
We next prove the following auxiliary lemma.

\begin{lemma}\label{lem: ufs mono is PF}
A mechanism that satisfies UFS and monotonicity also satisfies PF.
\end{lemma}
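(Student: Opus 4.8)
The plan is to reduce the PF guarantee for an arbitrary subset $S$ to a single two-sided bound on the facility location, and then obtain each side of that bound by relocating the agents of $S$ to a common point (so that UFS becomes applicable) and invoking monotonicity.

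First I would fix an arbitrary profile $\boldsymbol{x}$, write $y:=f(\boldsymbol{x})$, fix a subset $S\subseteq N$, and set $a:=\min_{i\in S}x_i$ and $b:=\max_{i\in S}x_i$, so that the range is $r=b-a$. A short rearrangement shows that the PF requirement $d(y,x_i)\le \frac{n-|S|}{n}+r$ for all $i\in S$ is equivalent to the single two-sided bound
$$a-\frac{n-|S|}{n}\;\le\; y\;\le\; b+\frac{n-|S|}{n}.$$
Indeed, the upper PF constraint $y\le x_i+\frac{n-|S|}{n}+r$ is tightest at $x_i=a$, giving $y\le b+\frac{n-|S|}{n}$ since $a+r=b$; and the lower constraint $y\ge x_i-\frac{n-|S|}{n}-r$ is tightest at $x_i=b$, giving $y\ge a-\frac{n-|S|}{n}$. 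Hence it suffices to establish these two inequalities.

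For the upper bound I would construct the profile $\boldsymbol{x}^{\uparrow}$ obtained from $\boldsymbol{x}$ by relocating every agent of $S$ to $b$ while leaving all other agents fixed. Each such agent moves only weakly rightward; if at least one moves strictly then $\boldsymbol{x}<\boldsymbol{x}^{\uparrow}$ and monotonicity gives $y=f(\boldsymbol{x})\le f(\boldsymbol{x}^{\uparrow})$ (and if none move then $\boldsymbol{x}^{\uparrow}=\boldsymbol{x}$, so the bound applies to $\boldsymbol{x}$ directly). In $\boldsymbol{x}^{\uparrow}$ the agents of $S$ form a group of size $|S|$ sharing the location $b$, so UFS yields $d(f(\boldsymbol{x}^{\uparrow}),b)\le 1-\frac{|S|}{n}$, i.e.\ $f(\boldsymbol{x}^{\uparrow})\le b+\frac{n-|S|}{n}$. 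Chaining the two facts gives $y\le b+\frac{n-|S|}{n}$. The lower bound is entirely symmetric: relocating the agents of $S$ to $a$ produces $\boldsymbol{x}^{\downarrow}$ with $\boldsymbol{x}^{\downarrow}\le \boldsymbol{x}$, so monotonicity gives $y\ge f(\boldsymbol{x}^{\downarrow})$, and UFS applied at the common location $a$ gives $f(\boldsymbol{x}^{\downarrow})\ge a-\frac{n-|S|}{n}$. Since $S$ was arbitrary, PF follows.

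The argument is short, so the only real care points—which I would treat as the main (minor) obstacle—are the reformulation step and the degenerate case where $S$ already lies at a single point ($r=0$, $a=b$): then no agent strictly moves, $\boldsymbol{x}^{\uparrow}=\boldsymbol{x}^{\downarrow}=\boldsymbol{x}$, and UFS applies to $\boldsymbol{x}$ itself, so monotonicity is not even needed (this is consistent with PF collapsing to UFS at $r=0$, exactly as in Proposition~\ref{prop: UFS implies unanimity}(ii)). I would also remark that the two bounds hold verbatim even when $a-\frac{n-|S|}{n}<0$ or $b+\frac{n-|S|}{n}>1$, since those inequalities are then implied by $y\in[0,1]$, so no separate boundary handling is required.
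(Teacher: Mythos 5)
Your proof is correct and uses essentially the same argument as the paper: collapse the agents of $S$ to an endpoint of their range, use monotonicity to ensure the facility does not move closer, and apply UFS at the merged location. The only difference is organizational---the paper performs a single merge chosen by a case analysis on which side of the facility the furthest agent of $S$ lies, whereas you perform both merges symmetrically to obtain a two-sided bound on $f(\boldsymbol{x})$, which avoids that case analysis and also explicitly handles the degenerate case $r=0$ (where the strict-inequality precondition of the paper's monotonicity definition fails and the merged profile equals $\boldsymbol{x}$), a point the paper's proof glosses over.
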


%\begin{proof}[Proof of Lemma~\ref{lem: ufs mono is PF}]
\begin{proof}
Let $\boldsymbol{x}$ be an arbitrary agent location profile, and $f$ be a mechanism that satisfies UFS and monotonicity. Consider the set $S=\{1,\dots,m\}\subset N$ of $m$ agents and denote $r:= \max_{i\in S}\{x_i\}-\min_{i\in S}\{x_i\}$. We prove that the maximum distance of the facility from any agent in $S$ is at most $L\frac{n-m}{n}+r$.

Denote $f:=\argmax_{i\in S}\{d(f(\boldsymbol{x}),x_i)\}$ as the agent in $S$ whose location $x_f$ under $\boldsymbol{x}$ is furthest from the respective facility location, which is either $\max_{i\in S}\{x_i\}$ or $\min_{i\in S}\{x_i\}$. Consider the modified profile $\boldsymbol{x}'$ where $x_i'=\max_{i\in S}\{x_i\}$ for all $i\in S$ if $f(\boldsymbol{x})\geq x_f$ and $x_i'=\min_{i\in S}\{x_i\}$ for all $i\in S$ if $f(\boldsymbol{x})< x_f$. Also, $x_i'=x_i$ for all $i\in N\backslash S$. In other words, the agents in $S$ have their locations moved to the rightmost agent in $S$ if the facility is weakly right of the furthest agent of $S$ under $\boldsymbol{x}$. If the facility is strictly left of the furthest agent of $S$ under $\boldsymbol{x}$, the agents in $S$ have their locations moved to the leftmost agent.

Due to monotonicity, the facility does not move closer to $x_f$ when modifying $\boldsymbol{x}$ to $\boldsymbol{x}'$, so we have
\[d(f(\boldsymbol{x}'),x_f)\geq d(f(\boldsymbol{x}),x_f).\]
Note that all $m$ agents of $S$ are at the same location under $\boldsymbol{x}'$. Denote this location as $x_S'$. Due to UFS, we also have
\[d(f(\boldsymbol{x}'),x_S')\leq L\frac{n-m}{n}.\]
We therefore have
\begin{align*}
d(f(\boldsymbol{x}),x_f)&\leq d(f(\boldsymbol{x}'),x_f)\leq d(f(\boldsymbol{x}'),x_S')+r\leq L\frac{n-m}{n}+r.
\end{align*}

\end{proof}
\medskip{}
The Nash mechanism is known to satisfy UFS and monotonicity \citep{LAW21}. It therefore satisfies PF. 
\end{proof}
\medskip{}

  \subsection{Tightness of Theorem~\ref{char: IFS and SP}.}\label{section:prop: ess char}
%  

%\begin{proof}[Tightness of Theorem~\ref{char: IFS and SP}]
\begin{proof}
We wish to prove that each of the requirements in Theorem \ref{char: IFS and SP} are necessary for the theorem to hold. We  show that if any one of the requirements (i.e., strategyproofness, unanimity, anonymity, and IFS) are removed, then   Theorem~\ref{char: IFS and SP}---not only fails to hold---but there exists such a mechanism. We do this by providing examples  of mechanisms that fulfill all but one of the requirements of Theorem~\ref{char: IFS and SP} but that are  not  phantom mechanisms with the $n-1$ phantoms contained in $[L/n, L(1-1/n)]$.

\textbf{Strategyproofness.} By Proposition~\ref{prop: Egal Nash}, the midpoint mechanism is an example of a mechanism that is not strategyproof, but satisfies unanimity, anonymity and IFS. {However, the midpoint mechanism is not a phantom mechanism---this follows immediately because phantom mechanisms are necessarily strategyproof.}

\textbf{Unanimity.} For $n\ge 2$, consider the constant-$\frac{L}{2}$ mechanism, whereby the facility is always located at $\frac{L}{2}$. This mechanism is clearly strategyproof and anonymous and does not satisfy unanimity. Furthermore, it satisfies IFS because the largest cost that any agent can experience is $\frac{L}{2}$,  which is (weakly) lower than $L(1-\frac{1}{n})$ for any $n\ge 2$. However, the constant-$\frac{L}{2}$ mechanism is not a phantom mechanism---this follows immediately because phantom mechanisms necessarily satisfy unanimity.

%\footnote{Recall that Definition~\ref{def: phantom} defines the family of Phantom mechanism as having $(n-1)$ phantoms. In the literature, a broader class of mechanisms that has $(n+1)$ phantoms is sometimes referred to as the class of phantom mechanism; unanimity is not necessarily satisfied by mechanisms in this broader class.}

\textbf{Anonymity.} For simplicity take $n=3$ and consider the mechanism $f$ that locates the facility at 
 \begin{align*}
f(\boldsymbol{x}):=& \max\{\min\{x_1,\frac{L}{3}+\varepsilon\},\min\{x_2,\frac{L}{3}\},\min\{x_3,\frac{L}{3}\},\\
 &\qquad \min\{x_1,x_2,L(1-\frac{1}{3}) \},\min\{x_2,x_3,L(1-\frac{1}{3}) \}, \min\{x_1,x_3,L(1-\frac{1}{3}) \},\\
 &\qquad \min\{x_1,x_2,x_3,L\},0\},
 \end{align*}
  where $\varepsilon>0$ is sufficiently small. This is a Generalized Median mechanism~\citep{BoJo83} and, hence, is strategyproof. Furthermore, it is easy to see that $f$ is unanimous. \citeauthor{BoJo83}'s (\citeyear{BoJo83}) Lemma~3 then says that $f$ is Pareto efficient.  However, the mechanism is not anonymous: for $\boldsymbol{x}=(0,0,L)$,  $f(\boldsymbol{x})=L/3$, but for $\boldsymbol{x}'=(L,0,0)$,  $f(\boldsymbol{x}')=L/3+\varepsilon$. 
  
  We now show that the mechanism satisfies IFS. Since the mechanism is Pareto efficient, IFS is trivially satisfied if $x_i\leq L(1-\frac{1}{3})$ for all $i\in N$ or if $x_i \geq \frac{L}{3}$ for all $i\in N$. Now consider   some location profile $\boldsymbol{x}$ that does not belong to these trivial cases, i.e., there is at least one agent with location below $\frac{L}{3}$ (resp., $L(1-\frac{1}{3})$)  and at least one agent with location above $\frac{L}{3}$ (resp., $L(1-\frac{1}{3})$). In these cases, IFS can only possibly be violated if $f(\boldsymbol{x})>L(1-\frac{1}{3})$ or $f(\boldsymbol{x})<\frac{L}{3}$.    However, $f(\boldsymbol{x})>L(1-\frac{1}{3})$ if and only if $x_i>L(1-\frac{1}{3})$ for all $i\in N$---but the latter condition does not hold.  Similarly, $f(\boldsymbol{x})<\frac{L}{3}$ if and only if $x_i<\frac{L}{3}$ for all $i\in N$---but, again, the latter condition does not hold. Therefore, we conclude that IFS is satisfied. {However, the mechanism $f$ is not a phantom mechanism---this follows immediately because phantom mechanisms are necessarily anonymous.}

\paragraph{IFS.} The Phantom mechanism that places all $n-1$ phantoms at $0$ is strategyproof, unanimous and anonymous. However, it does not satisfy IFS as the facility can be placed at $0$ when there is an agent at $L$. {It is immediate that this Phantom mechanism violates the condition of the theorem that all phantoms are located in the interval $[L/n, L(1-1/n)]$.}
\end{proof}

\medskip{} 

\subsection{Proof of Proposition~\ref{prop: uniform is SP and PF}.}\label{section:prop: uniform is SP and PF}
%\begin{proof}[Proof of Proposition~\ref{prop: uniform is SP and PF}]
\begin{proof}
The Uniform Phantom mechanism is strategyproof since it is a Phantom mechanism and all Phantom mechanisms are strategyproof~\citep[see, e.g., Corollary 2 of][]{MaMo11a}. We now prove that the Uniform Phantom mechanism satisfies PF. Let $\boldsymbol{x}$ be an arbitrary location profile and let $S=\{1,\dots,s\}\subseteq N$ be a set of $s$ agents; denote $r:= \max_{i\in S}\{x_i\}-\min_{i\in S}\{x_i\}$. We prove  $d(f(\boldsymbol{x}), x_i)\le L\frac{n-s}{n}+r$ for all $i\in S$. If $r=L$, then the result is trivially true. Suppose that $r<L$. If the location is within the range of the agents in $S$, PF is immediately satisfied. Next we consider the case there the location is outside the range of the agents in $S$. Recall that the Uniform Phantom mechanism places the facility at the $n$-th entity of the $2n-1$ phantoms and agents. 
There are at least $s$ agents in the range of the locations of the agents in $S$, so the facility is at most $n-s$ phantoms away from the nearest agent in $S$. Since the distance between adjacent phantoms is $L/n$,  the facility is at most distance $(n-s)L/n$ from the nearest agent in $S$. Hence, the maximum distance of the facility from any agent in $S$ is $L\frac{n-s}{n}+r$.
\end{proof}
\medskip{}

  \subsection{Lemma~\ref{lemma: SP, unan, prop implies anon} and proof of Lemma~\ref{lemma: SP, unan, prop implies anon}.}\label{section:lemma: SP, unan, prop implies anon}

	\begin{lemma}\label{lemma: SP, unan, prop implies anon}
	A mechanism that is strategyproof, unanimous, and proportional must also be anonymous.
	\end{lemma}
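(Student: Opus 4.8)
The plan is to prove anonymity by showing that $f$ is invariant under every transposition of two agents; since transpositions generate the symmetric group, this yields invariance under all permutations. The guiding idea is that proportionality completely pins down $f$ on the $2^n$ ``extreme'' profiles, and that strategyproofness forces the value of $f$ on an arbitrary profile to be determined by its values on these extreme profiles. The central difficulty, and the reason the argument is delicate, is that we may not assume anonymity, so the phantom characterization (Corollary~2 of~\cite{MaMo11a}) is unavailable; moreover, in the symmetric-preference domain the class of strategyproof mechanisms is strictly larger than in the single-peaked domain, so one cannot simply quote a generalized-median representation either. Everything must be driven directly by the strategyproofness inequalities of Definition~\ref{def: SP}.

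First I would record the anchor provided by proportionality. For $S\subseteq N$ let $\boldsymbol{x}^S$ denote the extreme profile with $x_i=1$ for $i\in S$ and $x_i=0$ for $i\notin S$. By Proportionality (Definition~\ref{def: F-prop}), $f(\boldsymbol{x}^S)=|S|/n$, which depends only on the cardinality of $S$; hence $f$ is already symmetric on extreme profiles, i.e.\ $f(\boldsymbol{x}^S)=f(\boldsymbol{x}^{\sigma(S)})$ for every permutation $\sigma$. The task is to propagate this symmetry to all profiles. Next, using~\cite{BoJo83}'s Lemma~3 (a strategyproof and unanimous mechanism is Pareto efficient), I may assume $f(\boldsymbol{x})\in[\min_i x_i,\max_i x_i]$ always. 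Then, from strategyproofness I would extract the two structural facts that do the work: (a) \emph{monotonicity}, namely that $f$ is nondecreasing in each coordinate (obtained by applying Definition~\ref{def: SP} to an agent sitting at each of the two peaks in turn), and (b) a \emph{local-invariance} (``uncompromising'') property, established for symmetric single-peaked preferences in~\cite{BoJo83}: if $y=f(\boldsymbol{x})$ and some $x_k\ge y$, then moving $x_k$ to any point of $[y,1]$ leaves the facility at $y$, and symmetrically on the left. Both follow by comparing the distances an agent achieves under its truthful and deviating reports.

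The heart of the proof, and the main obstacle, is to turn (a)--(b) into the statement that $f(\boldsymbol{x})$ is recoverable from the extreme-profile values $\{f(\boldsymbol{x}^S)\}$. Concretely, fix a transposition $(i\,j)$ and a profile $\boldsymbol{x}$ with, say, $x_i<x_j$; write $\boldsymbol{x}'$ for the swapped profile and set $y=f(\boldsymbol{x})$, $y'=f(\boldsymbol{x}')$. I would argue by moving the agents one at a time toward $\{0,1\}$: by property (b), every agent whose peak lies weakly above $y$ can be pushed up to $1$ and every agent whose peak lies weakly below $y$ can be pushed down to $0$ without moving the facility, and the same manipulations applied to $\boldsymbol{x}'$ land on the \emph{same} extreme profile up to the relabelling of $i$ and $j$. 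Because the two terminal extreme profiles differ only by this relabelling, the proportionality anchor forces them to yield the same facility location, and tracing back through the distance-preserving moves gives $y=y'$. The technical work---where the ``involved'' character of the proof resides---is in ordering these single-agent moves so that the hypothesis of (b) remains valid at each step (in particular handling agents whose peaks straddle $y$, and the boundary cases where $y$ equals some $k/n$), and in using monotonicity to rule out the facility slipping strictly between the relevant thresholds. Assembling these steps yields $f(\boldsymbol{x})=f(\boldsymbol{x}')$ for every transposition, and hence anonymity.
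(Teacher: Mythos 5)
Your overall architecture (push agents to the extremes via an uncompromising property, invoke the proportionality anchor on $\{0,1\}^n$ profiles, pull back) is the same as the paper's, and your sketch correctly reproduces the paper's argument for the case in which no agent sits exactly at the facility location. The genuine gap is your structural fact (b). You state it with a weak inequality: if $x_k\ge y=f(\boldsymbol{x})$ then agent $k$ may be moved anywhere in $[y,1]$ without moving the facility. The uncompromising property that actually follows from strategyproofness and unanimity (Proposition~2 of \cite{BoJo83}, which is what the paper uses) holds only for agents located \emph{strictly} above or strictly below $y$; it says nothing about agents located exactly at $y$, and for such agents the claim is false. Indeed the Uniform Phantom mechanism itself is a counterexample: for $n=2$ it is $f(\boldsymbol{x})=\mathrm{med}(x_1,x_2,\tfrac12)$, which is strategyproof, unanimous, and proportional; at $\boldsymbol{x}=(0.3,0.3)$ we have $f(\boldsymbol{x})=0.3$, yet moving agent $1$ (whose peak equals the facility location) to $0.6$ yields $\mathrm{med}(0.6,0.3,0.5)=0.5\neq 0.3$. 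Strategyproofness cannot rule this out, because an agent located at the facility is already at distance zero and so no deviation can benefit her.

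This is not a detail that can be fixed by ``ordering the single-agent moves'': when some agent's location coincides with $f(\boldsymbol{x})$, pushing all \emph{other} agents to $0$ or $1$ leaves a profile that is not extreme, so proportionality does not pin down its value, and the agents at $y$ simply cannot be detached from the facility by uncompromising moves. This situation is exactly the paper's Case~2, which is the bulk of its proof: there one first shows, via auxiliary $\{0,1\}$-profiles and proportionality-based contradictions, that $k_1/n\le f(\boldsymbol{x})$ (where $k_1$ is the number of agents strictly above the facility), and then treats separately the subcases $f(\boldsymbol{x})=k/n$ and $f(\boldsymbol{x})\neq k/n$ for integer $k$, the latter requiring a further argument in which a whole group $G$ of agents is moved from $1$ down to the facility location and the resulting placement is identified by a contradiction pincer. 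None of this machinery is present in, or recoverable from, your sketch, so the proof as proposed does not go through; your reduction to transpositions, while harmless, does not alleviate any of it.
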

	
	\begin{proof}
	Suppose $f$ is strategyproof and satisfies proportionality and unanimity. We wish to show that $f$ is anonymous (Definition~\ref{def: anon}). First we note that by~\citeauthor{BoJo83}'s (\citeyear{BoJo83}) Proposition 2, any unanimous and strategyproof mechanism must satisfy the following uncompromising property.  
	
	\medskip{}
    \begin{definition}[Uncompromising]\label{Def: uncomp}
    A mechanism $f$ is \emph{uncompromising} if, for every profile of locations $\boldsymbol{x}$, and each agent $i\in N$,  if $f(\boldsymbol{x})=y$ then
\begin{align}\label{eqaution: uncompromising 1}
x_i>y &\implies f(x_i',\boldsymbol{x}_{-i})=y\qquad \text{ for all } x_i'\ge y &&\text{ and,}\\
x_i<y &\implies f(x_i' , \boldsymbol{x}_{-i})=y \qquad \text{ for all } x_i'\le y.\label{eqaution: uncompromising 2}
\end{align}
\end{definition}
	\medskip{}

Now consider an arbitrary profile of locations $\boldsymbol{x}$ and an arbitrary permutation of the profile $\boldsymbol{x}$, which we denote by $\boldsymbol{x}_\sigma$. We will show that $f(\boldsymbol{x})=f(\boldsymbol{x}_\sigma)$. First note that if $\boldsymbol{x}$ is such that $x_i=c$ for some $c\in [0,L]$, then $f(\boldsymbol{x})=f(\boldsymbol{x}_\sigma)$ by unanimity. Therefore, we assume that $\boldsymbol{x}$ is such that $x_i\neq x_j$ for some $i,j\in N$. 

\paragraph{Case 1.} Suppose that $f(\boldsymbol{x})\neq x_i$ for any $i\in N$. Recall that \citeauthor{BoJo83}'s (\citeyear{BoJo83}) Lemma~3 says that any strategyproof and unanimous mechanism is Pareto efficient; therefore, $ \min_{i\in N} x_i \le f(\boldsymbol{x})\le  \max_{i\in N} x_i$. Now if all agents strictly below (resp., above) $f(\boldsymbol{x})$ shift their location to $0$ (resp., $L$), then, by the uncompromising property, the facility location must be unchanged. Let $\boldsymbol{x}'$ denote this augmented location profile and let $k'$ denote the number of agents with $x_i'=L$. By proportionality, it must be that $f(\boldsymbol{x})=f(\boldsymbol{x}')=\frac{k'L}{n}$. Now consider the permutation of the profile $\boldsymbol{x}'$, i.e., $\boldsymbol{x}_\sigma'$. The implication of the proportionality property is independent of agent labels; therefore, $f(\boldsymbol{x}_\sigma')=f(\boldsymbol{x}')$.  Now   shift the agent locations in $\boldsymbol{x}_\sigma'$ so that they replicate the permuted location profile $\boldsymbol{x}_\sigma$---note that this process only involves agents strictly above (resp., below) $f(\boldsymbol{x}_\sigma')$ moving to a location above (resp., below) $f(\boldsymbol{x}_\sigma')$. Therefore, by the uncompromising property, it must be that $f(\boldsymbol{x}_\sigma')=f(\boldsymbol{x}_\sigma)$. 
Combining the three sets of equalities gives  
$$f(\boldsymbol{x})=f(\boldsymbol{x}')=f(\boldsymbol{x}_\sigma')=f(\boldsymbol{x}_\sigma).$$
That is, the facility location is unchanged by permutations, i.e.,  anonymity  is satisfied.

\paragraph{Case 2.}  Suppose that $f(\boldsymbol{x})= x_i$ for some $i\in N$.  Let $M\subseteq N$ be the subset of agents with $x_i=f(\boldsymbol{x})$.    Let $M_0, M_1\subseteq N$ correspond to the subset of agents with location strictly below and strictly above $f(\boldsymbol{x})$, respectively. Denote $|M_0|=k_0$ and $|M_1|=k_1$. We first show that 
\begin{align}\label{eq 1}
\frac{k_1L}{n}\le f(\boldsymbol{x}).
\end{align}
 For the sake of  contradiction, suppose that (\ref{eq 1}) does not hold (i.e., $f(\boldsymbol{x})<\frac{k_1L}{n}$), and consider  the location profile $\boldsymbol{x}'$ obtained by modifying $\boldsymbol{x}$ such that the $M_0$ (resp., $M_1$) agents' locations are shifted to  $0$ (resp., $L$) and the other agents' (i.e., those in $M$) have location unchanged. By the uncompromising property, $f(\boldsymbol{x}')=f(\boldsymbol{x})$. Now consider the modified location profile $\boldsymbol{x}''$ such that $x_i''=x_i'$ for all $i\notin M$ and $x_i''=0$ for all $i\in M$.  By proportionality,  $f(\boldsymbol{x}'')=\frac{k_1L}{n}$ and, by supposition that $f(\boldsymbol{x})<\frac{k_1L}{n}$,  we have
\begin{align}\label{eq1 contra}
f(\boldsymbol{x}')=f(\boldsymbol{x})<\frac{k_1L}{n}=f(\boldsymbol{x}'').
\end{align}
Now notice that the  profile $\boldsymbol{x}'$ can be obtained from $\boldsymbol{x}''$ by shifting the subset of $M$ agents' locations from $0$ to $f(\boldsymbol{x})$, which is to the left of $f(\boldsymbol{x}'')$. The uncompromising property then requires that 
$$f(\boldsymbol{x}')=f(\boldsymbol{x}'')=\frac{k_1L}{n},$$
which contradicts~(\ref{eq1 contra}). We conclude that~(\ref{eq 1}) holds.

With condition~(\ref{eq 1}) in hand, we can now proceed by considering two subcases.

\paragraph{Subcase 2a.}  Suppose that $f(\boldsymbol{x})=\frac{kL}{n}$ for some $k\in \{0, \ldots, n\}$. Consider any  profile of locations $\boldsymbol{x}'\in \{0,L\}^n$ with $k$ agents at location $L$. By proportionality, $f(\boldsymbol{x}')=\frac{kL}{n}=f(\boldsymbol{x})$. Note that the proportionality axiom is independent of agent labels and, by~(\ref{eq 1}),  $f(\boldsymbol{x})\ge \frac{k_1L}{n} \implies k\ge k_1$. Therefore,  the permuted location of profile $\boldsymbol{x}_{\sigma}$ can be attained by relabeling agents in $\boldsymbol{x}'$ and then shifting their reports from $L$ (resp., $0$) to their original location that is weakly above (resp., below) $f(\boldsymbol{x}')=\frac{kL}{n}$---by the uncompromising property, the facility location will not change.  Therefore, $f(\boldsymbol{x}_\sigma)=f(\boldsymbol{x}')=\frac{kL}{n}=f(\boldsymbol{x})$, as required.

\paragraph{Subcase 2b.} Suppose that $f(\boldsymbol{x})\neq \frac{kL}{n}$ for any $k\in \{0, \ldots, n\}$. Let $k^*$ be the smallest integer such that $f(\boldsymbol{x})< \frac{k^*L}{n}$; by~(\ref{eq 1}), $k^*>k_1$.   Now consider any location profile $\boldsymbol{x}'\in \{0,L\}^n$ with exactly $k^*$ agents at $L$. By proportionality,
\begin{align}\label{eq lemma anon 0}
f(\boldsymbol{x}')=\frac{k^*L}{n}>f(\boldsymbol{x}).
\end{align}
 Now consider any subset  $G\subseteq N$ that contains $|k^*-k_1|>1$ agents who are located at $L$. Let 
 $\boldsymbol{x}''$ denote the profile obtained from $\boldsymbol{x}'$ by   shifting the $G$ agents' locations to $\hat{x}_G=f(\boldsymbol{x})$. We shall prove that 
\begin{align}\label{eq: proof w/o anon freemana}
f(\boldsymbol{x}'')=\hat{x}_G.
\end{align}
To see this, suppose that this is not the case. Then either 
\begin{align}\label{eq lemma anon 1}
 \hat{x}_G&<f(\boldsymbol{x}'') \qquad \text{or}\\
\label{eq lemma anon 2}
 f(\boldsymbol{x}'')&<\hat{x}_G.
\end{align}
 In the former case, all agents in $G$ have location strictly below $f(\boldsymbol{x}'')$---namely, $\hat{x}_G=f(\boldsymbol{x})$. Therefore, by the uncompromising property, if all agents in $G$   shift their location to $0$ in the profile $\boldsymbol{x}''$, then the facility location is unchanged and continues to be located at $f(\boldsymbol{x}'')$. Proportionality then requires that the facility then be located at $\frac{k_1L}{n}$ and, hence, $f(\boldsymbol{x}'')=\frac{k_1L}{n}$. But then~(\ref{eq lemma anon 1})  implies that 
 $\hat{x}_G<\frac{k_1L}{n}$, which in turn implies that
 $f(\boldsymbol{x})=\hat{x}_G<\frac{k_1L}{n}$---this contradicts~(\ref{eq 1}).  In the latter case,  all agents in $G$ have location strictly above $f(\boldsymbol{x}'')$---namely, $\hat{x}_G=f(\boldsymbol{x})$. Therefore, by the uncompromising property, if all agents in $G$   shift their location back to $L$ in the profile $\boldsymbol{x}''$, then the facility location is unchanged and, by proportionality, is located at $\frac{k^*L}{n}$. Hence, $f(\boldsymbol{x}'') =\frac{k^*L}{n}$. Using~(\ref{eq lemma anon 2}), this implies that $\frac{k^*L}{n}<f(\boldsymbol{x})$, which contradicts~(\ref{eq lemma anon 0}).

Now given~(\ref{eq: proof w/o anon freemana}), by the uncompromising property, shifting any agent with location at 0  (resp., 1) to any location weakly below  (resp., above) $\hat{x}_G$ must leave the facility's location unchanged. Therefore, for any profile  with exactly $k_0, k_1$ agents strictly below $\hat{x}_G$ and strictly above $\hat{x}_G$ and $n-k_0-k_1$ agents located at $\hat{x}_G$, the facility must be located at $\hat{x}_G$. But---since $\hat{x}_G=f(\boldsymbol{x})$---it is immediate that  any permutation of $\boldsymbol{x}$, say $\boldsymbol{x}_\sigma$, satisfies these 3 properties; hence, 
$$f(\boldsymbol{x}_\sigma)=f(\boldsymbol{x}).$$
We conclude that any mechanism that satisfies strategyproofness, proportionality, unanimity must also satisfy anonymity.
	\end{proof}

\subsection{Unanimity is Necessary for Theorem~\ref{thm: no need anon Freeman}.}\label{section:prop: cont needed}

%\begin{proof}[Unanimity is Necessary for Theorem~\ref{thm: no need anon Freeman}]
\begin{proof}
We wish to prove that Theorem~\ref{thm: no need anon Freeman} does not hold if unanimity is removed. {It suffices to} consider the following mechanism for $n=2$
$$
f(\boldsymbol{x})=\begin{cases}
0 &\text{if $x_1, x_2\le L/4$,}\\
L &\text{if $x_1, x_2\ge 3L/4$,}\\
L/2 &\text{else.}
\end{cases}$$ 
This mechanism is clearly anonymous, satisfies proportionality and is not the Uniform Phantom mechanism. It remains to show that it is strategyproof. Using a symmetry argument, we focus on deviations by agent 1 without loss of generality. Suppose $f(\boldsymbol{x})=0$, then it must be that $x_1\le L/4$. But then agent 1 obtains the minimum possible distance to the facility (given $x_1$ and given the mechanism's range); hence, no deviation can  strictly decrease their distance. Suppose $f(\boldsymbol{x})=L/2$, then either $x_1\in (L/4, 3L/4)$ or $x_2\in (L/4,3L/4)$. In the former case, agent 1 obtains the minimum possible distance to the facility  (given $x_1$  and given the mechanism's range); hence, no deviation can  strictly decrease their distance. In the latter case, no deviation by agent 1 can change the facility location. Finally, suppose $f(\boldsymbol{x})=L$, then  it must be that $x_1\ge 3L/4$. But then agent 1 obtains the minimum possible distance to the facility (given $x_1$ and given the mechanism's range); hence, no deviation can  strictly decrease their distance.  Therefore, the mechanism is SP.
\end{proof}

\subsection{Proof of Proposition~\ref{prop: IFS bound cost}.}\label{section:prop: IFS bound cost}
\begin{proof}
By Theorem~\ref{char: IFS and SP}, a strategyproof, anonymous, unanimous mechanism satisfies IFS must be a Phantom mechanism with $n-1$ phantoms all contained in the interval $[\frac{1}{n},L-\frac{1}{n}]$. For such a Phantom mechanism, consider the location profile which places $n-1$ agents at $0$ and $1$ agent at the leftmost phantom $p_1$. The Phantom mechanism places the facility at $p_1$, leading to $(n-1)p_1$ total cost, and the optimal total cost is $p_1$, achieved by placing the facility at $0$. The total cost approximation ratio is therefore at least $n-1$. Now, any Pareto optimal mechanism has a total cost approximation ratio of at most $n-1$. To see this, note that for any agent location profile $\boldsymbol{x}$ where the distance between the leftmost and rightmost agents is $c$, we must have $\Psi^*(\boldsymbol{x})\geq c$, and $\Psi(f(\boldsymbol{x}))\leq (n-1)c$. As a Phantom mechanism with $n-1$ phantoms is Pareto optimal, we see that a Phantom mechanism with $n-1$ phantoms all contained in the interval $[\frac{1}{n},L-\frac{1}{n}]$ has a total cost approximation ratio of $n-1$, and thus the proposition statement follows.
\end{proof}
\subsection{Proof of Lemma~\ref{lemma: IFS bound}.}\label{section:lemma: IFS bound}

%\begin{proof}[Proof of Lemma~\ref{lemma: IFS bound}] 
\begin{proof}
We wish to prove that a mechanism $f$ that satisfies IFS has welfare approximation of at least $1+\frac{n-2}{n^2-2n+2}$. To this end, suppose $f$ satisfies IFS and consider the profile of locations $\boldsymbol{x}\in \{0,L\}^n$ that places $n-1$ agents at $0$. IFS requires that  $f(\boldsymbol{x})\geq\frac{L}{n}$ (and $f(\boldsymbol{x})\leq L(1-\frac{1}{n})$), so any IFS mechanism has welfare of at most 
$$\Phi(f(\boldsymbol{x}))\leq (n-1)L(1-\frac{1}{n})+\frac{L}{n}=\frac{L(n-1)^2 +L}{n}.$$ 
However, for this instance, the {welfare-optimal} welfare is $\Phi^*(\boldsymbol{x})=L(n-1)$ (obtained by locating the facility at the median location, $0$). Therefore, the approximation ratio of $f$ is at least
$$\frac{\Phi^*(\boldsymbol{x})}{\Phi(f(\boldsymbol{x}))}=\frac{nL(n-1)}{ L(n-1)^2 +L }=1+\frac{n-2}{(n-1)^2+1}.$$
\end{proof}

\subsection{Proof of Theorem~\ref{thm: IFS bound}.}\label{section:thm: IFS bound}

%\begin{proof}[Proof of Theorem~\ref{thm: IFS bound}]
\begin{proof}
We wish to prove that among all IFS mechanisms, the Constrained Median mechanism provides the best welfare approximation guarantee i.e.,  it achieves the approximation ratio in Lemma~\ref{lemma: IFS bound}. Let $f_{\text{CM}}$ denote the Constrained Median mechanism.  We shall prove that for any location profile $\boldsymbol{x}\in [0,L]^n$ there exists some profile $\tilde{\boldsymbol{x}}\in \{0,L\}^n$ such that 
\begin{align}
\label{equation: approx ratio restricted equality0}
\frac{\Phi^*(\tilde{\boldsymbol{x}})}{\Phi(f_{\text{CM}}(\tilde{\boldsymbol{x}}))}  &\ge \frac{\Phi^*(\boldsymbol{x})}{\Phi(f_{\text{CM}}(\boldsymbol{x}))}, 
\end{align}
which implies that
\[\max_{\boldsymbol{x}\in [0,L]^n}\frac{\Phi^*(\boldsymbol{x})}{\Phi(f_{\text{CM}}(\boldsymbol{x}))}=\max_{\boldsymbol{x}\in \{0,L\}^n}\frac{\Phi^*(\boldsymbol{x})}{\Phi(f_{\text{CM}}(\boldsymbol{x}))}.\]
We begin by noting that, whenever $f_{\text{CM}}(\boldsymbol{x})\in (L/n, L(1-1/n))$, the facility location coincides with the median location and $f_{\text{CM}}$ obtains the maximum welfare.  Thus, we can restrict our attention to profiles such that $f_{\text{CM}}(\boldsymbol{x})\notin (L/n, L(1-1/n))$.  We proceed to prove~(\ref{equation: approx ratio restricted equality0}) by considering a sequence of  profiles that modify $\boldsymbol{x}$ into some profile $\tilde{\boldsymbol{x}}\in \{0,L\}^n$ such that each modified profile guarantees a weakly higher  welfare  approximation ratio.

 Let the agent labels be ordered such that $x_1\le \ldots \le x_n$; let $i={med}$ denote the median agent. Without loss of generality, suppose $f_{\text{CM}}(\boldsymbol{x})\in [0, L/n]$. This implies that the median agent is weakly below $f_{\text{CM}}(\boldsymbol{x})$, i.e.,   $x_{med}\le f_{\text{CM}}(\boldsymbol{x})$. {To assist with visualizing the proof technique, we provide a running example with $n=5$ agents. Figure~\ref{figure:thm5x} illustrates a profile $\boldsymbol{x}$ such that $x_{med}\le f_{\text{CM}}(\boldsymbol{x})$; in particular, $x_{med}=x_3$ and $ f_{\text{CM}}(\boldsymbol{x})=L/5$. }

   \begin{figure}[H]
 	  	 \begin{center}  
   		      	             \begin{tikzpicture}[scale=1]
   			   	      	                 \centering
   	   	      	                 \draw[-] (0,0) -- (12,0);
								 
   								   \draw[-] (0,0) -- (0,0.25);
   								    \draw[-] (12,0) -- (12,0.25);
   									      \draw[-] (2,0) -- (2,0.25);
     \draw[-] (10,0) -- (10,0.25);
  
     \draw (0,-.4) node(c){\small $0$};
       \draw (2,-.4) node(c){\small $L/5$};
   			 	 \draw (10,-.4) node(c){\small $4L/5$};
       \draw (12,-.4) node(c){\small $L$};
       
	   		  	  	  \draw (0,0.5) node(c){\small $x_1$};

   	  \draw (0.5,0.5) node(c){\small $x_2$};
   	  \draw (1,0.5) node(c){\small $x_3$};
	     				   	   \draw (1.5,0.5) node(c){\small $x_4$};

   	  	  \draw (4,0.5) node(c){\small $x_5$};

   			   	      	  \end{tikzpicture}
   			   	       	\end{center}
   			   	      	 \caption{Running example. Profile $\boldsymbol{x}$}
   			   	      	\label{figure:thm5x}
   			   	      	\end{figure}
 
 First,  consider the modified profile $\boldsymbol{x}'$ such that $x_i'=0$ for $i\in N':=\{i \ : \ i<{med}\}$ and $x_i'=x_i$ for all $i\notin N'$. {Applying this operation to the running example illustrated in Figure~\ref{figure:thm5x}, we obtain the profile illustrated in Figure~\ref{figure:thm5x'}.   }

   \begin{figure}[H]
 	  	 \begin{center}  
   		      	             \begin{tikzpicture}[scale=1]
   			   	      	                 \centering
   	   	      	                 \draw[-] (0,0) -- (12,0);
								 
   								   \draw[-] (0,0) -- (0,0.25);
   								    \draw[-] (12,0) -- (12,0.25);
   									      \draw[-] (2,0) -- (2,0.25);
     \draw[-] (10,0) -- (10,0.25);
  
     \draw (0,-.4) node(c){\small $0$};
       \draw (2,-.4) node(c){\small $L/5$};
   			 	 \draw (10,-.4) node(c){\small $4L/5$};
       \draw (12,-.4) node(c){\small $L$};
       
	   		  	  	  \draw (0,0.5) node(c){\small $x_1'$};

   	  \draw (0,1) node(c){\small $x_2'$};
   	  \draw (1,0.5) node(c){\small $x_3'$};
	     				   	   \draw (1.5,0.5) node(c){\small $x_4'$};

   	  	  \draw (4,0.5) node(c){\small $x_5'$};

   			   	      	  \end{tikzpicture}
   			   	       	\end{center}
   			   	      	 \caption{Running example. Profile $\boldsymbol{x}'$}
   			   	      	\label{figure:thm5x'}
   			   	      	\end{figure}
 
   In this modified profile, {we have moved all agents strictly left of the median agent to $0$, so} neither the {welfare-optimal} (median) location nor the facility location under $f_{\text{CM}}$ changes. Hence, relative to $\Phi^*(\boldsymbol{x})$ and $\Phi(f_{\text{CM}}(\boldsymbol{x}))$, the optimal  welfare, $\Phi^*(\boldsymbol{x}')$, and the  welfare provided by $f_{\text{CM}}$, $\Phi(f_{\text{CM}}(\boldsymbol{x}'))$, decrease by the same amount---namely, $\sum_{i\in N'} x_i\ge 0$. We conclude that
\begin{align*}
\frac{\Phi^*(\boldsymbol{x}')}{\Phi(f_{\text{CM}}(\boldsymbol{x}'))}
&= \frac{\Phi^*(\boldsymbol{x})-\sum_{i\in N'} x_i}{\Phi(f_{\text{CM}}(\boldsymbol{x}))-\sum_{i\in N'} x_i}\ge \frac{\Phi^*(\boldsymbol{x})}{\Phi(f_{\text{CM}}(\boldsymbol{x}))},
\end{align*}
where the final inequality follows because $(x-a)/(y-a)\ge x/y$ for any $a\ge 0$ and $0<y\le x$.

Next we consider the  modified profile $\boldsymbol{x}''$ such that $x_{med}''=0$ and $x_i''=x_i'$ for all $i\neq med$. {Applying this operation to the running example illustrated in Figure~\ref{figure:thm5x'}, we obtain the profile illustrated in Figure~\ref{figure:thm5x''}.   }

   \begin{figure}[H]
 	  	 \begin{center}  
   		      	             \begin{tikzpicture}[scale=1]
   			   	      	                 \centering
   	   	      	                 \draw[-] (0,0) -- (12,0);
								 
   								   \draw[-] (0,0) -- (0,0.25);
   								    \draw[-] (12,0) -- (12,0.25);
   									      \draw[-] (2,0) -- (2,0.25);
     \draw[-] (10,0) -- (10,0.25);
  
     \draw (0,-.4) node(c){\small $0$};
       \draw (2,-.4) node(c){\small $L/5$};
   			 	 \draw (10,-.4) node(c){\small $4L/5$};
       \draw (12,-.4) node(c){\small $L$};
       
	   		  	  	  \draw (0,0.5) node(c){\small $x_1''$};

   	  \draw (0,1) node(c){\small $x_2''$};
   	  \draw (0,1.5) node(c){\small $x_3''$};
	     				   	   \draw (1.5,0.5) node(c){\small $x_4''$};

   	  	  \draw (4,0.5) node(c){\small $x_5''$};

   			   	      	  \end{tikzpicture}
   			   	       	\end{center}
   			   	      	 \caption{Running example. Profile $\boldsymbol{x}''$}
   			   	      	\label{figure:thm5x''}
   			   	      	\end{figure}

 In this modified profile,  the welfare-optimal (median) location moves from $x_{med}$ to 0, so the facility location under $f_{\text{CM}}$ remains unchanged, i.e., $f_{\text{CM}}(\boldsymbol{x}'')=f(\boldsymbol{x}')$. Hence, relative to  $\Phi^*(\boldsymbol{x}')$, the optimal welfare, $\Phi^*(\boldsymbol{x}'')$, decreases by $x_{med}$ if $n$ is even and decreases by $0$ otherwise;  relative to  $\Phi(f_{\text{CM}}(\boldsymbol{x}'))$, the   welfare  under $f_{\text{CM}}$, $\Phi(f_{\text{CM}}(\boldsymbol{x}''))$, decreases by $x_{med}$. {Defining the indicator function $ \mathbb{I}_{n \text{ even.}}$ as $1$ if $n$ is even and $0$ otherwise}, we conclude that  
\begin{align*}
\frac{\Phi^*(\boldsymbol{x}'')}{\Phi(f_{\text{CM}}(\boldsymbol{x}''))}
&= \frac{\Phi^*(\boldsymbol{x}')- x_{med}\mathbb{I}_{n \text{ even.}}}{\Phi(f_{\text{CM}}(\boldsymbol{x}'))- x_{med}}\ge \frac{\Phi^*(\boldsymbol{x}') }{\Phi(f_{\text{CM}}(\boldsymbol{x}')) } \ge \frac{\Phi^*(\boldsymbol{x}) }{\Phi(f_{\text{CM}}(\boldsymbol{x})) }. 
\end{align*}
Now either $x_n\ge L/n$ or $x_n<L/n$. Suppose the former case holds, then 
$$f_{\text{CM}}(\boldsymbol{x})=L/n=f_{\text{CM}}(\boldsymbol{x}')=f_{\text{CM}}(\boldsymbol{x}'').$$
Consider the modified profile $\boldsymbol{x}'''\in \{0,L\}^n$ such that $x_i'''=L$ for all $i\in N''':=\{ i \ : x_i''\ge L/n\}$ and $x_i'''=0$ for all  $i\notin N'''$.  {Applying this operation to the running example illustrated in Figure~\ref{figure:thm5x''}, we obtain the profile illustrated in Figure~\ref{figure:thm5x'''}.   }

   \begin{figure}[H]
 	  	 \begin{center}  
   		      	             \begin{tikzpicture}[scale=1]
   			   	      	                 \centering
   	   	      	                 \draw[-] (0,0) -- (12,0);
								 
   								   \draw[-] (0,0) -- (0,0.25);
   								    \draw[-] (12,0) -- (12,0.25);
   									      \draw[-] (2,0) -- (2,0.25);
     \draw[-] (10,0) -- (10,0.25);
  
     \draw (0,-.4) node(c){\small $0$};
       \draw (2,-.4) node(c){\small $L/5$};
   			 	 \draw (10,-.4) node(c){\small $4L/5$};
       \draw (12,-.4) node(c){\small $L$};
       
	   		  	  	  \draw (0,0.5) node(c){\small $x_1'''$};

   	  \draw (0,1) node(c){\small $x_2'''$};
   	  \draw (0,1.5) node(c){\small $x_3'''$};
	     				   	   \draw (0,2) node(c){\small $x_4'''$};

   	  	  \draw (12,0.5) node(c){\small $x_5'''$};

   			   	      	  \end{tikzpicture}
   			   	       	\end{center}
   			   	      	 \caption{Running example. Profile $\boldsymbol{x}'''$}
   			   	      	\label{figure:thm5x'''}
   			   	      	\end{figure}

 In this modified profile, {we have moved all agent locations that were weakly right of $L/n$ in $\boldsymbol{x}''$ to $L$ and all other agents' locations are shifted to $0$.} Under $\boldsymbol{x}'''$, the welfare-optimal (median) location remains unchanged (at $x_{med}''=0$) and the facility location under $f_{\text{CM}}$ remains at $L/n$.  We conclude that 
\begin{align*}
\frac{\Phi^*(\boldsymbol{x}''')}{\Phi(f_{\text{CM}}(\boldsymbol{x}'''))}
&= \frac{\Phi^*(\boldsymbol{x}'')-\sum_{i\in N'''} (L-x_i'')+\sum_{i\notin N'''} x_i''}{\Phi(f_{\text{CM}}(\boldsymbol{x}''))-\sum_{i\in N'''} (L-x_i'')-\sum_{i\notin N'''} x_i''}\ge \frac{\Phi^*(\boldsymbol{x}'') }{\Phi(f_{\text{CM}}(\boldsymbol{x}'')) }\ge
   \frac{\Phi^*(\boldsymbol{x}) }{\Phi(f_{\text{CM}}(\boldsymbol{x})) }.
\end{align*}
Therefore, there exists $\tilde{\boldsymbol{x}}\in \{0,L\}^n$---namely, $\boldsymbol{x}'''$---with weakly higher  welfare  approximation ratio than $\boldsymbol{x}$.

Finally, suppose the latter case, $x_n<L/n$, holds. In this case, 
$$f_{\text{CM}}(\boldsymbol{x})=x_n=f_{\text{CM}}(\boldsymbol{x}')=f_{\text{CM}}(\boldsymbol{x}'')<L/n.$$
Consider the modified profile  $\boldsymbol{x}''''$ such that $x_n''''=L/n$  and $x_i''''=x_i''$ otherwise. In this modified profile, {we have moved the last agent $x_n''$ to $L/n$, so} the {welfare-optimal} (median) location remains unchanged (at $x_{med}''=0$) and the facility location under $f_{\text{CM}}$ shifts to $L/n$, i.e., $f_{\text{CM}}(\boldsymbol{x}'''')=L/n$. We conclude that  
\begin{align*}
\frac{\Phi^*(\boldsymbol{x}'''')}{\Phi(f_{\text{CM}}(\boldsymbol{x}''''))}
&= \frac{\Phi^*(\boldsymbol{x}'')-(L/n-x_n)}{\Phi(f_{\text{CM}}(\boldsymbol{x}''))-(n-1)(L/n-x_n)}\ge  \frac{\Phi^*(\boldsymbol{x}'')}{\Phi(f_{\text{CM}}(\boldsymbol{x}''))}.
\end{align*}
Now the same steps from the former case can be used to show that there exists $\tilde{\boldsymbol{x}}\in \{0,L\}^n$ with weakly higher welfare approximation ratio than $\boldsymbol{x}$. Therefore,~(\ref{equation: approx ratio restricted equality0}) holds. 

It is straightforward to calculate the maximum  welfare  approximation ratio among profiles $\tilde{\boldsymbol{x}}\in \{0,L\}^n$. The maximum is attained when $ \tilde{\boldsymbol{x}}$ has $(n-1)$ agents at 0 and $1$ agent at $L$, which provides the required  welfare  approximation ratio (see Proof of Lemma~\ref{lemma: IFS bound}).
 \end{proof}

\subsection{Proof of Lemma~\ref{lemma: UFS bound}.}\label{section:lemma: UFS bound}

%\begin{proof}[Proof of Lemma~\ref{lemma: UFS bound}]
\begin{proof}
We wish to prove that any mechanism satisfying UFS (or proportionality or PF) has a welfare approximation of at least~(\ref{eq: lemma UFS lower bound}). To this end,
suppose $f$ satisfies UFS. Consider the agent location profile $\boldsymbol{x}\in \{0,L\}^n$ that has $k\le n/2$ agents at $L$.  The optimal  welfare $\Phi^*(\boldsymbol{x})=L(n-k)$ is obtained by placing the facility at the median location $0$. UFS requires that $f(\boldsymbol{x})=\frac{kL}{n}$, which provides welfare $\Phi(f(\boldsymbol{x}))=\frac{L(k^2+(n-k)^2)}{n}$. Therefore, the  welfare approximation ratio is
\begin{align*}
\frac{\Phi^*(\boldsymbol{x})}{\Phi(f(\boldsymbol{x}))}&=\frac{n(n-k)}{k^2+(n-k)^2}.
\end{align*}
Maximizing the above expression with respect to $k\in \mathbb{N} \ :\  0\le k\le n/2 $ provides the  welfare  approximation bound in the lemma statement. {Defining $r:=\frac{k}{n}$, this ratio is equal to
\[\frac{\Phi^*(\boldsymbol{x})}{\Phi(f(\boldsymbol{x}))}=\frac{1-r}{2r^2-2r+1}.\]
The derivative of this expression with respect to $r$ is $\frac{2r^2-4r+1}{(2r^2-2r+1)^2}$, which is equal to $0$ when $r=\frac{2-\sqrt{2}}{2}$ or $r=\frac{2+\sqrt{2}}{2}$. We ignore the latter as $k$ cannot exceed $n$, and we note that $r=\frac{2-\sqrt{2}}{2}$ is a maximum point as the derivative is positive for $r\in [0,\frac{2-\sqrt{2}}{2})$ and negative for $r\in (\frac{2-\sqrt{2}}{2},1]$. We therefore deduce that $\frac{\Phi^*(\boldsymbol{x})}{\Phi(f(\boldsymbol{x}))}$ is maximized when $\frac{k}{n}=\frac{2-\sqrt{2}}{2}$, providing  welfare  approximation ratio  $\frac{\sqrt{2}+1}{2}$.} This approximation ratio can be achieved asymptotically as $n \rightarrow \infty$.
\end{proof}

\subsection{Proof of Theorem~\ref{thm: UFS bound}.}\label{section:thm: UFS bound}

%\begin{proof}[Proof of Theorem~\ref{thm: UFS bound}]
\begin{proof}
We wish to prove that among all UFS (or proportional or PF) mechanisms, the Uniform Phantom mechanism provides the best  welfare  approximation guarantee, i.e.,  it achieves the approximation ratio in Lemma~\ref{lemma: UFS bound}. To this end, let $f_{\text{Unif}}$ denote the Uniform Phantom mechanism. We prove that for any location profile $\boldsymbol{x}\in [0,L]^n$ there exists some profile $\tilde{\boldsymbol{x}}\in \{0,L\}^n$ such that
\begin{align}\label{equation: funif approx}
\frac{\Phi^*(\tilde{\boldsymbol{x}})}{\Phi(f_{\text{Unif}}(\tilde{\boldsymbol{x}}))}  &\ge \frac{\Phi^*(\boldsymbol{x})}{\Phi(f_{\text{Unif}}(\boldsymbol{x}))}.
\end{align}
This implies that
\[\max_{\boldsymbol{x}\in [0,L]^n}\frac{\Phi^*(\boldsymbol{x})}{\Phi(f_{\text{Unif}}(\boldsymbol{x}))}=\max_{\boldsymbol{x}\in \{0,L\}^n}\frac{\Phi^*(\boldsymbol{x})}{\Phi(f_{\text{Unif}}(\boldsymbol{x}))}.\]

Let the agent labels be ordered such that $x_1\le \ldots \le x_n$; let $i={med}$ denote the median agent.  Suppose without loss of generality that  $\boldsymbol{x} \ : \ x_{med}< f_{\text{Unif}}(\boldsymbol{x})$; if $x_{med}=f_{\text{Unif}}(\boldsymbol{x})$, then~(\ref{equation: funif approx}) is trivially satisfied.  {To assist with visualizing the proof technique, we provide a running example with $n=6$ agents. Figure~\ref{figure:thm6x} illustrates a profile $\boldsymbol{x}$ such that $x_{med}< f_{\text{Unif}}(\boldsymbol{x})$; in particular, $x_{med}=x_3$ and $ f_{\text{Unif}}(\boldsymbol{x})=x_5$. }

   \begin{figure}[H]
 	  	 \begin{center}  
   		      	             \begin{tikzpicture}[scale=1]
   			   	      	                 \centering
   	   	      	                 \draw[-] (0,0) -- (12,0);
								 
   								   \draw[-] (0,0) -- (0,0.25);
								   							   \draw[-] (4,0) -- (4,0.25);
															   							   \draw[-] (6,0) -- (6,0.25);
																						   							   \draw[-] (8,0) -- (8,0.25);
   								    \draw[-] (12,0) -- (12,0.25);
   									      \draw[-] (2,0) -- (2,0.25);
     \draw[-] (10,0) -- (10,0.25);
  
     \draw (0,-.4) node(c){\small $0$};
       \draw (2,-.4) node(c){\small $L/6$};
              \draw (4,-.4) node(c){\small $2L/6$};
                            \draw (6,-.4) node(c){\small $3L/6$};
                                                        \draw (8,-.4) node(c){\small $4L/6$};
   			 	 \draw (10,-.4) node(c){\small $5L/6$};
       \draw (12,-.4) node(c){\small $L$};
       
	   		  	  	  \draw (0,0.5) node(c){\small $x_1$};

   	  \draw (0.5,0.5) node(c){\small $x_2$};
   	  \draw (1,0.5) node(c){\small $x_3$};
	     				   	   \draw (1.5,0.5) node(c){\small $x_4$};
						    \draw (3,0.5) node(c){\small $x_5$};

   	  	  \draw (4,0.5) node(c){\small $x_6$};

   			   	      	  \end{tikzpicture}
   			   	       	\end{center}
   			   	      	 \caption{Running example. Profile $\boldsymbol{x}$}
   			   	      	\label{figure:thm6x}
   			   	      	\end{figure}

First, consider the modified profile $\boldsymbol{x}'$ such that $x_i'=L$  for all $i\in N':=\{i \ :\ f_{\text{Unif}}(\boldsymbol{x})<x_i\}$, $x_i'=0$ for all $i\in N'':=\{i:i<med\}$, and $x_i'=x_i$ for all $i\notin N'\cup N''$---note that $N'\cap N''=\emptyset$. {Applying this operation to the running example illustrated in Figure~\ref{figure:thm6x}, we obtain the profile illustrated in Figure~\ref{figure:thm6x'}.   }

   \begin{figure}[H]
 	  	 \begin{center}  
   		      	             \begin{tikzpicture}[scale=1]
   			   	      	                 \centering
   	   	      	                 \draw[-] (0,0) -- (12,0);
								 
   								   \draw[-] (0,0) -- (0,0.25);
								   							   \draw[-] (4,0) -- (4,0.25);
															   							   \draw[-] (6,0) -- (6,0.25);
																						   							   \draw[-] (8,0) -- (8,0.25);
   								    \draw[-] (12,0) -- (12,0.25);
   									      \draw[-] (2,0) -- (2,0.25);
     \draw[-] (10,0) -- (10,0.25);
  
     \draw (0,-.4) node(c){\small $0$};
       \draw (2,-.4) node(c){\small $L/6$};
              \draw (4,-.4) node(c){\small $2L/6$};
                            \draw (6,-.4) node(c){\small $3L/6$};
                                                        \draw (8,-.4) node(c){\small $4L/6$};
   			 	 \draw (10,-.4) node(c){\small $5L/6$};
       \draw (12,-.4) node(c){\small $L$};
       
	   		  	  	  \draw (0,0.5) node(c){\small $x_1'$};

   	  \draw (0,1) node(c){\small $x_2'$};
   	  \draw (1,0.5) node(c){\small $x_3'$};
	     				   	   \draw (1.5,0.5) node(c){\small $x_4'$};
						    \draw (3,0.5) node(c){\small $x_5'$};

   	  	  \draw (12,0.5) node(c){\small $x_6'$};

   			   	      	  \end{tikzpicture}
   			   	       	\end{center}
   			   	      	 \caption{Running example. Profile $\boldsymbol{x}'$}
   			   	      	\label{figure:thm6x'}
   			   	      	\end{figure}

 In this modified profile, {we have moved all agents with location strictly to the right of the Uniform Phantom mechanism location to $L$, and all agents strictly left of the median to 0.} {Under $\boldsymbol{x}'$,} neither the {welfare-optimal} (median) location nor the facility location under $f_{\text{Unif}}$ changes. Therefore, relative to $\Phi^*(\boldsymbol{x})$ and $\Phi(f_{\text{Unif}}(\boldsymbol{x}))$, the optimal  welfare, $\Phi^*(\boldsymbol{x}')$, and the  welfare under $f$, $\Phi(f_{\text{Unif}}(\boldsymbol{x}'))$, decrease by the same amount---namely, $\sum_{i\in N'}(L-x_i)+\sum_{i\in N''}x_i \geq 0$. We conclude that
\begin{align*}
\frac{\Phi^*(\boldsymbol{x}')}{\Phi(f_{\text{Unif}}(\boldsymbol{x}'))}
&= \frac{\Phi^*(\boldsymbol{x})-\sum_{i\in N'} (L-x_i)-\sum_{i\in N''} x_i}{\Phi(f_{\text{Unif}}(\boldsymbol{x}))-\sum_{i\in N'} (L-x_i)-\sum_{i\in N''} x_i}\ge \frac{\Phi^*(\boldsymbol{x})}{\Phi(f_{\text{Unif}}(\boldsymbol{x}))}.
\end{align*}

Next we consider the modified profile $\boldsymbol{x}''$ such that $x_{med}''=0$ and $x_i''=x_i'$ for all $i\neq med$.  {Applying this operation to the running example illustrated in Figure~\ref{figure:thm6x'}, we obtain the profile illustrated in Figure~\ref{figure:thm6x''}.   }

   \begin{figure}[H]
 	  	 \begin{center}  
   		      	             \begin{tikzpicture}[scale=1]
   			   	      	                 \centering
   	   	      	                 \draw[-] (0,0) -- (12,0);
								 
   								   \draw[-] (0,0) -- (0,0.25);
								   							   \draw[-] (4,0) -- (4,0.25);
															   							   \draw[-] (6,0) -- (6,0.25);
																						   							   \draw[-] (8,0) -- (8,0.25);
   								    \draw[-] (12,0) -- (12,0.25);
   									      \draw[-] (2,0) -- (2,0.25);
     \draw[-] (10,0) -- (10,0.25);
  
     \draw (0,-.4) node(c){\small $0$};
       \draw (2,-.4) node(c){\small $L/6$};
              \draw (4,-.4) node(c){\small $2L/6$};
                            \draw (6,-.4) node(c){\small $3L/6$};
                                                        \draw (8,-.4) node(c){\small $4L/6$};
   			 	 \draw (10,-.4) node(c){\small $5L/6$};
       \draw (12,-.4) node(c){\small $L$};
       
	   		  	  	  \draw (0,0.5) node(c){\small $x_1''$};

   	  \draw (0,1) node(c){\small $x_2''$};
   	  \draw (0,1.5) node(c){\small $x_3''$};
	     				   	   \draw (1.5,0.5) node(c){\small $x_4''$};
						    \draw (3,0.5) node(c){\small $x_5''$};

   	  	  \draw (12,0.5) node(c){\small $x_6''$};

   			   	      	  \end{tikzpicture}
   			   	       	\end{center}
   			   	      	 \caption{Running example. Profile $\boldsymbol{x}''$}
   			   	      	\label{figure:thm6x''}
   			   	      	\end{figure}

In this modified profile, the {welfare-optimal} (median) location moves from $x_{med}$ to 0 and the facility location under $f_{\text{Unif}}$ remains unchanged, i.e., $f_{\text{Unif}}(\boldsymbol{x}'')=f_{\text{Unif}}(\boldsymbol{x}')$. Hence,   relative to $\Phi^*(\boldsymbol{x}')$, the optimal  welfare, $\Phi^*(\boldsymbol{x}'')$, decreases by $x_{med}$ if $n$ is even and decreases by $0$ otherwise; relative  to $\Phi(f_{\text{Unif}}(\boldsymbol{x}'))$, the   welfare under $f_{\text{Unif}}$, $\Phi(f_{\text{Unif}}(\boldsymbol{x}''))$, decreases by $x_{med}$. We conclude that 
\begin{align*}
\frac{\Phi^*(\boldsymbol{x}'')}{\Phi(f_{\text{Unif}}(\boldsymbol{x}''))}
&= \frac{\Phi^*(\boldsymbol{x}')- x_{med}\mathbb{I}_{n \text{ even.}}}{\Phi(f_{\text{Unif}}(\boldsymbol{x}'))- x_{med}}\ge \frac{\Phi^*(\boldsymbol{x}') }{\Phi(f_{\text{Unif}}(\boldsymbol{x}')) } \ge \frac{\Phi^*(\boldsymbol{x}) }{\Phi(f_{\text{Unif}}(\boldsymbol{x})) }. 
\end{align*}
Now consider the modified profile $\boldsymbol{x}'''$ such that $x_i'''=0$ for all $i\in N''':=\{i:x_i''< f_{\text{Unif}}(\boldsymbol{x})\}$ and $x_i'''=x_i''$ for all $i\notin N'''$. {Applying this operation to the running example illustrated in Figure~\ref{figure:thm6x''}, we obtain the profile illustrated in Figure~\ref{figure:thm6x'''}.   }

   \begin{figure}[H]
 	  	 \begin{center}  
   		      	             \begin{tikzpicture}[scale=1]
   			   	      	                 \centering
   	   	      	                 \draw[-] (0,0) -- (12,0);
								 
   								   \draw[-] (0,0) -- (0,0.25);
								   							   \draw[-] (4,0) -- (4,0.25);
															   							   \draw[-] (6,0) -- (6,0.25);
																						   							   \draw[-] (8,0) -- (8,0.25);
   								    \draw[-] (12,0) -- (12,0.25);
   									      \draw[-] (2,0) -- (2,0.25);
     \draw[-] (10,0) -- (10,0.25);
  
     \draw (0,-.4) node(c){\small $0$};
       \draw (2,-.4) node(c){\small $L/6$};
              \draw (4,-.4) node(c){\small $2L/6$};
                            \draw (6,-.4) node(c){\small $3L/6$};
                                                        \draw (8,-.4) node(c){\small $4L/6$};
   			 	 \draw (10,-.4) node(c){\small $5L/6$};
       \draw (12,-.4) node(c){\small $L$};
       
	   		  	  	  \draw (0,0.5) node(c){\small $x_1'''$};

   	  \draw (0,1) node(c){\small $x_2'''$};
   	  \draw (0,1.5) node(c){\small $x_3'''$};
	     				   	   \draw (0,2) node(c){\small $x_4'''$};
						    \draw (3,0.5) node(c){\small $x_5'''$};

   	  	  \draw (12,0.5) node(c){\small $x_6'''$};

   			   	      	  \end{tikzpicture}
   			   	       	\end{center}
   			   	      	 \caption{Running example. Profile $\boldsymbol{x}'''$}
   			   	      	\label{figure:thm6x'''}
   			   	      	\end{figure}

In this modified profile, {we move all agents strictly left of the Uniform Phantom mechanism's facility location to $0$, so} neither the {welfare-optimal} (median) location of 0, nor the facility location under $f_{\text{Unif}}$ changes. Hence, relative to $\Phi^*(\boldsymbol{x}'')$, the optimal  welfare, $\Phi^*(\boldsymbol{x}'')$, increases by $\sum_{i\in N'''} x_i''$; relative  to $\Phi(f_{\text{Unif}}(\boldsymbol{x}''))$, the   welfare under $f_{\text{Unif}}$, $\Phi(f_{\text{Unif}}(\boldsymbol{x}''))$, decreases by $\sum_{i\in N'''} x_i''$. We conclude that
\begin{align*}
\frac{\Phi^*(\boldsymbol{x}''')}{\Phi(f_{\text{Unif}}(\boldsymbol{x}'''))}
&= \frac{\Phi^*(\boldsymbol{x}'')+\sum_{i\in N'''} x_i''}{\Phi(f_{\text{Unif}}(\boldsymbol{x}''))-\sum_{i\in N'''} x_i''}\ge \frac{\Phi^*(\boldsymbol{x}'') }{\Phi(f_{\text{Unif}}(\boldsymbol{x}'')) }\ge  \frac{\Phi^*(\boldsymbol{x}) }{\Phi(f_{\text{Unif}}(\boldsymbol{x})) }.
\end{align*}
Lastly, consider the modified profile $\boldsymbol{x}''''$ such that $x_i''''=L$ for all $i\in N''''=\{i \ : \  f_{\text{Unif}}(\boldsymbol{x'''}) \le x_i\}$ and $x_i''''=0$ for all $i\notin N''''$. {Applying this operation to the running example illustrated in Figure~\ref{figure:thm6x'''}, we obtain the profile illustrated in Figure~\ref{figure:thm6x''''}. In  Figure~\ref{figure:thm6x''''}, the Uniform Phantom mechanism's  location increases to $2L/6$.}

   \begin{figure}[H]
 	  	 \begin{center}  
   		      	             \begin{tikzpicture}[scale=1]
   			   	      	                 \centering
   	   	      	                 \draw[-] (0,0) -- (12,0);
								 
   								   \draw[-] (0,0) -- (0,0.25);
								   							   \draw[-] (4,0) -- (4,0.25);
															   							   \draw[-] (6,0) -- (6,0.25);
																						   							   \draw[-] (8,0) -- (8,0.25);
   								    \draw[-] (12,0) -- (12,0.25);
   									      \draw[-] (2,0) -- (2,0.25);
     \draw[-] (10,0) -- (10,0.25);
  
     \draw (0,-.4) node(c){\small $0$};
       \draw (2,-.4) node(c){\small $L/6$};
              \draw (4,-.4) node(c){\small $2L/6$};
                            \draw (6,-.4) node(c){\small $3L/6$};
                                                        \draw (8,-.4) node(c){\small $4L/6$};
   			 	 \draw (10,-.4) node(c){\small $5L/6$};
       \draw (12,-.4) node(c){\small $L$};
       
	   		  	  	  \draw (0,0.5) node(c){\small $x_1''''$};

   	  \draw (0,1) node(c){\small $x_2''''$};
   	  \draw (0,1.5) node(c){\small $x_3''''$};
	     				   	   \draw (0,2) node(c){\small $x_4''''$};
						    \draw (12,1) node(c){\small $x_5''''$};

   	  	  \draw (12,0.5) node(c){\small $x_6''''$};

   			   	      	  \end{tikzpicture}
   			   	       	\end{center}
   			   	      	 \caption{Running example. Profile $\boldsymbol{x}''''$}
   			   	      	\label{figure:thm6x''''}
   			   	      	\end{figure}

Under this modified profile, {we have moved all agents weakly right of the Uniform Phantom mechanism's location to $L$, so} the {welfare-optimal} (median) location does not change;  the  facility location under $f_{\text{Unif}}$ moves to a (weakly) higher location, i.e., $f_{\text{Unif}}(\boldsymbol{x}'''') \ : \ f_{\text{Unif}}(\boldsymbol{x}''')\le f_{\text{Unif}}(\boldsymbol{x}'''')$.  

Relative to $\Phi^*(\boldsymbol{x}''')$, the optimal  welfare, $\Phi^*(\boldsymbol{x}'''')$, decreases by $\sum_{i\in N''''}(L-x_i''')$. Relative to $\Phi(f_{\text{Unif}}(\boldsymbol{x}'''))$, the welfare under $f_{\text{Unif}}$, $\Phi(f_{\text{Unif}}(\boldsymbol{x}''''))$ also decreases by $\sum_{i\in N''''}(L-x_i''')$ due to the movement in agents in $N''''$. In addition, $\Phi(f_{\text{Unif}}(\boldsymbol{x}''''))$ decreases due to the movement in the facility location: this follows because the number of agents at location 0 is weakly higher than the number of agents at location $L$. Let this additional decrease in $\Phi(f_{\text{Unif}}(\boldsymbol{x}''''))$ be denoted by $\Delta>0$.   We conclude that
\begin{align*}
\frac{\Phi^*(\boldsymbol{x}'''')}{\Phi(f_{\text{Unif}}(\boldsymbol{x}''''))}
&= \frac{\Phi^*(\boldsymbol{x}''')-\sum_{i\in N''''}(L-x_i''')}{\Phi(f_{\text{Unif}}(\boldsymbol{x}'''))-\sum_{i\in N''''}(L-x_i''')-\Delta}\ge \frac{\Phi^*(\boldsymbol{x}''') }{\Phi(f_{\text{Unif}}(\boldsymbol{x}''')) }\geq \frac{\Phi^*(\boldsymbol{x})}{\Phi(f_{\text{Unif}}(\boldsymbol{x}))}.
\end{align*}
Therefore, there exists $\tilde{\boldsymbol{x}}\in \{0,L\}^n$---namely, $\boldsymbol{x}''''$---with weakly higher  welfare approximation ratio than $\boldsymbol{x}$. Therefore,~(\ref{equation: funif approx}) holds. The theorem statement follows from the fact that the approximation ratio in Lemma \ref{lemma: UFS bound} is constructed by restricting agents to locations $\{0,L\}$.
\end{proof}

\section{Average Mechanism Results.}\label{Section: av mech}

\begin{proposition}\label{prop: average mech satisfies PF}
The average mechanism satisfies PF.
\end{proposition}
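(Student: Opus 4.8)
The plan is to avoid any direct manipulation of the PF inequality by leveraging the auxiliary Lemma~\ref{lem: ufs mono is PF}, which already shows that any mechanism satisfying UFS and monotonicity also satisfies PF. This reduces the claim to verifying two easy conditions for $f_{\text{avg}}$, exactly mirroring how PF was established for the Nash mechanism in the proof of Proposition~\ref{prop: Egal Nash}(iii). First I would record that $f_{\text{avg}}$ is monotonic in the sense of Definition~\ref{def: Mono}: since $f_{\text{avg}}(\boldsymbol{x})=\frac{1}{n}\sum_{i\in N}x_i$ is nondecreasing in each coordinate, any $\boldsymbol{x}\le\boldsymbol{x}'$ (coordinatewise, with strict inequality in some coordinate) immediately gives $f_{\text{avg}}(\boldsymbol{x})\le f_{\text{avg}}(\boldsymbol{x}')$.

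The second step---verifying UFS---is the only part requiring computation, and is where I expect the (mild) work to lie. Suppose $S\subseteq N$ is a set of $s:=|S|$ agents all located at a common point $x_S\in[0,1]$. Then
$$f_{\text{avg}}(\boldsymbol{x})-x_S=\frac{1}{n}\Big(\sum_{i\in N}x_i-nx_S\Big)=\frac{1}{n}\sum_{i\notin S}(x_i-x_S).$$
Since $x_i,x_S\in[0,1]$, each summand satisfies $|x_i-x_S|\le 1$, and there are $n-s$ of them, so $|f_{\text{avg}}(\boldsymbol{x})-x_S|\le\frac{n-s}{n}=1-\frac{s}{n}$. This is precisely the UFS guarantee $d(f_{\text{avg}}(\boldsymbol{x}),x_S)\le 1-\frac{|S|}{n}$ for all agents in $S$. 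Having verified monotonicity and UFS, I would conclude by invoking Lemma~\ref{lem: ufs mono is PF} to deduce that $f_{\text{avg}}$ satisfies PF.

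As a self-contained alternative (should one prefer to avoid the lemma), I would argue PF directly. Fix $S$ and write $a:=\min_{i\in S}x_i$, $b:=\max_{i\in S}x_i$, and $r:=b-a$. The agent of $S$ furthest from $f_{\text{avg}}(\boldsymbol{x})$ is either $a$ or $b$, so it suffices to bound both distances. If $f_{\text{avg}}(\boldsymbol{x})\in[a,b]$, then every agent in $S$ is within distance $r\le\frac{n-s}{n}+r$. If $f_{\text{avg}}(\boldsymbol{x})<a$, then using $\sum_{i\in N}x_i\ge sa$ yields $a-f_{\text{avg}}(\boldsymbol{x})\le a\cdot\frac{n-s}{n}\le\frac{n-s}{n}$, so the distance to $b$ is at most $r+\frac{n-s}{n}$; the case $f_{\text{avg}}(\boldsymbol{x})>b$ is symmetric via $\sum_{i\in N}x_i\le sb+(n-s)$. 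In every case $d(f_{\text{avg}}(\boldsymbol{x}),x_i)\le\frac{n-s}{n}+r$ for all $i\in S$, which is exactly PF.

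In short, there is no serious obstacle here: the only substantive ingredient is the UFS bound, which follows from the crude estimate $|x_i-x_S|\le 1$ on $[0,1]$; everything else is either immediate (monotonicity) or already packaged in Lemma~\ref{lem: ufs mono is PF}. I would present the Lemma-based route as the main argument for brevity.
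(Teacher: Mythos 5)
Your main argument is exactly the paper's proof: the paper also establishes PF for $f_{\text{avg}}$ by noting that it satisfies UFS and monotonicity and then invoking Lemma~\ref{lem: ufs mono is PF}, though the paper asserts these two properties without spelling out the verifications you provide. Both your filled-in details (the bound $|f_{\text{avg}}(\boldsymbol{x})-x_S|\le\tfrac{n-s}{n}$ for UFS, coordinatewise monotonicity of the average) and your self-contained direct case analysis are correct.
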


\begin{proof}
The average mechanism satisfies UFS and monotonicity. By Lemma~\ref{lem: ufs mono is PF}, it also satisfies PF.
\end{proof}

\begin{proposition}\label{prop: avg mech cost approx}
The average mechanism has a total cost approximation ratio of $2-\frac{2}{n}$.
\end{proposition}

\begin{proof} Let $f_{avg}$ denote the average mechanism. We show that for any location profile $\boldsymbol{x}\in [0,L]^n$ and some $k+1\geq \ceil{\frac{n+1}{2}}$, there exists some location profile $\tilde{\boldsymbol{x}}=(0,\dots,0,\tilde{x}_{k+1}\dots,\tilde{x}_n)$ such that
$$\frac{\Psi(f_{avg}(\tilde{\boldsymbol{x}}))}{\Psi^*(\tilde{\boldsymbol{x}})}\geq \frac{\Psi(f_{avg}(\boldsymbol{x}))}{\Psi^*(\boldsymbol{x})}.$$

Similar to the proof of Theorem~\ref{thm: UFS bound}, we order the agent labels such that $x_1\leq \dots \leq x_n$, let $i=med$ denote the median agent, and suppose without loss of generality that under $\boldsymbol{x}$, $x_{\ceil{\frac{n+1}{2}}}<f_{avg}(\boldsymbol{x})$.

First, consider the modified profile $\boldsymbol{x}'$ such that $x'_i=x_{med}$ for all $i\in S:=\{i:x_i<x_{med}\}$ and $x'_i=x_i$ for all $i\notin S$. In this profile, the median facility location does not change, and the average facility location moves to the right. The total cost corresponding to both facility locations decreases by $\sum_{i\in S}(x_{med}-x_i)$ from the agent movements. Also, as there are strictly more agents left of $f_{avg}(\boldsymbol{x})$ under $\boldsymbol{x}$, the total cost corresponding to the average mechanism increases from the facility moving to the right. We denote this change of total cost as $\Delta>0$. We therefore have
$$\frac{\Psi(f_{avg}(\boldsymbol{x}'))}{\Psi^*(\boldsymbol{x}')}=\frac{\Psi(f_{avg}(\boldsymbol{x}))-\sum_{i\in S}(x_{med}-x_i)+\Delta}{\Psi^*(\boldsymbol{x})-\sum_{i\in S}(x_{med}-x_i)}\geq \frac{\Psi( f_{avg}(\boldsymbol{x}))}{\Psi^*(\boldsymbol{x})}.$$

Now consider the modified profile $\boldsymbol{x}''$ where $x''_i=x'_{med}$ for all $i\in S:=\{i:x'_{med}<x'_i\leq f_{avg}(\boldsymbol{x}')\}$ and $x''_i=x_i'$ for all $i\notin S$. Again, the median facility location does not change, but now the average facility location moves to the left. Due to the agent movements, the total cost corresponding to the median mechanism decreases by $\sum_{i\in S}(x'_i-x'_{med})$, and the total cost corresponding to the average mechanism increases by $\sum_{i\in S}(x'_i-x'_{med})$. Also, the total cost corresponding to the average mechanism decreases by strictly less than $\sum_{i\in S}(x'_i-x'_{med})$ from the average facility location moving to the left by $\frac{1}{n}\sum_{i\in S}(x'_i-x'_{med})$. We therefore have
$$\frac{\Psi(f_{avg}(\boldsymbol{x}''))}{\Psi^*(\boldsymbol{x}'')}\geq\frac{\Psi(f_{avg}(\boldsymbol{x}'))}{\Psi^*(\boldsymbol{x}')-\sum_{i\in S}(x'_i-x'_{med})}\geq \frac{\Psi(f_{avg}(\boldsymbol{x}'))}{\Psi^*(\boldsymbol{x}')} \geq \frac{\Psi(f_{avg}(\boldsymbol{x}))}{\Psi^*(\boldsymbol{x})}.$$
Finally, we obtain $\tilde{\boldsymbol{x}}$ from $\boldsymbol{x}''$ by shifting all agents to the left by a distance of $x''_{med}$, leading to
$$\frac{\Psi(f_{avg}(\tilde{\boldsymbol{x}}))}{\Psi^*(\tilde{\boldsymbol{x}})}=\frac{\Psi(f_{avg}(\boldsymbol{x}''))}{\Psi^*(\boldsymbol{x}'')}\geq \frac{\Psi(f_{avg}(\boldsymbol{x}))}{\Psi^*(\boldsymbol{x})}.$$
Therefore, for some $k+1\geq \ceil{\frac{n+1}{2}}$, there exists $\tilde{\boldsymbol{x}}=(0,\dots,0,\tilde{x}_{k+1}\dots,\tilde{x}_n)$ with a weakly higher  total cost approximation ratio than $\boldsymbol{x}$. Now under $\tilde{\boldsymbol{x}}$, the median facility location is $0$, and thus the optimal total cost is $\sum^n_{i=k+1}\tilde{x}_i$. The average facility location is $\frac{1}{n}\sum^n_{i=k+1}\tilde{x}_i$, and thus the corresponding total cost is $\frac{k}{n}\sum^n_{i=k+1}\tilde{x}_i+\sum^n_{i=k+1}\tilde{x}_i-\frac{n-k}{n}\sum^n_{i=k+1}\tilde{x}_i=\frac{2k}{n}\sum^n_{i=k+1}\tilde{x}_i$. The total cost  approximation ratio of the theorem statement is obtained by setting $k=n-1$ and dividing the cost terms.
\end{proof}

\begin{proposition}
The average mechanism achieves the welfare  approximation ratio in Lemma~\ref{lemma: UFS bound}.
\end{proposition}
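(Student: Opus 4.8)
The plan is to establish the two inequalities that pin the approximation ratio of $f_{\text{avg}}$ to the value in Lemma~\ref{lemma: UFS bound}. The lower bound is immediate: by Proposition~\ref{prop: average mech satisfies PF} the average mechanism satisfies PF and hence UFS, so Lemma~\ref{lemma: UFS bound} already guarantees that its welfare approximation is at least the quantity in~(\ref{eq: lemma UFS lower bound}). It therefore only remains to prove the matching upper bound, namely that no location profile forces the average mechanism to do strictly worse than this.

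My strategy for the upper bound is to reduce to Boolean profiles, exactly as in the proof of Theorem~\ref{thm: UFS bound}. The key observation is that on any profile $\boldsymbol{x}\in\{0,1\}^n$ with $k$ agents at $1$, the average mechanism places the facility at $f_{\text{avg}}(\boldsymbol{x})=k/n$, which is precisely the location used in the derivation of Lemma~\ref{lemma: UFS bound} and which coincides with the Uniform Phantom location. Consequently, on Boolean profiles the average mechanism and the Uniform Phantom mechanism are identical, and a direct computation gives $\Phi^*(\boldsymbol{x})=n-k$ and $\Phi(f_{\text{avg}}(\boldsymbol{x}))=\tfrac{k^2+(n-k)^2}{n}$, so the worst-case ratio of $f_{\text{avg}}$ over $\{0,1\}^n$ equals $\max_{0\le k\le n/2}\frac{n(n-k)}{k^2+(n-k)^2}$, the bound of Lemma~\ref{lemma: UFS bound}. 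Thus it suffices to show that the worst case of $\Phi^*(\boldsymbol{x})/\Phi(f_{\text{avg}}(\boldsymbol{x}))$ over $[0,1]^n$ is attained at a Boolean profile, i.e. that $\max_{\boldsymbol{x}\in[0,1]^n}\Phi^*(\boldsymbol{x})/\Phi(f_{\text{avg}}(\boldsymbol{x}))=\max_{\boldsymbol{x}\in\{0,1\}^n}\Phi^*(\boldsymbol{x})/\Phi(f_{\text{avg}}(\boldsymbol{x}))$.

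To prove this reduction I would mimic the sequence of location-shifting operations used in Theorem~\ref{thm: UFS bound}: after reflecting (which preserves both $\Phi^*$ and $\Phi(f_{\text{avg}}(\cdot))$ by symmetry) so that $f_{\text{avg}}(\boldsymbol{x})\ge x_{med}$ and ordering $x_1\le\cdots\le x_n$, I would push agents toward $0$ and $1$ in stages, at each stage verifying that the ratio weakly increases by comparing how $\Phi^*$ (controlled by the median) and $\Phi(f_{\text{avg}}(\cdot))$ (controlled by the mean) change, using the elementary fact that $(x-a)/(y-a)\ge x/y$ for $a\ge 0$ and $0<y\le x$. The main obstacle, and the genuine difference from Theorem~\ref{thm: UFS bound}, is that the facility location of $f_{\text{avg}}$ is the mean and therefore moves whenever agents are relocated, whereas in Theorem~\ref{thm: UFS bound} the Uniform Phantom location stayed fixed under the analogous moves. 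I would therefore organize the reduction around mean-preserving rearrangements, which hold the facility (and hence the denominator $\Phi(f_{\text{avg}}(\cdot))$) fixed while spreading mass toward the endpoints and weakly lowering the optimal welfare, together with a final controlled mean-shifting step. The delicate bookkeeping is to confirm that each step moves numerator and denominator so that the ratio never decreases, in particular that any displacement of the mean decreases $\Phi(f_{\text{avg}}(\cdot))$ by at least as much as it decreases $\Phi^*$. Once this reduction is in place, the coincidence of $f_{\text{avg}}$ and $f_{\text{Unif}}$ on $\{0,1\}^n$ noted above closes the argument.
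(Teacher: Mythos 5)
Your overall strategy coincides with the paper's: get the lower bound from UFS via Lemma~\ref{lemma: UFS bound} (using Proposition~\ref{prop: average mech satisfies PF}), and get the matching upper bound by showing that the worst case of $\Phi^*(\boldsymbol{x})/\Phi(f_{\text{avg}}(\boldsymbol{x}))$ over $[0,1]^n$ is attained on Boolean profiles, where $f_{\text{avg}}$ coincides with the Uniform Phantom mechanism and the ratio equals the expression in the lemma. However, your proposal stops exactly where the work lies. You correctly identify the central obstacle---that, unlike in Theorem~\ref{thm: UFS bound}, every relocation of agents moves the facility because it is the mean---but you then defer the required verification as ``delicate bookkeeping'' to be confirmed later. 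That verification is not a routine detail; it is the substance of the paper's proof. The paper's two moves are \emph{not} mean-preserving: first, all agents weakly right of $f_{\text{avg}}(\boldsymbol{x})$ are pushed to $1$, and the induced shift $\Delta$ of the mean changes the denominator by an extra $-\left((n-|S|)-|S|\right)\Delta$, which has the favorable sign only because the facility lies right of the median so that $|S|<n-|S|$; second, the median agent and all agents strictly left of the mean are pushed to $0$, and showing the ratio does not decrease there requires proving the explicit inequality $\Delta'_{opt}\ge \Delta'_{avg}$, which the paper does by a computation split into even and odd $n$, using $x_i>x_{med}$ for $i\in S''$ and the counting identities $|S''|=\tfrac{n}{2}-|S|$ or $\tfrac{n-1}{2}-|S|$. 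Nothing in your sketch substitutes for this computation.

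Moreover, the specific device you propose---mean-preserving rearrangements followed by a ``final controlled mean-shifting step''---is not the paper's construction and is itself incomplete in the place that matters. A mean-preserving spread can terminate in a Boolean profile only when $n$ times the mean is an integer; generically you are left with one agent at a fractional location, and rounding that agent to an endpoint is precisely a mean-shifting move. At that point the difficulty you named reappears in full: the facility moves, the welfare at the mean changes by an amount depending on how many agents sit on each side of it, and whether the ratio weakly increases is exactly the kind of inequality the paper establishes with its $\Delta'$ comparison. So the proposal is a correct plan with the right target and the right obstacle identified, but the crux---proving that every step that displaces the mean still weakly increases the ratio---is missing, and without it the upper bound is unproven.
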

\begin{proof}
Let $f_{\text{avg}}$ denote the average mechanism. We prove that for any location profile $\boldsymbol{x}\in [0,L]^n$ there exists some profile $\tilde{\boldsymbol{x}}\in \{0,L\}^n$ such that
\begin{align}\label{equation: favg approx}
\frac{\Phi^*(\tilde{\boldsymbol{x}})}{\Phi(f_{\text{avg}}(\tilde{\boldsymbol{x}}))}  &\ge \frac{\Phi^*(\boldsymbol{x})}{\Phi(f_{\text{avg}}(\boldsymbol{x}))}.
\end{align}
This implies that
\[\max_{\boldsymbol{x}\in [0,L]^n}\frac{\Phi^*(\boldsymbol{x})}{\Phi(f_{\text{avg}}(\boldsymbol{x}))}=\max_{\boldsymbol{x}\in \{0,L\}^n}\frac{\Phi^*(\boldsymbol{x})}{\Phi(f_{\text{avg}}(\boldsymbol{x}))}.\]

Let the agent labels be ordered such that $x_1\le \ldots \le x_n$; let $i={med}$ denote the median agent.  Suppose without loss of generality that for odd $n$, we have $\boldsymbol{x} \ : \ x_{med}< f_{\text{avg}}(\boldsymbol{x})$ and for even $n$, we have $\boldsymbol{x} \ : \ x_{\frac{n}{2}+1}< f_{\text{avg}}(\boldsymbol{x})$. This is because~(\ref{equation: favg approx}) is trivially satisfied for odd $n$ if $x_{med}=f_{\text{avg}}(\boldsymbol{x})$, and it is satisfied for even $n$ if $x_{med}\leq f_{\text{avg}}(\boldsymbol{x})\leq x_{\frac{n}{2}+1}$.

First, consider the modified profile $\boldsymbol{x}'$ such that $x_i'=L$ for all $i\in S:=\{i:x_i\geq f_{\text{avg}}(\boldsymbol{x})\}$ and $x_i'=x_i$ for all $i\notin S$. In this modified profile, the {welfare-optimal} (median) location does not change, and the facility location under $f_{\text{avg}}$ moves towards the agents in $S$. Denoting this change in facility location as $\Delta>0$ and noting that $|S|<n-|S|$ due to the facility being located right of the {welfare-optimal} interval/median, the welfare under $f_{\text{avg}}$ decreases by $((n-|S|)-|S|)\Delta >0$ from the facility moving towards the $|S|$ agents at $L$ and away from the remaining $n-|S|$ agents. Due to the agent movements, the optimal welfare $\Phi^*(\boldsymbol{x}')$ and the welfare under $f$, $\Phi(f_{\text{avg}}(\boldsymbol{x}'))$ both decrease by the same amount ---namely, $\sum_{i\in S}(L-x_i)$--- relative to $\Phi^*(\boldsymbol{x})$ and $\Phi(f_{\text{avg}}(\boldsymbol{x}))$. We conclude that
\begin{align*}
\frac{\Phi^*(\boldsymbol{x}')}{\Phi(f_{\text{avg}}(\boldsymbol{x}'))}
&= \frac{\Phi^*(\boldsymbol{x})-\sum_{i\in S} (L-x_i)}{\Phi(f_{\text{avg}}(\boldsymbol{x}))-\sum_{i\in S} (L-x_i)-(n-2|S|)\Delta}\ge \frac{\Phi^*(\boldsymbol{x})}{\Phi(f_{\text{avg}}(\boldsymbol{x}))}.
\end{align*}
Now consider the modified profile $\boldsymbol{x}''$ such that $x_i''=0$ for all $i\in S':=\{i: x'_i<x_{med}\}$, for all $i\in S'':=\{i: x_{med}<x_i<f_{\text{avg}}(\boldsymbol{x}')\}$ and for $i=med$, and $x_i''=x_i'$ otherwise. The change in optimal welfare, which we will denote as $\Delta'_{opt}$, can be quantified by observing the agents' movements sequentially. The optimal welfare decreases by $\sum_{i\in S'}x_i$ from the agents of $S'$ moving to $0$. Next, the median agent (and {welfare-optimal} facility location) moving towards the $S'$ agents at $0$ causes the optimal welfare to decrease by $x_{med}\mathbb{I}_{n \text{ even}}$. Lastly, the remaining agents of $S''$ move towards the median at $0$, increasing the optimal welfare by $\sum_{i\in S''}x_i$. We therefore have
\begin{align}\label{equation: delta median}
\Delta'_{opt}=-\sum_{i\in S'}x_i-x_{med}\mathbb{I}_{n \text{ even}}+\sum_{i\in S''}x_i.
\end{align}
We next quantify the change in welfare corresponding to $f_{\text{avg}}$, which we denote as $\Delta'_{\text{avg}}$. The welfare decreases by $\sum_{i\in S'}x_i+x_{med}+\sum_{i\in S''}x_i$ from the agent movements, and increases by $(n-2|S|)\frac{1}{n}\sum_{i\in S'\cup \{med\}\cup S''}x_i$ from the facility moving towards the $n-|S|$ agents at $0$ and away from the $|S|$ agents at $L$. We therefore have
\begin{align}\label{equation: delta avg}
\Delta'_{\text{avg}}=-\sum_{i\in S'}x_i-x_{med}-\sum_{i\in S''}x_i+(n-2|S|)\frac{1}{n}\sum_{i\in S'\cup \{med\}\cup S''}x_i.
\end{align}
We now show that $\Delta'_{opt}>\Delta'_{avg}$ by subtracting Equations~(\ref{equation: delta median}) and~(\ref{equation: delta avg}). We first note that $|S''|=\frac{n}{2}-|S|$ for even $n$ and $|S''|=\frac{n-1}{2}-|S|$ for odd $n$. If $n$ is even, we have
\begin{align*}
\Delta'_{opt}-\Delta'_{avg}&=2\sum_{i\in S''}x_i-\frac{n-2|S|}{n}\left(\sum_{i\in S'\cup \{med\}}x_i+\sum_{i\in S''}x_i\right)\\
&\geq 2\sum_{i\in S''}x_i-\frac{2|S''|}{n}\left(\frac{n}{2}x_{med}+\sum_{i\in S''}x_i\right)\\
&=2\sum_{i\in S''}x_i-|S''|x_{med}-\frac{2|S''|}{n}\sum_{i\in S''}x_i\\
&=\left(\sum_{i\in S''}x_i-|S''|x_{med}\right)+\left(\sum_{i\in S''}x_i-\frac{2|S''|}{n}\sum_{i\in S''}x_i\right)\\
&\geq 0,
\end{align*}
where the first inequality is due to $x_{med}>x_i$ for all $i\in S'$, and we have $\sum_{i\in S''}x_i-|S''|x_{med}\geq 0$ due to $x_i>x_{med}$ for all $i\in S''$. Now if $n$ is odd, we have
\begin{align*}
\Delta'_{opt}-\Delta'_{avg}&=2\sum_{i\in S''}x_i+x_{med}-\frac{n-2|S|}{n}\left(\sum_{i\in S'}x_i+x_{med}+\sum_{i\in S''}x_i\right)\\
&\geq 2\sum_{i\in S''}x_i+x_{med}-\frac{2|S''|+1}{n}\left(\frac{n-1}{2}x_{med}+x_{med}+\sum_{i\in S''}x_i\right)\\
&=\left(\sum_{i\in S''}x_i-\frac{(2|S''|+1)(n-1)}{2n}x_{med}\right)+\left(x_{med}+\sum_{i\in S''}x_i\right)\left(1-\frac{2|S''|+1}{n}\right)\\
&=\left(\sum_{i\in S''}x_i-|S''|x_{med}-\frac{|S|}{n}x_{med}\right)+\left(x_{med}+\sum_{i\in S''}x_i\right)\left(\frac{2|S|}{n}\right)\\
&\geq 0.
\end{align*}
We have shown that $\Delta'_{opt}>\Delta'_{avg}$, meaning that we have 
\begin{align*}
\frac{\Phi^*(\boldsymbol{x}'')}{\Phi(f_{\text{avg}}(\boldsymbol{x}''))}
&= \frac{\Phi^*(\boldsymbol{x}')+\Delta'_{opt}}{\Phi(f_{\text{avg}}(\boldsymbol{x}'))+\Delta'_{avg}}\ge \frac{\Phi^*(\boldsymbol{x}') }{\Phi(f_{\text{avg}}(\boldsymbol{x}')) } \ge \frac{\Phi^*(\boldsymbol{x}) }{\Phi(f_{\text{avg}}(\boldsymbol{x})) }. 
\end{align*}
Therefore, there exists $\tilde{\boldsymbol{x}}\in \{0,L\}^n$---namely, $\boldsymbol{x}''$---with weakly higher  welfare approximation ratio than $\boldsymbol{x}$. Therefore,~(\ref{equation: favg approx}) holds. The proposition statement follows from the fact that the approximation ratio in Lemma \ref{lemma: UFS bound} is constructed by restricting agents to locations $\{0,L\}$.
\end{proof}

\section{Other Total Cost Approximation Results}\label{appsec: totalcostapprox}
\begin{proposition}
The Nash mechanism has a total cost approximation ratio of at least $2-\frac{2}{n}$.
\end{proposition}
\begin{proof}
As in the proof of Proposition~\ref{prop: avg mech cost approx}, we obtain the lower bound of $2-\frac{2}{n}$ from the location profile where $n-1$ agents at located at $0$ and $L$ agent is located at $L$. As the Nash mechanism satisfies UFS \cite{LAW21}, it places the facility at $\frac{L}{n}$, leading to the total cost approximation of $2-\frac{2}{n}$. 
\end{proof}
\begin{proposition}
The Nash mechanism has a total cost approximation ratio of at most $\frac{n}{2}$.
\end{proposition}
\begin{proof}
Suppose without loss of generality that the leftmost agent is located at $0$ and the rightmost agent is located at $c$. We first show that the Nash mechanism guarantees that the total cost is at most $\frac{nc}{2}$. The proof is a modification of the Proof of Lemma~7 in \cite{LAW21} for the total cost objective. To prove this, we show that for any such location profile $\boldsymbol{x}$, there exists some profile $\tilde{\boldsymbol{x}}:=(\underbrace{0,\dots,0}_{\floor{\frac{n}{2}}},\underbrace{c,\dots,c}_{\ceil{\frac{n}{2}}})$ such that $\Psi(f_{Nash}(\tilde{\boldsymbol{x}}))\geq \Psi(f_{Nash}(\boldsymbol{x}))$. Let $n-k:=|\{i:x_i > f_{Nash}(\boldsymbol{x})\}|$ and $k:=|\{i:x_i \leq f_{Nash}(\boldsymbol{x})\}|$. For this proof, we also suppose that the agents are ordered such that $x_1\leq \dots \leq x_n$, and without loss of generality that $n-k\geq k$.

First, consider the modified profile $\boldsymbol{x}'$ such that $x'_i=0$ for all $i\in S:=\{i: x_i \leq f_{Nash}(\boldsymbol{x})$ and $x'_i=x_i$ for all $i\notin S$. Here, we have moved all agents initially located on or to the left of the facility to $0$. Let $\Delta$ be the change in the facility location as a result of this transformation. As the Nash mechanism is weakly monotonic, we have $\Delta\leq 0$. The net change in total cost is $\sum_{i\in S}x_i -\Delta ((n-k)-k)\geq 0$.

We now have $\boldsymbol{x}'=(\underbrace{0,\dots,0}_k,x'_{k+1},\dots,x'_n)$. If $k=\floor{\frac{n}{2}}$, the following transformation can be skipped. Otherwise, we suppose that $k<\floor{\frac{n}{2}}$. As shown in the proof of Lemma~7 in \cite{LAW21}, we have $f_{Nash}(\boldsymbol{x}')\geq \frac{x'_{k+1}}{2}$. We shift $x'_{k+1}$ to $0$, and denote $\Delta'\leq 0$ as the change in facility location. The net change in total cost is $[(f_{Nash}(\boldsymbol{x}')-0)-(x'_{k+1}-f_{Nash}(\boldsymbol{x}'))]-\Delta'[(n-k-1)-(k+1)]\geq 0$. The first term is non-negative as $f_{Nash}(\boldsymbol{x}')\geq \frac{x'_{k+1}}{2}$, and the second term is non-negative as $k+1\leq \floor{\frac{n}{2}}$. To form location profile $\boldsymbol{x}''$, we iteratively shift agent locations $x'_{k+1},\dots,x'_{\floor{\frac{n}{2}}}$ to $0$, and the same arguments can be applied to show that the total cost does not decrease.

Finally, we transform $\boldsymbol{x}''=(\underbrace{0,\dots,0}_{\lfloor \frac{n}{2}\rfloor},x''_{\lceil \frac{n}{2}\rceil},\dots,x''_n)$ to $\tilde{\boldsymbol{x}}:=(\underbrace{0,\dots,0}_{\floor{\frac{n}{2}}},\underbrace{c,\dots,c}_{\ceil{\frac{n}{2}}})$ by shifting the agents at $x''_{\lceil \frac{n}{2}\rceil},\dots,x''_n$ to $c$. Let $\Delta''\geq 0$ be the change in facility location. By Lemma~4 of \cite{LAW21}, we have $\Delta'' \leq \max_{i\in \{\lceil \frac{n}{2}\rceil,\dots,n\}}|c-x''_i|$, and hence the change in total cost is $\sum_{i=\lceil \frac{n}{2}\rceil}^n(c-x''_i)-\Delta'' (\lceil \frac{n}{2}\rceil - \lfloor \frac{n}{2}\rfloor)\geq 0$.

By Lemma~5 of \cite{LAW21}, $f_{Nash}(\tilde{\boldsymbol{x}})=\frac{c}{2}$ if $n$ is even and $f_{Nash}(\tilde{\boldsymbol{x}})=\frac{c}{2}-\frac{c}{2n}+\frac{1}{n}>\frac{c}{2}$ if $n$ is odd. Therefore the Nash mechanism guarantees at most $\frac{nc}{2}$ total cost when the leftmost agent is located at $0$ and the rightmost agent is located at $c$. Now under such a profile, the optimal total cost is at least $c$, from placing the facility with $n-1$ agents at $0$, and the remaining agent at $c$. Dividing these terms gives us an upper bound of $\frac{n}{2}$ for the Nash mechanism's total cost approximation ratio.
\end{proof}

\begin{proposition}
The midpoint mechanism has a total cost approximation ratio of $\frac{n}{2}$.
\end{proposition}
\begin{proof}
Suppose without loss of generality that the leftmost agent is located at $0$, and the rightmost agent is located at $c$. Clearly, the optimal total cost must be at least $c$. An upper bound for the total cost corresponding to the midpoint mechanism is $\frac{nc}{2}$, achieved when the agents are only located  at $0$ or $c$. Dividing these terms gives us an upper bound on the total cost approximation ratio of $\frac{n}{2}$. This is matched by the location profile where $n-1$ agents at located at $0$ and $1$ agent is located at $c$.
\end{proof}
% Appendix here
% Options are (1) APPENDIX (with or without general title) or 
%             (2) APPENDICES (if it has more than one unrelated sections)
% Outcomment the appropriate case if necessary
%
% \begin{APPENDIX}{<Title of the Appendix>}
% \end{APPENDIX}
%
%   or 
%
% \begin{APPENDICES}
% \section{<Title of Section A>}
% \section{<Title of Section B>}
% etc
% \end{APPENDICES}

% Acknowledgments here
\section*{Acknowledgments.}
% Enter the text of acknowledgments here
The authors thank Rupert Freeman, Arindam Pal, Pedro Pablo P{\'e}rez Velasco, Herv{\'e} Moulin, and Mashbat Suzuki for valuable comments.  
% References here (outcomment the appropriate case) 

% CASE 1: BiBTeX used to constantly update the references 
%   (while the paper is being written).

% CASE 2: BiBTeX used to generate mypaper.bbl (to be further fine tuned)
%\input{mypaper.bbl} % outcomment this line in Case 2

\end{document}